\newtheorem{proposition}{Proposition}[section]
\theoremstyle{definition}
\newtheorem{definition}{Definition}[section]
\begin{document}
    
    \title{Linear cross-entropy certification of quantum computational advantage \\ in Gaussian Boson Sampling}
    
    \author{Javier Martínez-Cifuentes}
	\affiliation{%
		Department of Engineering Physics, \'Ecole Polytechnique de Montréal, Montréal, QC, H3T 1J4, Canada
	}
    \author{Hubert de Guise}
	\affiliation{%
		Department of Physics, Lakehead University, Thunder Bay, ON, P7B 5E1, Canada
	}%
    \author{Nicol\'as Quesada}
	\affiliation{%
		Department of Engineering Physics, \'Ecole Polytechnique de Montréal, Montréal, QC, H3T 1J4, Canada
	}%
 
    \date{\today}

    \begin{abstract}
        Validation of quantum advantage claims in the context of Gaussian Boson Sampling (GBS) currently relies on providing evidence that the experimental samples genuinely follow their corresponding ground truth, i.e., the theoretical model of the experiment that includes all the possible losses that the experimenters can account for. This approach to verification has an important drawback: it is necessary to assume that the ground truth distributions are computationally hard to sample, that is, that they are sufficiently close to the distribution of the ideal, lossless experiment, for which there is evidence that sampling, either exactly or approximately, is a computationally hard task. This assumption, which cannot be easily confirmed, opens the door to classical algorithms that exploit the noise in the ground truth to efficiently simulate the experiments, thus undermining any quantum advantage claim. In this work, we argue that one can avoid this issue by validating GBS implementations using their corresponding ideal distributions directly. We explain how to use a modified version of the linear cross-entropy, a quantity that we call the LXE score, to find reference values that help us assess how close a given GBS implementation is to its corresponding ideal model.  Finally, we analytically compute the score that would be obtained by a lossless GBS implementation.
    \end{abstract}

    \maketitle
    
    \section{\label{sec:introduction} Introduction} 
    Gaussian Boson Sampling (GBS)~\cite{hamilton2017gaussian,kruse2019detailed} is a model of quantum computation with the possibility to demonstrate quantum computational advantage~\cite{hangleiter2023computational}, i.e., to show that quantum devices are capable of significantly outperforming classical computers at specified computational tasks. Broadly speaking, an ideal GBS experiment consists of sending single-mode squeezed states into a Haar-random lossless interferometer, i.e., a random network of beamsplitters and waveplates. The photon number distribution of the output light is then measured using photon number resolving or threshold detectors~\cite{quesada2018gaussian}. It has been shown that, under some reasonable conjectures~\cite{grier2022complexity,kruse2019detailed, hamilton2017gaussian}, it is computationally hard to sample exactly or approximately from the probability distribution of the resulting experimental outcomes (i.e., the distribution of the strings of non-negative integers resulting from the measurement). Here, approximate sampling is understood as sampling from a probability distribution that is close in total variation distance to the ideal GBS distribution~\cite{aaronson2011computational}. The hardness of the task of sampling is what gives this model of computation the possibility of demonstrating quantum advantage.

    Real-world implementations of GBS, such as those reported in Refs.~\cite{zhong2020quantum, zhong2021phase, madsen2022quantum, deng2023gaussian}, suffer from losses and imperfections in preparation, transmission and detection of light. It is therefore expected that the theoretical probability distributions associated to these experiments, commonly referred to as \textit{ground truth distributions}, will differ from the \textit{ideal distribution} (which does not include any type losses in its definition). However, it is \textit{assumed} that the ground truth distributions are close in total variation distance to the ideal distribution, sufficiently close as to be able to invoke the hardness of approximate sampling in order to claim that it is computationally hard to classically reproduce the outcomes of the experiment. This assumption allows the experimenters to claim that their implementations achieve a quantum advantage.

    Presuming that the ground truth distributions are hard to sample, a given GBS implementation must be followed by a validation of its outcomes~\cite{hangleiter2023computational}, i.e., a verification that the experiment is operating correctly. This validation amounts to providing evidence that the experimental samples closely follow the ground truth distribution. Since the direct estimation of the total variation distance between the real distribution of the samples and the ground truth requires exponentially many runs of the experiment~\cite{hangleiter2019sample, hangleiter2023computational}, the validation of GBS implementations usually relies on other \textit{sample-efficient} statistical measures (i.e., measures requiring polynomially many samples~\cite{hangleiter2023computational}) that assess how correlated are the experimental outcomes to the ground truth distribution. 
    
    Among the current, most widely used techniques we can find Bayesian hypothesis testing~\cite{bentivegna2015bayesian}, which consists in demonstrating how likely is the ground truth to explain the experimental samples relative to an adversarial classical model. Another strategy is to use the heavy output generation test and other cross-entropy measures~\cite{zhong2020quantum, hangleiter2023computational}, which intend to verify if classically generated samples are able to produce ``heavier outputs'' in the ground truth distribution, i.e., events with higher ground truth probability, than the experimental samples.
    A further technique is to compare the correlations between modes present in the photon number resolving or threshold detection samples with those predicted by the ground truth model, and by adversarial models or samplers~\cite{zhong2021phase, phillips2019benchmarking,cardin2022photon}. A fourth validation method is based on binning the detectors in groups of certain sizes, and then analytically computing the corresponding grouped probability distribution~\cite{drummond2022simulating, bressanini2023gaussian}. This procedure is done for the ground truth and for adversarial models. These analytical distributions are then compared with those obtained using the experimental samples.

    All of the previously mentioned techniques have their own limitations. For instance, some rely on the computation of probabilities of individual samples, a task whose cost grows exponentially with the number of photons or clicks detected in different output modes of the interferometer, thus making the techniques \textit{computationally inefficient} validation strategies. Moreover, one can find situations in which these techniques reach contradicting conclusions about the validity of a set of experimental samples~\cite{martinez2023classical}. In addition, when verifying the experimental outcomes against adversarial samplers that do not have a well-defined probability distribution associated to them, it is not even possible to perform some of these tests (e.g., Bayesian testing)~\cite{villalonga2021efficient}. These issues suggest that the current widely used GBS validation methods are not sufficient to readily tell if the experimental samples unambiguously follow from the ground truth distribution and, consequently, they cannot readily validate a quantum advantage claim. This also opens the door to classical algorithms that generate samples that outperform the experimental outcomes at these validation tasks~\cite{martinez2023classical, villalonga2021efficient}, or classical strategies that ``spoof'' some of these validation techniques~\cite{oh2023spoofing}.

    Perhaps more importantly, the degree of confidence in these methods for verifying quantum advantage claims is strongly tied to the confidence in the hardness of the task of sampling from the ground truth distribution. However, at this time, there is not a reliable statistical measure that allows us to determine whether any ground truth distribution is sufficiently close to the ideal model of its corresponding experiment. This fact may allow classical algorithms to exploit the losses in the experiments in order to generate samples that, by using the same techniques for verifying GBS implementations, are found to follow the corresponding ground truth distributions more closely~\cite{oh2024classical}.

    The current state of affairs of GBS validation differs significantly from that of the field of Random Circuit Sampling (RCS) verification. It has been identified that the linear cross-entropy benchmark (XEB)~\cite{hangleiter2023computational, arute2019quantum, boixo2018characterizing}, a quantity constructed from the linear cross-entropy between the outcomes of a random circuit and its corresponding ideal distribution, serves as a witness of quantum advantage in RCS implementations. The reason behind the success of this metric, despite having its own limitations~\cite{gao2024limitations}, is that random circuits are extremely sensitive to noise; the presence of errors in experimental implementations leads to samples whose real distributions are uncorrelated with the ideal distribution~\cite{boixo2018characterizing, hangleiter2023computational,dalzell2024random}, and, moreover, exponentially close to the uniform distribution in the sample space. The XEB for samples following the uniform distribution identically vanish. When the experimental samples follow directly from the ideal distribution, the XEB is equal to 1. Furthermore, obtaining an XEB different enough from zero is considered to be a computationally hard task~\cite{hangleiter2023computational}. Quantum advantage is thus verified when this benchmark is sufficiently different from zero.

    GBS experiments are not as sensitive to loss as RCS implementations are to gate errors, and this hinders the definition of a validation metric as decisive as the XEB for the verification of GBS quantum advantage claims. Nevertheless, there are two important features of RCS validation that could be used to overcome several of the difficulties surrounding the verification of GBS experiments. The first of them is validating the experimental outcomes directly against their corresponding  ideal model. This relieves the verification process from the assumption that it is computationally hard to sample from the ground truth distribution. The second one is the determination of \textit{reference values} (of a given statistical measure) associated to the ideal distribution, as well as to other probability distributions that may differ from that of an ideal implementation. These reference values can be used to assess how far the experimental samples are from the ideal distribution. 
    
    In this work, we take the XEB as a blueprint for defining a figure of merit for GBS validation having the first of the two previously mentioned features. We construct this quantity, which we call the \textit{linear cross-entropy score} (LXE score), from a normalized version of the linear cross-entropy between the ideal GBS distribution and a test probability distribution (corresponding to the actual distribution of the experimental outcomes, or the distributions of adversarial models or samplers), averaged over the Haar measure of the unitary group, and evaluated at the limit of a large number of modes. We include the average over Haar-random unitaries in the definition of the score in order to study the behavior of the normalized linear cross-entropy for a typical implementation of GBS. On the other hand, we focus on setups with a large number of modes because this is the regime in which the hardness of the task of sampling is more manifest. 

    In addition, we propose a validation strategy that exploits the second of the aforementioned features. The technique consists of computing the LXE score for the most commonly used classical adversarial samplers or models, as well as for the ideal GBS distribution, and use them as reference values. We then would compare these reference values to the estimated score of the experimental outcomes. In this way, we can assess how close is a given GBS implementation to its corresponding ideal model, and to its most challenging adversaries.
    
    After discussing the details of the definition of the LXE score, we set off to determine the first two reference values to be used in the validation strategy. To do this, we focus on implementations using photon number resolving detectors. The first of these reference values corresponds to a model that leads to a uniform probability distribution for each sector of the total number of detected photons, $N$, in the experimental samples. This value follows directly from the normalization of the linear cross-entropy and is equal to 1. The second reference value, the \textit{ideal score}, corresponds to a GBS implementation using single-mode squeezed states as input of the first $R$ modes of a lossless interferometer, with the remaining $M-R$ modes receiving the vacuum state. We consider two versions of this GBS setup: the first uses input squeezed states with the same squeezing parameter, while the second uses squeezed states with different squeezing parameters.  

    By expressing the linear cross-entropy between two GBS distributions as an integral over several real parameters (a technique that makes its computation independent from the use of measurement outputs), and by using the Weingarten Calculus~\cite{collins2022weingarten} in order to compute the average over Haar-random unitaries, we were able to find an analytical expression for the ideal score as a function of $N$ and $R$. To the extent of our knowledge, this approach to the analysis of cross-entropy measures in the context of GBS has not been employed before.

    We find that part of the dependence of the ideal score on the parameter $R$ can be expressed as a polynomial of degree $2N$, whose coefficients can be computed by counting the number of undirected graphs with a certain number of connected components. For setups that use input states with different squeezing parameters, the coefficients also depend on the lengths of the connected components of the graph. This result is akin to the conclusions found by Ehrenberg et al.~\cite{ehrenberg2023transition, ehrenberg2024second} in their recent study on anticoncentration in GBS. In their work, they computed the first and second moments of the output GBS distribution in the photon-collision-free limit, in which nearly all detection patterns have at most one photon in each mode. In this regime, the output distribution can be approximated by the modulus squared of hafnians of Gaussian random matrices~\cite{deshpande2022quantum, arkhipov2012bosonic, grier2022complexity}. The authors developed a graph-theoretical method for computing the moments of this approximate distribution, and found that the second moment can be expressed as a polynomial of degree $2N$ in $R$, whose coefficients are determined by counting the number of graphs with a given number of connected components. Moreover, they relate the first and second moments to the LXE score of an ideal model that uses input squeezed states with the same squeezing parameter~\cite{ehrenberg2024second}.
    Even though their expression for the ideal score is very similar to our findings, the definition and origin of the graphs involved in the computation of our results differ from those used in Ref.~\cite{ehrenberg2023transition, ehrenberg2024second}. At this time, it is necessary to make a more thorough analysis of the computation of the coefficients in both cases in order to find a clear relation between these two results.

    Focusing on the case $R=M$, we find that the ideal score has a particularly simple expression in terms of $N$, and it is the same whether we use input squeezed states with the same squeezing parameter or not. We compare this analytical expression with numerical estimations of the ideal score for setups that are not in limit of $M\rightarrow\infty$. This comparison allows us to investigate how good our results approximate the estimated ideal score of real-world GBS implementations. Additionally, we also make numerical estimations of the score for a simple GBS model with transmission losses, which allows us to study the behavior of the LXE score in the presence of noise.

    It is worth mentioning that determining probability amplitudes for pure Gaussian states requires the computation of hafnians of matrices with half the size of those used in the computation of probabilities of mixed states. This implies that the estimation of the LXE score can be done for a range of detected photons approximately twice as large as that used in the validation of recent GBS experiments~\cite{zhong2020quantum, zhong2021phase, madsen2022quantum, deng2023gaussian}, thus representing a significant improvement in the validation of these implementations.

    We consider that the LXE score will be of significant importance and utility to the field of GBS verification. Moreover, our computation of the ideal score sets the stage for an alternative approach to the validation of GBS implementations, one where we assert that verifying GBS should require the evaluation of probabilities corresponding to unitary models, not to mixed states ground truths.
    
    This paper is organized as follows: In Sec.~\ref{sec:gbs_setup}
    we describe the GBS setup that we will consider throughout the article, and we will introduce the concept of \textit{model} of a GBS implementation. In Sec.~\ref{sec:lxe_gbs} we will discuss how to use the LXE score to validate GBS experiments. Sec.~\ref{sec:lxe_score_sqz} is devoted to the detailed computation of the ideal LXE score for setups that use input squeezed states with the same squeezing parameter, while Sec.~\ref{sec:different_squeezing} shows how to compute the score for models that use different squeezing parameters. In Sec.~\ref{sec:no_vacuum} we will bring attention to some interesting features of the ideal score when $R=M$. Here, we also present the numerical estimations of the score for GBS setups that are not in the limit of $M\rightarrow\infty$, and for some simple models that include transmission losses. Finally, we will conclude in Sec.~\ref{sec:discussion}.

    \section{\label{sec:gbs_setup} GBS setup}

    Consider a GBS setup (see Fig.~\ref{fig:gbs_setup}) where $M$ modes are prepared in single-mode non-displaced Gaussian states. The nature of these states need not be specified at this point; they could be squeezed, thermal, squashed, etc. The initial $M$-mode state is sent through an arbitrary linear, lossless interferometer, which is mathematically described by a $M\times M$, \textit{Haar-random} unitary matrix $\bm{U}$. The output light of the interferometer is then measured using photon number resolving detectors, which can determine the number of output photons in each mode; i.e., they measure the output state of the system in the Fock basis.
    
    The output state of the interferometer, $\hat{\rho}$, which is also and $M$-mode non-displaced Gaussian state, is completely described by a $2M\times 2M$, complex Husimi covariance matrix $\bm{\Sigma}$~\cite{kruse2019detailed, serafini2017quantum}, whose entries are computed as $\Sigma_{j,k}=\frac{1}{2}\langle\{\hat{\xi}_j,\hat{\xi}_k^\dagger\}\rangle+\frac{1}{2}\delta_{j,k}$. The $\{\hat{\xi}_k\}$ are the components of the operator vector $\bm{\hat{\xi}}=(\hat{a}_1,\dots,\hat{a}_M,\hat{a}_1^\dagger,\dots,\hat{a}_M^\dagger)$, where $\hat{a}_k^{\dagger}$ and $\hat{a}_k$ are the bosonic creation and annihilation operators of the output modes, which satisfy the canonical commutation relations $[\hat{a}_j,\hat{a}_k]=[\hat{a}_j^\dagger,\hat{a}_k^\dagger]=0$ and $[\hat{a}_j,\hat{a}_k^\dagger]=\delta_{j,k}$. $\{\hat{a}, \hat{b}\}:=\hat{a}\hat{b}+\hat{b}\hat{a}$ stands for the anticommutator of operators $\hat{a}$ and $\hat{b}$, $\langle\hat{a}\rangle=\mathrm{Tr}(\hat{a}\hat{\rho})$, and $\delta_{j,k}$ is the Kronecker delta.
    
    The result of the photon number resolving measurement, i.e. the detection pattern, is a $M$-string of non-negative integers $\bm{n}=(n_1,\dots,n_M)$, where each $n_k$ represents the number of photons detected at mode $k$ ($n_k=0$ means that no light has been detected). By defining the matrix 
    \begin{equation}
        \bm{A}=\bm{X}\left(\mathbb{I}_{2M}-\bm{\Sigma}^{-1}\right),\quad\bm{X}=\begin{pmatrix}\bm{0}&\mathbb{I}_M\\\mathbb{I}_M&\bm{0}\end{pmatrix}, 
        \label{eq:matrix_model}
    \end{equation}
    where $\mathbb{I}_M$ is the identity matrix of size $M\times M$, we can compute the probability of detecting the outcome $\bm{n}$ as~\cite{hamilton2017gaussian, kruse2019detailed} 
    \begin{equation}
        \Pr(\bm{n}|\bm{A})=\frac{\Pr(\bm{0}|\bm{A})}{\bm{n}!}\mathrm{haf}\left[\bm{A}_{\bm{n}}\right],
        \label{eq:gbs_probability_distribution}
    \end{equation}
    where $\Pr(\bm{0}|\bm{A})=\sqrt{\mathrm{det}\left(\mathbb{I}_{2M}-\bm{X}\bm{A}\right)}$ is the vacuum probability, $\bm{n}!=\prod_{k=1}^Mn_k!$, and
    \begin{equation}
        \mathrm{haf}[\bm{O}]=\frac{1}{2^m m!}\sum_{\sigma\in S_{2m}}\prod_{j=1}^mO_{\sigma(2j-1),\sigma(2j)}
        \label{eq:hafnian_definition}
    \end{equation}
    is the \textit{hafnian} of the symmetric $2m\times 2m$ matrix $\bm{O}$~\cite{barvinok2016combinatorics}. Here, $S_m$ stands for the symmetric group of degree $m$, i.e. the group of all permutations of $m$ objects.
    
    The matrix $\bm{A}_{\bm{n}}$ is constructed by taking the $k$-th and $(k+M)$-th rows and columns of $\bm{A}$ and repeating them $n_k$ times. If $n_k=0$ the corresponding rows and columns are removed. Notice that the size of this matrix is $2N\times 2N$, where $N= \sum_{k=1}^M n_k$ is the \textit{total number of detected photons} in the outcome $\bm{n}$.
    
    Eq.~\eqref{eq:gbs_probability_distribution} will also hold for descriptions of the GBS setup that include Gaussian noise and losses (transmission loss is a good example of this type of operations). In these cases, the output state of the interferometer will remain Gaussian~\cite{serafini2017quantum}, and we can readily define a matrix $\bm{\Sigma}$ or $\bm{A}$ that completely describes the output state. On the other hand, non-Gaussian noise will lead to output states that cannot be completely defined by an $\bm{A}$ matrix.
    
    Focusing on setups whose descriptions include only Gaussian processes, we can consider matrix $\bm{A}$ to contain all the relevant information about the theoretical description of a GBS experiment using photon number resolving detectors. We will therefore refer to $\bm{A}$ as the \textit{model} of the GBS setup, and we will interpret $\Pr(\bm{n}|\bm{A})$ as the probability of obtaining the outcome $\bm{n}$ given model $\bm{A}$. 

    \begin{figure}[!t]
        \centering
        \includegraphics[scale=0.45]{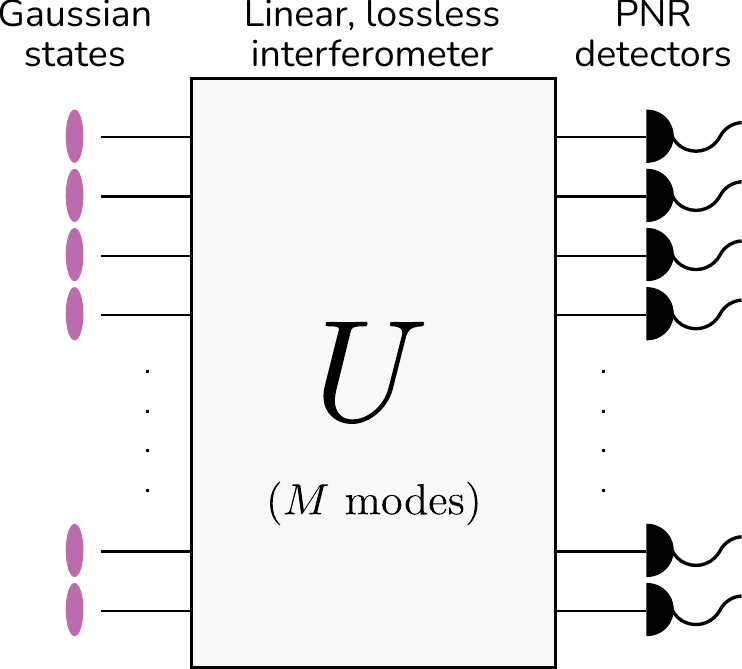}
        \caption{Description of a general, ideal GBS setup. A set of input single-mode, non-displaced Gaussian states are sent into a linear, lossless interferometer described by a $M\times M$, Haar-random, unitary matrix $\bm{U}$. The output of the interferometer is measured using photon number resolving detectors.}
        \label{fig:gbs_setup}
    \end{figure}
    
    As mentioned before, for a GBS setup that uses input single-mode squeezed states and a lossless interferometer, it has been shown that, assuming that certain conjectures hold, sampling from the distribution in Eq.~\eqref{eq:gbs_probability_distribution}, either exactly or approximately, is a computationally hard task whose cost increases exponentially with the size and rank of $\bm{A}_{\bm{n}}$~\cite{hamilton2017gaussian,kruse2019detailed,grier2022complexity}. We will refer to the theoretical description of this setup 
    as the \textit{ideal squeezed state model}, and we will denote it $\bm{A}_{\mathrm{sqz}}$.

    Any real-world implementation of GBS using squeezed states will suffer from inevitable losses, and their presence makes the experiment differ from the ideal model. Nevertheless, GBS experiments are designed to minimize noise as much as possible, so the probability distribution of the experimental samples is close enough to the ideal model and, by invoking the hardness of approximate sampling, it remains computationally hard to generate samples from this distribution. Theoretical models of GBS experiments using squeezed states that also include all the losses that the experimenters can account for are called \textit{ground truth models}.

    The verification of quantum advantage claims in the recent implementations of GBS experiments has relied on demonstrating that the \textit{real distribution} of the samples genuinely follows from the ground truth of the experiments~\cite{zhong2020quantum, zhong2021phase, madsen2022quantum, deng2023gaussian}. The main strategy to do this is to use a number of statistical measures to rule out \textit{adversarial models} or \textit{adversarial samplers}. An adversarial model is a theoretical description of a GBS setup that leads to a probability distribution that can be sampled efficiently. These classical models can be obtained, for instance, by using classical Gaussian states (such as thermal, distinguishable squeezed, or squashed states~\cite{madsen2022quantum, deng2023gaussian, martinez2023classical}) as input of the interferometers, an approach that intends to represent the effect of losses on the input squeezed states. On the other hand, an adversarial sampler is any efficient algorithm (not necessarily motivated by a physical model) that generates samples that intend to ``spoof'' the ground truth distribution.  
    
    An important drawback of this approach is that ruling out adversarial models and samplers with respect to the ground truth gives no information about how close this model is to the ideal squeezed state model. This opens the door to classical algorithms that reproduce the results of the experiments by exploiting the noise in the ground truth~\cite{oh2024classical}. This issue can be avoided by verifying GBS implementations using the ideal squeezed state model directly. In the next section we argue how we could do this using the linear cross-entropy between GBS models.
    
    \section{\label{sec:lxe_gbs} Linear cross-entropy score for GBS validation}

     Let $\bm{A}$ and $\bm{B}$ be two different models for the same GBS setup. We define the linear cross-entropy between $\bm{A}$ and $\bm{B}$, for a given total number of detected photons $N$, as   
    \begin{equation}
        \mathrm{LXE}\left(\bm{A},\bm{B};N\right)=\sum_{\bm{n}\in K(N)} \frac{\Pr(\bm{n}|\bm{A})\Pr(\bm{n}|\bm{B})}{\Pr(N|\bm{A})\Pr(N|\bm{B})},
        \label{eq:lxe_definition}
    \end{equation}
    where $K(N)=\{\bm{n}\,\vert\,\sum_{k=1}^M n_k=N\}$, and
    \begin{equation}
        \Pr(N|\bm{A})=\sum_{\bm{n}\in K(N)}\Pr(\bm{n}|\bm{A})
        \label{eq:probability_of_total_N}
    \end{equation}
    is the probability of detecting a total of $N$ photons given model $\bm{A}$.

    The linear cross-entropy belongs to a class of multiplicative measures of similarity between probability distributions that are collectively referred to as \textit{cross-entropy measures}~\cite{hangleiter2023computational}. These \textit{sample-efficient} measures (i.e., measures that can be estimated using polynomially many experimental samples) have been widely used for verifying quantum advantage claims in the field of quantum random sampling (see for instance Refs.~\cite{arute2019quantum,madsen2022quantum}). We are interested in using $\mathrm{LXE}\left(\bm{A},\bm{B};N\right)$ to assess the correlation between the ideal squeezed state model of a GBS implementation and the real distribution of the experimental samples.    
    
    The definition of $\mathrm{LXE}\left(\bm{A},\bm{B};N\right)$ closely resembles the definition of the \textit{linear cross-entropy benchmark} (XEB)~\cite{arute2019quantum, hangleiter2023computational}, a quantity which, despite having some limitations~\cite{gao2024limitations}, has been extensively used for the validation of Random Circuit Sampling (RCS) implementations. Roughly speaking, RCS consists of repeatedly applying cycles of randomly selected one- and two-qubit gates over a set of input qubits and then measuring their final state. The action of all the gates is represented by a unitary matrix $\bm{U}$, which can be approximated by a Haar-random unitary when the depth of the circuit (i.e. the number of cycles) is sufficiently large. The outcome of the measurement is a bit-string, i.e., a sequence of zeros and ones. Sampling from the ideal probability distribution of the outcomes $\bm{s}$ of a RCS implementation, $P_{\bm{U}}(\bm{s})$, is considered to be a computationally hard task whose cost grows exponentially with the number of input qubits and the depth of the circuit. Let $Q_{\bm{U}}(\bm{s})$ be the real distribution of the bit-strings $\bm{s}$. Then, the XEB reads~\cite{hangleiter2023computational}
    \begin{equation}
        \mathcal{F}(Q_{\bm{U}}, P_{\bm{U}})=2^N\sum_{\bm{s}}Q_{\bm{U}}(\bm{s})P_{\bm{U}}(\bm{s})-1,     \label{eq:XEB_definition}
    \end{equation}
    where $N$ is the number of input qubits and the sum is over the set of all possible bit-strings of length $N$. Notice that this expression can be seen as a normalized, shifted version of the linear cross-entropy between two models of the same RCS implementation.

    The definition of $\mathcal{F}(Q_{\bm{U}}, P_{\bm{U}})$ is such that we can identify two reference values for a \textit{typical} instance of RCS: if $Q_{\bm{U}}(\bm{s})= P_{\bm{U}}(\bm{s})$, $\mathbb{E}_{\bm{U}}[\mathcal{F}(Q_{\bm{U}}, P_{\bm{U}})]\approx 1$. If $Q_{\bm{U}}(\bm{s})=1/2^N$ (that is, the uniform distribution over bit-strings), $\mathbb{E}_{\bm{U}}[\mathcal{F}(Q_{\bm{U}}, P_{\bm{U}})]= 0$~\cite{arute2019quantum, hangleiter2023computational}. Here, $\mathbb{E}_{\bm{U}}[\cdot]$ indicates an average over the Haar measure of the unitary group. The presence of errors in experimental implementations of RCS leads to experimental samples with probability distributions that are uncorrelated with the ideal distribution, and that are exponentially close to the uniform distribution over bit-strings~\cite{dalzell2024random, boixo2018characterizing, arute2019quantum}. This implies that noisy implementations of RCS generally obtain values of the XEB close to zero. Moreover, obtaining an XEB satisfying $\mathcal{F}(Q_{\bm{U}}, P_{\bm{U}})> b/2^N$, with $b>1$, is considered to be a computationally hard task~\cite{hangleiter2023computational}. Thus, demonstrating quantum advantage for RCS implementations amounts to showing that the estimation of $\mathcal{F}(Q_{\bm{U}}, P_{\bm{U}})$ using experimental samples is sufficiently different from zero.

    There is no evidence that GBS architectures present the same sensibility to losses as RCS implementations. This makes elusive the definition of a measure as decisive as the XEB for the validation of GBS experiments. Nevertheless, one can use the idea of identifying reference values for \textit{typical} instances of GBS as a tool for the verification of quantum advantage claims. Indeed, in the spirit of the XEB, it is possible to use the average values (over Haar-random unitaries) of normalized versions of the linear cross-entropy between the ideal squeezed state model and the most commonly used adversarial models and samplers in order to obtain these reference values. We could consider, among the most widely used adversarial models, those associated to GBS setups using thermal, squashed or distinguishable squeezed states~\cite{madsen2022quantum, deng2023gaussian, zhong2021phase, martinez2023classical}. In addition to these, we could use classical algorithms that mimic the marginals, up to a certain order, of the ideal GBS distribution as adversarial samplers~\cite{madsen2022quantum, villalonga2021efficient}. The validation of quantum advantage would then amount to comparing how much a given GBS implementation \textit{scores} with respect to the other models and samplers. Notice that this approach makes no use of the ground truth distribution of the corresponding GBS experiment and, consequently, it is not necessary to make any assumptions about the hardness of sampling from it.

    In the remainder of this section, we will workout the details of the definition of this normalized, Haar-averaged version of the linear cross-entropy, which we will refer to as the \textit{linear cross-entropy score} (LXE score). Only one additional feature will be included in this definition: we focus on the asymptotic behavior of the linear cross-entropy as $M\rightarrow \infty$. We do this to take into account that the arguments justifying the complexity of GBS commonly require the setups to have a large number of modes, growing quadratically with the mean number of photons in the input squeezed states~\cite{grier2022complexity}. Moreover, by virtue of Levy's lemma~\cite{ledoux2001concentration}, one would expect that, as $M$ increases, typical instances of GBS will obtain values of the LXE score closer to the mean.
    
    Following the definition of the XEB, we associate the first reference value of the LXE score with the value that would be obtained by the uniform distribution over the sample space over which the linear cross-entropy is defined. Unlike in RCS, the entire sample space of detection patterns in GBS is infinite dimensional. This is why Eq.~\eqref{eq:lxe_definition} is defined for sets of samples with the same total number of detected photons. The first reference value of the LXE score is related to a GBS model that leads to a uniform probability distribution for each sector of $N$. Let us call such model $\bm{A}_{\mathrm{uni}}$. We can readily notice that $\Pr(\bm{n}|\bm{A}_{\mathrm{uni}})=f(|\bm{n}|)$, where $f(x)$ is a real-valued function satisfying $\sum_{N=0}^{\infty} f(N)=1$, and $0\leq f(N)\leq 1$ for all $N$, while $|\bm{n}|=\sum_{k=1}^M n_k$. This implies that, for an arbitrary model $\bm{B}$,
    \begin{align}
        &\mathrm{LXE}(\bm{A}_\mathrm{uni},\bm{B};N)=\frac{f(N)}{\Pr(\bm{n}|\bm{A}_{\mathrm{uni}})}\sum_{\bm{n}\in K(N)}\frac{\Pr(\bm{n}|\bm{B})}{\Pr(N|\bm{B})}\nonumber\\
        &=\frac{f(N)}{\sum_{\bm{n}\in K(N)}f(|\bm{n}|)}=\left(\sum_{\bm{n}\in K(N)}1\right)^{-1}\!\!=|K(N)|^{-1},
        \label{eq:lxe_uniform_model}
    \end{align}
    where $|K(N)|$ is the number of elements in $K(N)$. This value is equivalent to the number of weak $M$-compositions of $N$, i.e., the number of ordered partitions of $N$ having $M$ parts (with some of the parts allowed to be zero). It can be shown that $|K(N)|=\binom{M+N-1}{N}$~\cite{comtet1974advanced, flajolet2009analytic}. We will use $|K(N)|^{-1}$ as a normalization factor for the linear cross-entropy between any two models, thus setting the first reference value of the score to 1. It is worth mentioning that this same normalization term was used for the definition of other cross-entropy measures in Refs.~\cite{madsen2022quantum, oh2024classical}. 

    Even though we defined $\bm{A}_{\mathrm{uni}}$ according only to the properties of its corresponding probability distribution, it is important to keep in mind that this model can truly be associated to a GBS setup. It can be shown (see Appendix~\ref{app:misc}) that GBS setups using identical thermal states at the input of every mode of a lossless interferometer lead to probability distributions that are uniform for every sector of the total number of detected photons.

    With the normalization factor in place, we may now express the LXE score for a model $\bm{B}$ as
    \begin{equation}
        s(\bm{B};N)=\lim_{M\rightarrow\infty}  \binom{M+N-1}{N}\mathbb{E}_{\bm{U}}\left[\mathrm{LXE}(\bm{A}_{\mathrm{sqz}},\bm{B};N)\right].
        \label{eq:lxe_score_definition}
    \end{equation}
    Although this definition is adequate for finding reference values corresponding to some adversarial models, the validation of a set of experimental or adversarial samples requires an estimate of the LXE score rather than an analytical computation. This is due to the fact that not every adversarial sampler has an associated GBS model and, moreover, we have no information about the actual probability distribution that the experimental samples follow. Consider a set of $L$ samples $\{\bm{n}_k\}_{k=1}^L$ with the same total number of detected photons $N$, the estimated value of the score can be computed as 
    \begin{equation}
        \bar{s}(N)=\mathcal{C}(\bm{A}_{\mathrm{sqz}},N)\frac{1}{L}\sum_{k=1}^L\Pr(\bm{n}_k|\bm{A}_{\mathrm{sqz}}),
        \label{eq:lxe_score_estimation}
    \end{equation}
    where
    \begin{equation}
        \mathcal{C}(\bm{A}_{\mathrm{sqz}},N)=\binom{M+N-1}{N}\left[\Pr(N|\bm{A}_{\mathrm{sqz}})\right]^{-1}.
        \label{eq:coefficient_of_estimator}
    \end{equation}
    We can interpret the estimator $\bar{s}(N)$ as the average of the probabilities of each individual sample with respect to the probability distribution of the ideal model, multiplied by a normalization factor. 

    As mentioned before, the linear cross-entropy is a sample efficient measure, and thus $\bar{s}(N)$ can be estimated using polynomially many samples. Other sample efficient techniques, such as Bayesian testing or the heavy output generation test, typically require $10^3$ to $10^4$ samples per value of $N$ to be determined~\cite{zhong2020quantum, zhong2021phase, madsen2022quantum, martinez2023classical}. Due to the similarity between the computation of $\bar{s}(N)$ and these other validation techniques, we can expect that $\bar{s}(N)$ can also be estimated using this same range of number of samples.
    
    All the probabilities involved in the estimate of the score should be computed using a unitary matrix that is closely related to the actual sub-unitary matrix that describes the action of the lossy interferometer used in the experiment. If the GBS implementation is programmable, one can have access to the information about the ideal (lossless) configurations of all the gates (i.e., all the beamsplitters and phase shifters) used in the experiment. One then determines $\bar{s}(N)$ using the unitary matrix describing the action of all these ideal, unitary gates. 
    
    If the experiment has limited programmability and we only have access to the square sub-unitary matrix $\bm{T}$ describing the lossy interferometer, we can find a unitary matrix associated with $\bm{T}$ using its singular value decomposition~\cite{serafini2017quantum}. Indeed, we can always write $\bm{T}=\bm{U}_1\bm{D}\bm{U}_2^\dagger$, where $\bm{U}_1$, $\bm{U}_2$ are unitary and $\bm{D}$ is a diagonal matrix whose entries are the singular values of $\bm{T}$, which, in turn, are related to the transmission losses in the experiment. If we had no losses, we would be able to replace $\bm{D}$ by $\mathbb{I}_M$ and obtain the unitary transmission matrix $\bm{U}_1\bm{U}_2^\dagger$. On this account, we can interpret $\bm{U}_1\bm{U}_2^\dagger$ as the closest unitary matrix to $\bm{T}$, thus making it a reasonable choice for the computation of the LXE score. 

    However, it is important to mention that even if a given GBS implementation is not fully programmable, the authors of the experiment will be able to describe their ideal intended computation, i.e., they will have knowledge of the unitary $\bm{U}$ associated to their implementation.  

    The definition of the LXE score can be readily generalized to include GBS models with non-Gaussian noise. As mentioned in Sec.~\ref{sec:gbs_setup}, these models cannot be completely defined by a matrix of the form of Eq.~\eqref{eq:matrix_model}. However, if we have complete knowledge of the probability distribution associated to a given non-Gaussian model, we can still use Eq.~\eqref{eq:lxe_score_definition} to compute the LXE score, we need only replace $\Pr(\bm{n}|\bm{B})$ and $\Pr(N|\bm{B})$ in the definition of the linear cross-entropy by the adequate expressions of the probability distributions of interest. If we do not have complete knowledge of the probability distribution, but we have a set of samples associated to the non-Gaussian model, we can still use Eq.~\eqref{eq:lxe_score_estimation} to estimate the score.

    We associate the second reference value of the LXE score to the value that we would obtain if the real distribution of the experimental samples were $\Pr(\bm{n}|\bm{A}_{\mathrm{sqz}})$. The analytical computation of this \textit{ideal score} will be the subject of the next section. 
    
    \section{\label{sec:lxe_score_sqz} LXE score for the ideal squeezed state model}

    Consider a GBS implementation where the first $R$ of the $M$ input modes of the interferometer receive identical single-mode squeezed states with squeezing parameter $r$, while the remaining $M-R$ modes receive the vacuum state. The matrix $\bm{A}_{\mathrm{sqz}}$ describing this setup reads~\cite{hamilton2017gaussian, kruse2019detailed}
    \begin{equation}
        \bm{A}_{\mathrm{sqz}}=\tanh(r)\,\bm{V}\oplus\bm{V}^*,
        \label{eq:sqz_model_definition}
    \end{equation}
    where we define 
    \begin{align}
    \bm{V}&=\bm{U}\bm{\zeta}\bm{U}^{\mathrm{T}},\nonumber \\
    \bm{\zeta}&=\mathbb{I}_R\oplus\bm{0}_{M-R}     
    \end{align}
    (we explicitly indicate the size of the null matrix for clarity). The main result that we prove in this work states that
    \begin{equation}
        s(\bm{A}_\mathrm{sqz};2N)=\frac{4^{N}(N!)^2}{(2N)!}\left[\frac{(R-2)!!}{(R+2N-2)!!}\right]^2\sum_{\ell=1}^{2N}c_\ell R^{\ell}.
        \label{eq:main_result_lxe_sqz}
    \end{equation}
    where the $\{c_\ell\}$ are non-negative coefficients.

    \begin{figure}[!t]
        {
          \includegraphics[width=0.8\columnwidth]{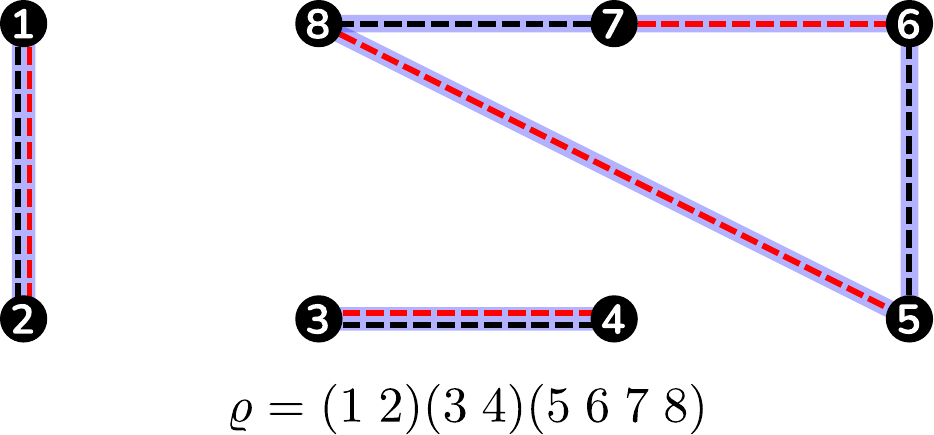}%
        }\\
        \vspace{0.5cm}
        {%
          \includegraphics[width=0.8\columnwidth]{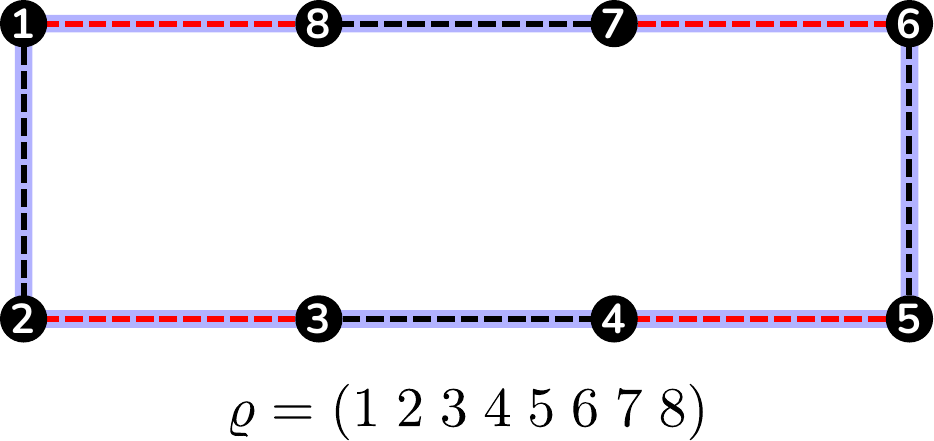}%
        }    
        \caption{Illustration of the definition of the undirected graphs $\bm{\Gamma}(\varrho)$ for $4N=8$. The vertices of the graph are represented by numbered black circles. The edges corresponding to $\{(2k-1,2k)\,\vert\,k\in\{1,\dots,2N\}\}$ are shown as black dashed lines, while the edges corresponding to $\{(\varrho(2k-1),\varrho(2k))\,\vert\,k\in\{1,\dots,2N\}\}$ are shown as red dashed lines. Each connected component of the graph is highlighted with a light blue, thick line. The top graph corresponds to the permutation, written in cycle notation, $\varrho = (1\,2)(3\,4)(5\,6\,7\,8)$. This permutation transforms the set of indices $\bm{g}=(g_1,g_2,g_3,g_4,g_5,g_6,g_7,g_8)$ as $\varrho(\bm{g})=(g_2,g_1,g_4,g_3,g_6,g_7,g_8,g_5)$, and has a total of three connected components. The bottom graph corresponds to the permutation $\varrho = (1\,2\,3\,4\,5\,6\,7\,8)$, which transforms $\bm{g}$ as $\varrho(\bm{g})=(g_2,g_3,g_4,g_5,g_6,g_7,g_8,g_1)$. In this case there is only one connected component.}
        \label{fig:definition_of_graphs}
    \end{figure}
    
    Unlike the ideal XEB reference value, which remains constant no matter the number of qubits in the random circuit, the ideal LXE score varies for each sector of the total number of detected photons. This, however, does not mean that $s(\bm{A}_\mathrm{sqz};2N)$ cannot be used as a tool for validation of GBS implementations. Indeed, we need only determine how different the scores of other models or samplers (which might also depend on $N$) are from the reference curve established by the ideal squeezed state model.

    A few definitions must be set in place before expressing how to compute the coefficients $\{c_\ell\}$. Let $\bm{j}=(j_1,\dots,j_{2N})$ be a fixed sequence of \textit{different} indices. We define the permutation $\Omega_{\bm{k}}\in S_{2N}$ (where we recall that $S_m$ stands for the symmetric group of degree $m$) by its action on $\bm{j}$ as
    \begin{align}
        \begin{split}
            \Omega_{\bm{k}}&[(j_1,\dots,j_{2N})]=\\
            &\bigoplus_{a=1}^N\bigoplus_{p=1}^{k_a}\omega_{a}[(j_{2v_{a-1}+2a(p-1)+1},\dots,j_{2v_{a-1}+2ap})],
        \end{split}
        \label{eq:big_omega_permutation_def}
    \end{align}
    where $v_a = \sum_{p=1}^a pk_p$, $v_0\equiv 0$, and the permutation $\omega_a \in S_{2a}$ (with $2a\leq 2N$) is defined by its action on a sequence $(g_1,\dots,g_{2a})$ as $\omega_a[(g_1,g_2,\dots,g_{2a-1}, g_{2a})]=(g_2,g_3,\dots,g_{2a},g_1)$. 
    Moreover, let $\varrho \in S_{4N}$, and, following Ref.~\cite{matsumoto2012general}, let $\bm{\Gamma}(\varrho)$ be an undirected graph whose vertices are $\{1,\dots,4N\}$, and whose edges are $\{(2k-1,2k)\,\vert\,k\in\{1,\dots,2N\}\}$ and $\{(\varrho(2k-1),\varrho(2k))\,\vert\,k\in\{1,\dots,2N\}\}$ (see Fig.~\ref{fig:definition_of_graphs} for an illustration of this definition). Then,
    \begin{align}
        \begin{split}
            c_\ell = \!\!\!\sum_{\bm{k}^{(c)}, \bm{l}^{(c)}}&\,\prod_{a=1}^N\frac{1}{k_a!l_a!(2a)^{k_a+l_a}}\\
            &\!\times\!\!\!\sum_{\sigma\in S_{2N}}b_\ell\left(\bm{j}\oplus\sigma(\bm{j}),\Omega_{\bm{k}}(\bm{j})\oplus\Omega_{\bm{l}}\circ\sigma(\bm{j})\right),
        \end{split}
        \label{eq:coefficients_sqz_lxe_score}
    \end{align}
    where $\bm{k}^{(c)}$ is a vector of $N$ non-negative integers that satisfy $k_1+2k_2+\dots+Nk_N=N$ (we define $\bm{l}^{(c)}$ in an analogous way), and 
    \begin{align}
        \begin{split}
            b_\ell(\bm{g},\bm{h})=\vert&\{\varrho\in S_{4N}\,\vert\,\bm{h}=\varrho(\bm{g})\text{ and }\\
            &\quad\bm{\Gamma}(\varrho)\text{ has }\ell\text{ connected components}\}\vert.
        \end{split}
        \label{eq:b_coefficients_definition}
    \end{align}

    Throughout the rest of this section, we explain in detail how to obtain the result in Eq.~\eqref{eq:main_result_lxe_sqz}, and we justify the definitions in Eqs.~\eqref{eq:big_omega_permutation_def} to~\eqref{eq:b_coefficients_definition}. 
    
    We divide our calculation of the ideal score into the following stages:
    \begin{enumerate}
        \item Express the linear cross-entropy as an integral over a number of real parameters. This form will have the advantage of not depending on the measurement outcomes $\bm{n}$.
        \item Write $\mathrm{LXE}(\bm{A}_{\mathrm{sqz}}, \bm{A}_{\mathrm{sqz}}; N)$ as a polynomial in the entries of $\bm{U}$. This will prove useful when computing the average over the Haar measure of the unitary group.
        \item Compute the integral over the real parameters.
        \item Calculate the average over Haar-random unitaries as $M\rightarrow\infty$.
    \end{enumerate}
    As a final stage, we gather all the results of the previous steps and complete the computation of the ideal score

    \subsection{\label{sec:integral_lxe}Integral form of the linear cross-entropy}
    Let $\bm{A}$ and $\bm{B}$ be two models for the same GBS setup. Adding the probabilities of every possible detection pattern $\bm{n}$ we can see that
    \[\sum_{\bm{n}}\Pr(\bm{n}|\bm{A})=\sum_{\bm{m}}\Pr(\bm{m}|\bm{B})=1,\] which implies that
    \[\sum_{\bm{n},\bm{m}}\Pr(\bm{n}|\bm{A})\Pr(\bm{m}|\bm{B}) = 1.\] 
    We can recast this expression in terms of the hafnian using Eq.~\eqref{eq:gbs_probability_distribution}:
    \begin{equation}
        \sum_{\bm{n},\bm{m}} \frac{\mathrm{haf}\left[\bm{A}_{\bm{n}}\right]}{\bm{n}!}\frac{\mathrm{haf}\left[\bm{B}_{\bm{m}}\right]}{\bm{m}!}=\frac{1}{\Pr(\bm{0}|\bm{A})\Pr(\bm{0}|\bm{B})}.
        \label{eq:integral_lxe_1}
    \end{equation}

    Define the matrix 
    \begin{align}
       \bm{D}(\bm{\phi})=\mathrm{diag}(e^{i\phi_1},\dots,e^{i\phi_M})  \, ,\label{eq:defD}
    \end{align}
    where $\bm{\phi}=(\phi_1,\dots,\phi_M)$ is a vector of real parameters with $\phi_k\in[0,2\pi]\,\,\forall k$, and let $\bm{W}(\phi)=\bm{D}(\bm{\phi})\oplus\bm{D}(\bm{\phi})$. Notice that $\bm{W}^*(\bm{\phi})=\bm{W}(-\bm{\phi})$. Transforming matrices $\bm{A}$ and $\bm{B}$ in Eq.~\eqref{eq:integral_lxe_1} according to $\bm{A}\rightarrow \alpha \bm{W}(\bm{\phi})\bm{A}\bm{W}(\bm{\phi})$ and $\bm{B}\rightarrow \beta \bm{W}^*(\bm{\phi})\bm{B}\bm{W}^*(\bm{\phi})$, where $\alpha,\,\beta\in [0,1)$, we obtain (we will drop the explicit dependence of matrices on $\bm{\phi}$ in order to shorten the notation)
    \begin{align}
        \begin{split}
            \!\!\sum_{\bm{n},\bm{m}}&\frac{\mathrm{haf}\left[\alpha\left(\bm{W}\bm{A}\bm{W}\right)_{\bm{n}}\right]}{\bm{n}!}\frac{\mathrm{haf}\left[\beta\left(\bm{W}^*\bm{B}\bm{W}^*\right)_{\bm{m}}\right]}{\bm{m}!}\\
            &\quad\quad=q(\alpha,\bm{\phi},\bm{A})\,q(\beta,-\bm{\phi},\bm{B}),
        \end{split}
        \label{eq:integral_lxe_2}
    \end{align}
    where 
    \begin{align}
        \begin{split}
            q(\alpha,\bm{\phi},\bm{A})&=\left[\mathrm{det}\left(\mathbb{I}_{2M}-\alpha\bm{X}\bm{W}\bm{A}\bm{W}\right)\right]^{-1/2}\\
            &=\left[\mathrm{det}\left(\mathbb{I}_{2M}-\alpha\bm{\Omega}\bm{A}\right)\right]^{-1/2},
        \end{split}
        \label{eq:q_function_def}
    \end{align}
    and $\bm{\Omega}(\bm{\phi})=\bm{W}(\bm{\phi})\bm{X}\bm{W}(\bm{\phi})$\footnote{To write the second line of this relation, we used Sylvester's theorem, which states that if $\bm{A}$ and $\bm{B}$ are $m\times n$ and $n\times m$ matrices, respectively, such that $\bm{AB}$ and $\bm{BA}$ are trace-class, then $\mathrm{det}(\mathbb{I}_m + AB) = \mathrm{det}(\mathbb{I}_n + BA)$}.

    We can factorize the matrices inside the hafnian as $\left(\bm{W}\bm{A}\bm{W}\right)_{\bm{n}} = \bm{\bar{W}}_{\bm{n}}\bm{A}_{\bm{n}}\bm{\bar{W}}_{\bm{n}}$, where $\bm{\bar{W}}_{\bm{n}}$ is a diagonal matrix obtained from $\bm{W}$ by repeating $n_k$ times the entry $e^{i\phi_k}$ (see Appendix~\ref{app:misc} for a proof of this statement). For instance, let $M=2$ and $\bm{W}=\mathrm{diag}(e^{i\phi_1}, e^{i\phi_2}, e^{i\phi_1}, e^{i\phi_2})$. If $\bm{n}=(1,2)$, then 
    $\bm{\bar{W}}_{\bm{n}}=\mathrm{diag}(e^{i\phi_1}, e^{i\phi_2}, e^{i\phi_2},e^{i\phi_1}, e^{i\phi_2}, e^{i\phi_2})$. Notice that $\bm{\bar{W}}_{\bm{n}}$ has size $2|\bm{n}|\times 2|\bm{n}|$, where $|\bm{n}|=\sum_{k=1}^M n_k$.
    
    This factorization, in turn, allows us to use the following property of the hafnian~\cite{barvinok2016combinatorics}: for any symmetric $m\times m$ matrix $\bm{O}$ and a diagonal matrix $\bm{S}=\mathrm{diag}(s_1,\dots,s_m)$
    \begin{equation}
        \mathrm{haf}[\bm{S}\bm{O}\bm{S}]=\left(\prod_{k=1}^m s_k\right)\mathrm{haf}[\bm{O}].
        \label{eq:hafnian_property}
    \end{equation}
    We have
    \begin{align}
        \begin{split}
            \mathrm{haf}\left[\alpha\left(\bm{W}\bm{A}\bm{W}\right)_{\bm{n}}\right]&=\mathrm{haf}\left[\alpha\bm{\bar{W}}_{\bm{n}}\bm{A}_{\bm{n}}\bm{\bar{W}}_{\bm{n}}\right]\\
            &=\alpha^{|\bm{n}|}\prod_{k=1}^{2|\bm{n}|}(\bm{\bar{W}}_{\bm{n}})_{k,k}\,\mathrm{haf}\left[\bm{A}_{\bm{n}}\right].
        \end{split}
        \label{eq:integral_lxe_3}
    \end{align}
    Noting that $\prod_{k=1}^{2|\bm{n}|}(\bm{\bar{W}}_{\bm{n}})_{k,k}=\prod_{k=1}^Me^{2in_k\phi_k} = e^{2i\bm{n}\cdot\bm{\phi}}$, we obtain
    \begin{equation}
        \mathrm{haf}\left[\alpha\left(\bm{W}\bm{A}\bm{W}\right)_{\bm{n}}\right]=\alpha^{|\bm{n}|}e^{2i\bm{n\cdot\bm{\phi}}}\,\mathrm{haf}\left[\bm{A}_{\bm{n}}\right].
        \label{eq:integral_lxe_4}
    \end{equation}
    Using the same argument, it can be shown that $\mathrm{haf}[\beta(\bm{W}^*\bm{B}\bm{W}^*)_{\bm{m}}]=\beta^{|\bm{m}|}e^{-2i\bm{m}\cdot\bm{\phi}}\,\mathrm{haf}[\bm{B}_{\bm{m}}]$. Replacing these expressions into Eq.~\eqref{eq:integral_lxe_2}, we obtain
    \begin{align}
        \begin{split}
            &q(\alpha,\bm{\phi},\bm{A})\,q(\beta,-\bm{\phi},\bm{B})=\\
            &\sum_{\bm{n},\bm{m}}\alpha^{|\bm{n}|}\beta^{|\bm{m}|}e^{2i(\bm{n}-\bm{m})\cdot\bm{\phi}}\,\frac{\mathrm{haf}\left[\bm{A}_{\bm{n}}\right]}{\bm{n}!}\frac{\mathrm{haf}\left[\bm{B}_{\bm{m}}\right]}{\bm{m}!}.
        \end{split}
        \label{eq:integral_lxe_5}
    \end{align}

    Integrating both sides of the last equation with respect to $d\bm{\phi}=d\phi_1\cdots d\phi_M$, and taking into account that \[\int_0^{2\pi}e^{2i(\bm{n}-\bm{m})\cdot\bm{\phi}}d\bm{\phi}=(2\pi)^M\delta_{\bm{m},\bm{n}},\] where $\delta_{\bm{m},\bm{n}}=\delta_{m_1,n_1}\cdots\delta_{m_M,n_M}$, we can write
    \begin{align}
        \begin{split}
            \frac{1}{(2\pi)^M}&\int_0^{2\pi}q(\alpha,\bm{\phi},\bm{A})\,q(\beta,-\bm{\phi},\bm{B})\,d\bm{\phi}=\\
            &\sum_{N=0}^\infty\alpha^{N}\beta^{N}\!\!\sum_{\bm{n}\in K(N)}\frac{\mathrm{haf}\left[\bm{A}_{\bm{n}}\right]}{\bm{n}!}\frac{\mathrm{haf}\left[\bm{B}_{\bm{n}}\right]}{\bm{n}!},
        \end{split}
        \label{eq:integral_lxe_6}
    \end{align}
    where we have taken into account that we can decompose the sum over all possible detection patterns as $\sum_{\bm{n}}=\sum_{N=0}^\infty \sum_{\bm{n}\in K(N)}$.

    By repeatedly differentiating with respect to $\alpha$ and $\beta$, and then evaluating at $\alpha=\beta=0$, we can single out the sum involving only the elements of $K(N)$. The result of this procedure reads:
    \begin{align}
        \begin{split}
            \frac{1}{(2\pi)^M}&\int_0^{2\pi}\left.\frac{\partial^Nq(\alpha,\bm{\phi},\bm{A})}{\partial \alpha^N}\,\frac{\partial^Nq(\beta,-\bm{\phi},\bm{B})}{\partial\beta^N}\right|_{\substack{\alpha=0\\\beta=0}}\,d\bm{\phi}\\
            &=(N!)^2\!\!\sum_{\bm{n}\in K(N)}\frac{\mathrm{haf}\left[\bm{A}_{\bm{n}}\right]}{\bm{n}!}\frac{\mathrm{haf}\left[\bm{B}_{\bm{n}}\right]}{\bm{n}!}.
        \end{split}
        \label{eq:integral_lxe_7}
    \end{align}

    We may readily observe that the previous expression leads to the definition of the linear cross-entropy; we need only include the vacuum probabilities for both models and the term $\Pr(N|\bm{A})\Pr(N|\bm{B})$. Putting all these pieces together, we obtain the final expression of the integral form:
    \begin{align}
        \begin{split}
            \mathrm{LXE}&\left(\bm{A},\bm{B};N\right)=\frac{1}{(2\pi)^M}\,\mathcal{D}(\bm{A},\bm{B};N)\\
            &\!\!\times\int_0^{2\pi}\left.\frac{\partial^Nq(\alpha,\bm{\phi},\bm{A})}{\partial \alpha^N}\,\frac{\partial^Nq(\beta,-\bm{\phi},\bm{B})}{\partial\beta^N}\right|_{\substack{\alpha=0\\\beta=0}}\,d\bm{\phi},
        \end{split}
        \label{eq:integral_lxe_end}
    \end{align}
    where 
    \begin{equation}
        \mathcal{D}(\bm{A},\bm{B};N)=\frac{1}{(N!)^2}\frac{\Pr(\bm{0}|\bm{A})\Pr(\bm{0}|\bm{B})}{\Pr(N|\bm{A})\Pr(N|\bm{B})}.
        \label{eq:coefficient_lxe_1}
    \end{equation}
    
    Although it may seem that the determination of $\Pr(N|\bm{A})\Pr(N|\bm{B})$, and therefore $\mathcal{D}(\bm{A},\bm{B};N)$, requires the knowledge of all detection patterns in $K(N)$, by using similar arguments to those that led to Eq.~\eqref{eq:integral_lxe_5}, it can be shown that  
    \begin{equation}
        \mathcal{D}(\bm{A},\bm{B};N)=\left[ 
        \frac{\partial^N q(\alpha, \bm{0},\bm{A})}{\partial \alpha^N} 
        \left.\frac{\partial^N q(\beta, \bm{0},\bm{B})}{\partial \beta^N}\right|_{\substack{\alpha=0\\\beta=0}}
    \right]^{-1}.
        \label{eq:coefficient_lxe_2}
    \end{equation}
    A detailed proof of this statement can be found in Appendix~\ref{app:misc}.

    We turn now our attention to the task of finding an expression for $q(\alpha,\bm{\phi},\bm{A})$ that allows us to readily compute its partial derivatives with respect to $\alpha$. We will do this by recasting $q(\alpha,\bm{\phi},\bm{A})$ into a power series in $\alpha$.

    Notice that $\mathbb{I}_{2M}-\alpha\bm{\Omega}\bm{A}=\exp\left[\log\left(\mathbb{I}_{2M}-\alpha\bm{\Omega}\bm{A}\right)\right]$, and recall the relation $\mathrm{det}[\exp(\bm{A})]=\exp[\mathrm{tr}(\bm{A})]$ (we use $\mathrm{tr}(\cdot)$ to indicate the trace of a matrix, while $\mathrm{Tr}(\cdot)$ indicates the trace of an operator). Using the Taylor series expansion $\log(1+x)=\sum_{l=1}^\infty(-1)^{l+1}x^l/l$, we may write $q(\alpha,\bm{\phi},\bm{A})$ as
    \begin{equation}
        q(\alpha,\bm{\phi},\bm{A})=\exp\left[\sum_{l=1}^\infty y_l\frac{\alpha^l}{l}\right],\quad y_l=\frac{1}{2}\mathrm{tr}[(\bm{\Omega}\bm{A})^l]\, ,
        \label{eq:q_function_expansion_1}
    \end{equation}
    where the $\bm{\phi}$ dependence of every $y_l$ is through $\bm{\Omega}=\bm{\Omega(\phi)}$.
    Details about the convergence of this series expansion for a wide number of GBS models, including the ideal squeezed state model, can also be found in Appendix~\ref{app:misc}. 
    
    In the form of Eq.~\eqref{eq:q_function_expansion_1}, $q(\alpha,\bm{\phi},\bm{A})$ becomes the generating function of the \textit{cycle index} of the symmetric group $Z_n$~\cite{comtet1974advanced,roberts2009applied,wilf2005generatingfunctionology}, which leads to the expression
    \begin{equation}
        q(\alpha,\bm{\phi},\bm{A}) = \sum_{n=0}^\infty Z_n(y_1,\dots,y_n)\alpha^n,
        \label{eq:q_function_expansion_2}
    \end{equation}
    where
    \begin{equation}
        Z_n(y_1,\dots,y_n)=\sum_{\bm{k}^{(c)}}\,\prod_{a=1}^n\frac{1}{k_a!a^{k_a}}\prod_{a=1}^ny_a^{k_a},
        \label{eq:cycle_index}
    \end{equation}
    and the sum extends over all possible $\bm{k}=(k_1,\dots,k_n)$ whose non-negative, integer components satisfy the constraint $k_1 + 2k_2+\cdots+ nk_n = n$ (we use the notation $\bm{k}^{(c)}$ to indicate the constraint over the components of $\bm{k}$.
    
    We can now readily see that 
    \begin{equation}
        \left.\frac{\partial^N q(\alpha, \bm{\phi},\bm{A})}{\partial \alpha^N}\right|_{\alpha=0}= N!Z_N[\bm{y}(\bm{\phi},\bm{A})],
        \label{eq:derivative_q_function}
    \end{equation}
    where we have defined the vector $\bm{y}(\bm{\phi},\bm{A})=(y_1,\dots,y_N)$ in order to make explicit the dependence of the $\{y_k\}$ on $\bm{\phi}$ and $\bm{A}$. Using this expression, we can recast the linear cross-entropy as
    \begin{align}
        \begin{split}
            \mathrm{LXE}&\left(\bm{A},\bm{B};N\right)=\frac{1}{(2\pi)^M}\,\bar{\mathcal{D}}(\bm{A},\bm{B};N)\\
            &\quad\times\int_0^{2\pi}Z_N[\bm{y}(\bm{\phi},\bm{A})]Z_N[\bm{y}(-\bm{\phi},\bm{B})]\,d\bm{\phi},
        \end{split}
        \label{eq:integral_lxe_cycle}
    \end{align}
    with 
    \begin{align}
    \bar{\mathcal{D}}(\bm{A},\bm{B};N)=(N!)^2\mathcal{D}(\bm{A},\bm{B};N)\, .
    \label{eq:calDdefine}
    \end{align}

    \subsection{\label{sec:lxe_as_polynomial}LXE as a polynomial in the entries of Haar-random unitaries}

    Consider now the ideal squeezed state model $\bm{A}_{\mathrm{sqz}}$. From the definition of $q(\alpha,\bm{\phi},\bm{A})$ we can see that
    \begin{align}
        q(\alpha,\bm{0},\bm{A}_{\mathrm{sqz}})&=\left[\mathrm{det}\left[\mathbb{I}_{2M}-\alpha\tanh(r)\bm{X}(\bm{V}\oplus\bm{V}^*)\right]\right]^{-1/2}\nonumber\\
        &=\left[\mathrm{det}\left(\mathbb{I}_{M}-\alpha^2\tanh^2(r)\,\bm{\zeta}\right)\right]^{-1/2}\nonumber\\
        &=\left[1-\alpha^2\tanh^2(r)\right]^{-R/2},
        \label{eq:q_function_sqz_at_0}
    \end{align}
    where we remind the reader that the squeezed states are sent in the first $R$ modes of the interferometer. Expanding this expression in a Taylor series about $\alpha=0$, we can prove the relation 
    \begin{equation}
        \left.\frac{\partial^{2N}q(\alpha,\bm{0},\bm{A}_{\mathrm{sqz}})}{\partial\alpha^{2N}}\right|_{\alpha=0}=(2N)!\tanh^{2N}(r)\binom{\frac{R}{2}+N-1}{N}.
        \label{eq:der_q_function_sqz_at_0}
    \end{equation}
    Note that the derivatives for odd $N$ identically vanish when evaluated at $\alpha=0$. This is due to the fact that squeezed states have support only over Fock states with an even number of photons and, moreover, we are considering a lossless (i.e., energy conserving) interferometer. Using Eq.~\eqref{eq:der_q_function_sqz_at_0} we reach the result 
    \begin{equation}
        \bar{\mathcal{D}}(\bm{A}_{\mathrm{sqz}},\bm{A}_{\mathrm{sqz}};2N) = \frac{1}{\tanh^{4N}(r)}\binom{\frac{R}{2}+N-1}{N}^{-2}.
        \label{eq:coeff_lxe_sqz}
    \end{equation}

    For the general case of $\bm{\phi}\neq \bm{0}$, we can write
    \begin{align}
        \begin{split}
            q(\alpha,\bm{\phi},\bm{A}_{\mathrm{sqz}})&=\mathrm{det}\left[\mathbb{I}_{2M}-\alpha\tanh(r)\bm{\Omega}(\bm{V}\oplus\bm{V}^*)\right]^{-1/2}\\
            &=\mathrm{det}\left[\mathbb{I}_{M}-\alpha^2\tanh^2(r)\bm{D}^2\bm{V}\bm{D}^2\bm{V}^*\right]^{-1/2},\\
        \end{split}
        \label{eq:q_function_sqz}
    \end{align}
    which can be recast in the form of Eq.~\eqref{eq:q_function_expansion_2} as
    \begin{equation}
        q(\alpha,\bm{\phi},\bm{A}_{\mathrm{sqz}}) = \sum_{n=0}^\infty Z_n(u_1,\dots,u_n)\tanh^{2n}(r)\alpha^{2n},
        \label{eq:q_function_sqz_2}
    \end{equation}
    where $u_k=\frac{1}{2}\mathrm{tr}\left[(\bm{D}^2\bm{V}\bm{D}^2\bm{V}^*)^k\right]$. Each $u_k$ depends on $\bm{\phi}$ through $\bm{D}=\bm{D(\phi)}$. 
    Just as before, we can readily see that the derivatives of $q(\alpha,\bm{\phi},\bm{A}_{\mathrm{sqz}})$ are different from zero when evaluated at $\alpha=0$ only if $N$ is even. Defining the vector $\bm{u}(\bm{\phi,\bm{U}})=(u_1,\dots,u_N)$, these derivatives can be expressed as
    \begin{equation}
        \left.\frac{\partial^{2N}q(\alpha,\bm{\phi},\bm{A}_{\mathrm{sqz}})}{\partial\alpha^{2N}}\right|_{\alpha=0}=(2N)!\tanh^{2N}(r)Z_N[\bm{u}(\bm{\phi},\bm{U})],
        \label{eq:der_q_function_sqz}
    \end{equation}
    where $Z_N$ is given in Eq.~\eqref{eq:cycle_index}. We can therefore compute $\mathrm{LXE}(\bm{A}_{\mathrm{sqz}}, \bm{A}_{\mathrm{sqz}};2N)$ using the relation
    \begin{align}
        \begin{split}
            \mathrm{LXE}&\left(\bm{A}_{\mathrm{sqz}},\bm{A}_{\mathrm{sqz}};2N\right)=\frac{1}{(2\pi)^M}\,\binom{\frac{R}{2}+N-1}{N}^{-2}\\
            &\quad\times\int_0^{2\pi}Z_N[\bm{u}(\bm{\phi},\bm{U})]Z_N[\bm{u}(-\bm{\phi},\bm{U})]\,d\bm{\phi}.
        \end{split}
        \label{eq:integral_lxe_cycle_sqz}
    \end{align}

    The definition of the $\{u_k\}$ as traces of powers of the matrix $\bm{D}^2\bm{V}\bm{D}^2\bm{V}^*$ implies they can be expanded as polynomials in the entries of matrices $\bm{V}$, $\bm{D}$, and their complex conjugates. Consequently, the integrand in Eq.~\eqref{eq:integral_lxe_cycle_sqz} will also be a polynomial in the entries of these matrices. The dependence of each term in the expansion on the entries of $\bm{V}$ will have the following general structure: $V_{g_1,g_2}\cdots V_{g_{2l-1},g_{2l}}V^*_{h_1,h_2}\cdots V^*_{h_{2m-1},h_{2m}}$, where $\bm{g}=(g_1,\dots,g_{2l})$ and $\bm{h}=(h_1,\dots,h_{2m})$ are sequences of indices that take values in subsets of $\{1,\dots, M\}$. These terms can be recast as polynomials in the entries of the matrix $\bm{U}$: 
    \begin{align}
        \begin{split}
            V_{g_1,g_2}\cdots V_{g_{2l-1},g_{2l}}V^*_{h_1,h_2}&\cdots V^*_{h_{2m-1},h_{2m}}\\
            &=\sum_{\bm{\mu},\bm{\nu}}\zeta_{\bm{\mu}}\zeta_{\bm{\nu}}\,\mathcal{U}\left(\bm{g},\bm{{\bar{\mu}}}\,\vert\,\bm{h},\bm{\bar{\nu}}\right),
        \end{split}
        \label{eq:monomial_V_function_def}
    \end{align}
    where the indices in $\bm{\mu}=(\mu_1,\dots,\mu_l)$, $\bm{\nu}=(\nu_1,\dots,\nu_m)$ take values in $\{1,\dots, M\}$, so $\sum_{\bm{\mu}}\equiv\sum_{\mu_1=1}^M\dots\sum_{\mu_{l}=1}^M$, and 
    $\bm{\bar{\mu}}=(\mu_1,\mu_1,\dots,\mu_l,\mu_l)$, $\bm{\bar{\nu}}=(\nu_1,\nu_1,\dots,\nu_m,\nu_m)$. We conveniently write
    \begin{align}
        \zeta_{\bm{\mu}}=\zeta_{\mu_1}\cdots\zeta_{\mu_l}\, ,
        \label{eq:zetadefine}
    \end{align} 
    with $\{\zeta_{k}\}$ the diagonal entries of $\bm{\zeta}$, and
    \begin{align}
        \begin{split}
            \!\!\mathcal{U}\left(\bm{g},\bm{{\bar{\mu}}}\,\vert\,\bm{h},\bm{\bar{\nu}}\right)&=U_{g_1,\mu_1}U_{g_2,\mu_1}\cdots U_{g_{2l-1},\mu_l}U_{g_{2l},\mu_l}\\
            &\times U^*_{h_1,\nu_1}U^*_{h_2,\nu_1}\cdots U^*_{h_{2m-1},\nu_m}U^*_{h_{2m},\nu_m}.   
        \end{split}
        \label{eq:monomial_U_function_def}
    \end{align}

    Consider the sequences $\bm{j}=(j_1,\dots, j_{2N})$ and $\bm{j}'=(j'_1,\dots, j'_{2N})$, where $j_k,j_l'\in\{1,\dots, M\}\,\,\forall k,l$. Moreover, consider the permutation $\Omega_{\bm{k}}\in S_{2N}$ defined in Eq.~\eqref{eq:big_omega_permutation_def}: 
    \begin{align*}
        \begin{split}
            \Omega_{\bm{k}}&[(j_1,\dots,j_{2N})]=\\
            &\bigoplus_{a=1}^N\bigoplus_{p=1}^{k_a}\omega_{a}[(j_{2v_{a-1}+2a(p-1)+1},\dots,j_{2v_{a-1}+2ap})],
        \end{split}
    \end{align*}
    where, let us recall, $v_a = \sum_{p=1}^a pk_p$, $v_0\equiv 0$, and the permutation $\omega_a$ transforms the sequence $(g_1,\dots, g_{2a})$ as $\omega_a[(g_1,g_2,\dots,g_{2a-1}, g_{2a})]=(g_2,g_3,\dots,g_{2a},g_1)$. Then, we can write (see Appendix~\ref{app:index_structure} for details)
    \begin{align}
        \begin{split}
            \mathrm{LXE}&\left(\bm{A}_{\mathrm{sqz}},\bm{A}_{\mathrm{sqz}};2N\right)=\binom{\frac{R}{2}+N-1}{N}^{-2}\\
            &\times\sum_{\bm{k}^{(c)}, \bm{l}^{(c)}}\,\prod_{a=1}^N\frac{1}{k_a!l_a!(2a)^{k_a+l_a}}\sum_{\bm{j},\bm{j}'}I({\bm{j}},{\bm{j}'})\\
            &\times\sum_{\bm{\mu},\bm{\nu}}\zeta_{\bm{\mu}}\zeta_{\bm{\nu}}\,\mathcal{U}\left(\bm{j}\oplus\bm{j}',\bm{\bar{\mu}}\,\vert\,\Omega_{\bm{k}}(\bm{j})\oplus\Omega_{\bm{l}}(\bm{j}'),\bm{\bar{\nu}}\right),
        \end{split}
        \label{eq:lxe_sqz_cycle_U_polynomial}
    \end{align}
    with
    \begin{equation}
        I(\bm{j},\bm{j}')=\frac{1}{(2\pi)^M}\int_0^{2\pi}\exp\left[\sum_{m\in \bm{j},n\in \bm{j}'}\!\!\!\!2i\,(\phi_m-\phi_n)\right]d\bm{\phi}.
        \label{eq:phases_integral}
    \end{equation}

    \subsection{\label{sec:integral_over_phases}Integrating the phases away}
    
    By inspection of Eq.~\eqref{eq:phases_integral} we can recognize that $I(\bm{j},\bm{j}')$ will vanish whenever the sum inside the exponential is different from zero. Indeed, for this case, there must be at least one term of the form $e^{2 iz \phi_p}$, for some non-zero integer $z$ and some $p\in \bm{j}\text{ or }\bm{j}'$, that is not canceled out and, when integrated with respect to $d\phi_p$, makes the whole integral vanish. When the sum inside the exponential is equal to zero, $I(\bm{j},\bm{j}')=1$. We may then think of $I(\bm{j},\bm{j}')$ as an indicator function that, given a fixed $\bm{j}$, allows us to keep track of all the ways we can set $\bm{j}'$ in order to make the sum inside the exponential vanish. Furthermore, notice that the sum inside the exponential will be identically zero whenever $\bm{j}'$ is a permutation of $\bm{j}$. Since $\bm{j}$ might have indices with repeated values, we must take into account that only the \textit{different} permutations that take $\bm{j}$ into $\bm{j}'$ should be identified by $I(\bm{j},\bm{j}')$. 
    
    In Appendix~\ref{app:integral} we describe how to use the previous considerations to write Eq.~\eqref{eq:phases_integral} in terms of Kronecker deltas. The final result reads:
    \begin{align}
        \begin{split}
            I(\bm{j},\bm{j'})&=\sum_{\Lambda\in\mathcal{Q}[\bm{j}]}\frac{1}{\Lambda!}F(\bm{j}',\bm{j}[\Lambda,\{j_\lambda\}])\\
            &\qquad\times\left[\prod_{\lambda\in\Lambda}\prod_{f\in\lambda}\delta_{j_\lambda,f}\right]\prod_{(\lambda\neq\mu)\in\Lambda}(1-\delta_{j_\lambda,j_\mu}).
        \end{split}
        \label{eq:phases_integral_result}
    \end{align}
    
    In this expression $\Lambda$ represents a \textit{set partition} of $\bm{j}$, i.e., a collection of non-empty, mutually disjoint subsets of $\bm{j}$ (which are usually called \textit{blocks}), whose union is equal to $\bm{j}$. $\mathcal{Q}[\bm{j}]$ is the set of all partitions of $\bm{j}$. The set $\{j_\lambda\}$, which depends on a given partition $\Lambda$, is called the set of \textit{representative indices} of $\Lambda$, and is constructed by choosing one element, any element, of each block $\lambda \in \Lambda$. 
    
    The sequence of indices $\bm{j}[\Lambda,\{j_\lambda\}]$ is constructed from $\bm{j}$ and $\Lambda$ by using the following prescription: take a partition $\Lambda \in \mathcal{Q}[\bm{j}]$ and choose a representative index $j_\lambda$ for each block $\lambda \in \Lambda$; then replace all the elements in $\bm{j}$ that belong to the same block $\lambda$ by the corresponding representative index $j_\lambda$. For example, consider $\bm{j}=(j_1,j_2,j_3,j_4)$ and $\Lambda = \{\{j_1,j_3\},\{j_2,j_4\}\}$. Let $\{j_3, j_2\}$ be the representative indices of the partition, then $\bm{j}[\Lambda,\{j_3,j_2\}]=(j_3,j_2,j_3,j_2)$.
    
    Notice that the number of representative indices is equal to the number of blocks in $\Lambda$. Let $|\lambda|$ denote the length (i.e. the number of elements) of each block $\lambda \in \Lambda$, then $\Lambda!=\prod_{\lambda\in\Lambda}|\lambda|!$.
    
    For two sequences of indices $\bm{g}=(g_1,\dots,g_m)$ and $\bm{h}=(h_1,\dots,h_m)$, $F(\bm{h},\bm{g})$ is defined as
    \begin{equation}
        F(\bm{h},\bm{g})=\sum_{\sigma \in S_m}\prod_{a=1}^m\delta_{h_a,\sigma(g_a)},
        \label{eq:permanent_deltas}
    \end{equation}
    where the $\{\sigma(g_a)\}$ stand for the components of $\sigma(\bm{g})$.   

    Combining Eqs.~\eqref{eq:lxe_sqz_cycle_U_polynomial} and~\eqref{eq:phases_integral_result}, and after a careful manipulation of all the Kronecker deltas involved (see Appendix~\ref{app:integral} for details), we can express the LXE as
    \begin{align}
        \begin{split}
            &\mathrm{LXE}\left(\bm{A}_{\mathrm{sqz}},\bm{A}_{\mathrm{sqz}};2N\right)=\binom{\frac{R}{2}+N-1}{N}^{-2}\\
            &\times\sum_{\bm{k}^{(c)}, \bm{l}^{(c)}}\,\prod_{a=1}^N\frac{1}{k_a!l_a!(2a)^{k_a+l_a}}\sum_{\Lambda\in \mathcal{Q}[\bm{j}]}\sum_{\mathrm{diff.}\{j_\lambda\}}\sum_{\sigma\in S_{2N}}\frac{1}{\Lambda!}\\
            &\times\sum_{\bm{\mu},\bm{\nu}}\zeta_{\bm{\mu}}\zeta_{\bm{\nu}}\,\mathcal{U}\left(\bm{j}_\Lambda\oplus\sigma(\bm{j}_\Lambda),\bm{\bar{\mu}}\,\vert\,\Omega_{\bm{k}}(\bm{j}_\Lambda)\oplus\Omega_{\bm{l}}\circ\sigma(\bm{j}_\Lambda),\bm{\bar{\nu}}\right),
        \end{split}
        \label{eq:lxe_sqz_after_integral}
    \end{align}
    where we introduce the short notation $\bm{j}_\Lambda\equiv \bm{j}[\Lambda,\{j_\lambda\}]$, and $\sigma\circ\tau$ indicates the composition of permutations. The subscript $\mathrm{diff.}\{j_\lambda\}$ indicates that the sum  must be done over representative indices taking \textit{different} values in $\{1,\dots,M\}$.

    \subsection{\label{sec:average_haar}Average over the Haar-random unitaries in the asymptotic limit}

    Following the result in Eq.~\eqref{eq:lxe_sqz_after_integral}, we can see that computing the average value of $\mathrm{LXE}\left(\bm{A}_{\mathrm{sqz}},\bm{A}_{\mathrm{sqz}};2N\right)$ over Haar-random unitaries amounts to calculating the expected value of the polynomial 
    \begin{equation}
        \sum_{\bm{\mu},\bm{\nu}}\zeta_{\bm{\mu}}\zeta_{\bm{\nu}}\,\mathcal{U}\left(\bm{j}_\Lambda\oplus\sigma(\bm{j}_\Lambda),\bm{\bar{\mu}}\,\vert\,\Omega_{\bm{k}}(\bm{j}_\Lambda)\oplus\Omega_{\bm{l}}\circ\sigma(\bm{j}_\Lambda),\bm{\bar{\nu}}\right).
        \label{eq:polynomial_to_integrate}
    \end{equation}
    This task can be tackled by using Weingarten Calculus~\cite{collins2022weingarten, collins2003moments}.

    We will use two key results concerning the Weingarten Calculus for the unitary group. The statements of these theorems, adapted to the notation we used throughout the article, can be found in Appendix~\ref{app:weingarten}. The first of these results can be found in Lemma 3 of Ref.~\cite{matsumoto2012general}, and allows us to write 
    \begin{align}
        \begin{split}
            &\mathbb{E}_{\bm{U}}\left[\,\mathcal{U}\left(\bm{j}_\Lambda\oplus\sigma(\bm{j}_\Lambda),\bm{\bar{\mu}}\,\vert\,\Omega_{\bm{k}}(\bm{j}_\Lambda)\oplus\Omega_{\bm{l}}\circ\sigma(\bm{j}_\Lambda),\bm{\bar{\nu}}\right)\right]=\\
            &\quad\sum_{\varrho,\tau\in S_{4N}}\Delta\left[\Omega_{\bm{k}}(\bm{j}_\Lambda)\oplus\Omega_{\bm{l}}\circ\sigma(\bm{j}_\Lambda)\,\vert\,\varrho\left(\bm{j}_\Lambda\oplus\sigma(\bm{j}_\Lambda)\right)\right]\\
            &\qquad\qquad\qquad\times\Delta\left[\bm{\bar{\nu}}\,\vert\,\tau(\bm{\bar{\mu}})\right]\mathrm{Wg}_{4N}(\varrho^{-1}\circ\tau;M),
        \end{split}
        \label{eq:average_monomial}
    \end{align}
    where $\Delta[\bm{g}\,\vert\,\bm{h}]=\prod_{a=1}^{m}\delta_{g_a,h_a}$ and $\mathrm{Wg}_m(\sigma;M)$ stands for the \textit{Weingarten function} for the unitary group $U(M)$~\cite{collins2022weingarten, collins2003moments}.     
    
    Combining Eqs.~\eqref{eq:polynomial_to_integrate} and~\eqref{eq:average_monomial} we obtain
    \begin{align}
        \begin{split}
            \sum_{\bm{\mu},\bm{\nu}}\zeta_{\bm{\mu}}\zeta_{\bm{\nu}}\,&\mathbb{E}_{\bm{U}}\left[\,\mathcal{U}\left(\bm{j}_\Lambda\oplus\sigma(\bm{j}_\Lambda),\bm{\bar{\mu}}\,\vert\,\Omega_{\bm{k}}(\bm{j}_\Lambda)\oplus\Omega_{\bm{l}}\circ\sigma(\bm{j}_\Lambda),\bm{\bar{\nu}}\right)\right]\\
            &=\sum_{\varrho \in \bar{S}_\Lambda}\sum_{\tau\in S_{4N}}\mathrm{Wg}_{4N}(\varrho^{-1}\circ\tau;M)f(\bm{\zeta},\tau)
            \, ,
        \end{split}
        \label{eq:polynomial_to_integrate_1}
    \end{align}
    where, $\bar{S}_\Lambda\subseteq S_{4N}$ depends on $\Lambda$, $\sigma$, $\bm{k}$ and $\bm{l}$, and is defined as
    \begin{equation}
        \bar{S}_\Lambda = \left\{\varrho\in S_{4N}\,\vert\,\varrho\left(\bm{j}_\Lambda\oplus\sigma(\bm{j}_\Lambda)\right)=\Omega_{\bm{k}}(\bm{j}_\Lambda)\oplus\Omega_{\bm{l}}\circ\sigma(\bm{j}_\Lambda)\right\}.
        \label{eq:set_bar_S}
    \end{equation}
    Let us note that $\bar{S}_\Lambda$ is non-empty, since $\Omega_{\bm{k}}\oplus\Omega_{\bm{l}} \in \bar{S}_\Lambda$. On the other hand,
    \begin{equation}
        f(\bm{\zeta},\tau)=\sum_{\bm{\mu}}\sum_{\substack{\bm{\nu} \text{ s.t. }\\\bm{\bar{\nu}}=\tau(\bm{\bar{\mu}})}}\!\!\zeta_{\bm{\mu}}\,\zeta_{\bm{\nu}}.
        \label{eq:f_function}
    \end{equation}
    
    Notice that $\mathrm{Wg}_{4N}(\varrho^{-1}\circ\tau;M)$ does not depend on the specific values of the indices in $\bm{j}_\Lambda$. Rather, through its dependence on the permutations $\varrho\in \bar{S}_\Lambda$, it is determined by the \textit{structure} of $\bm{j}_\Lambda$, i.e., by the number of different $\{j_\lambda\}$ and their positions within the sequence.

    Since we are interested in the expected value of the linear cross-entropy as $M\rightarrow\infty$, we can focus on the asymptotic behavior of the Weingarten function, which is the subject of the second key result that we will use in this section. According to Corollary 2.7 in Ref.~\cite{collins2006integration}, as $M\rightarrow \infty$ we can write
    \begin{align}
        \begin{split}
            \!\!\mathrm{Wg}_{4N}(\varrho^{-1}\circ\tau;M)&=\mathrm{Moeb}(\varrho^{-1}\circ\tau)\frac{1}{M^{4N+\|\varrho^{-1}\circ\tau\|}}\\
            &\quad+\mathcal{O}\left(M^{-4N-\|\varrho^{-1}\circ\tau\|-2}\right),
        \end{split}
        \label{eq:asymptotic_weingarten}
    \end{align}
    where $\mathrm{Moeb}(\sigma)$ is the Möbius function (see Appendix~\ref{app:weingarten} for its definition) and $\|\sigma\|$ denotes the minimum number of transpositions in which we can write $\sigma$.

    Since $\|\varrho^{-1}\circ\tau\|\geq 0$, we can recognize
    that the leading order term in the asymptotic expansion in Eq.~\eqref{eq:asymptotic_weingarten} decays at least as fast as $M^{-4N}$. Taking into account that when $\tau=\varrho$, $\|\varrho^{-1}\circ\tau\|=\|e_{4N}\|=0$, and $\mathrm{Moeb}(\varrho^{-1}\circ\tau)=\mathrm{Moeb}(e_{4N})=1$, with $e_{4N}$ the identity permutation in $S_{4N}$, we can write the asymptotic form of Eq.~\eqref{eq:polynomial_to_integrate_1} as 
    \begin{align}
        \begin{split}
            \sum_{\bm{\mu},\bm{\nu}}\zeta_{\bm{\mu}}\zeta_{\bm{\nu}}\,&\mathbb{E}_{\bm{U}}\left[\,\mathcal{U}\left(\bm{j}_\Lambda\oplus\sigma(\bm{j}_\Lambda),\bm{\bar{\mu}}\,\vert\,\Omega_{\bm{k}}(\bm{j}_\Lambda)\oplus\Omega_{\bm{l}}\circ\sigma(\bm{j}_\Lambda),\bm{\bar{\nu}}\right)\right]\\
            &=\sum_{\varrho \in \bar{S}_\Lambda}f(\bm{\zeta},\varrho)M^{-4N}+\mathcal{O}\left(M^{-4N-1}\right).
        \end{split}
        \label{eq:polynomial_to_integrate_2}
    \end{align}

    Bringing together the results in Eqs.~\eqref{eq:polynomial_to_integrate_2} and~\eqref{eq:lxe_sqz_after_integral} we obtain the following expression for the average value of $\mathrm{LXE}\left(\bm{A}_{\mathrm{sqz}},\bm{A}_{\mathrm{sqz}};2N\right)$:
    \begin{align}
        \begin{split}
            &\mathbb{E}_{\bm{U}}\left[\mathrm{LXE}\left(\bm{A}_{\mathrm{sqz}},\bm{A}_{\mathrm{sqz}};2N\right)\right]=\binom{\frac{R}{2}+N-1}{N}^{-2}\\
            &\times\sum_{\bm{k}^{(c)}, \bm{l}^{(c)}}\,\prod_{a=1}^N\frac{1}{k_a!l_a!(2a)^{k_a+l_a}}\sum_{\Lambda\in \mathcal{Q}[\bm{j}]}\sum_{\sigma\in S_{2N}}\frac{1}{\Lambda!}\\
            &\times\sum_{\mathrm{diff.}\{j_\lambda\}}\left(\sum_{\varrho \in \bar{S}_\Lambda}f(\bm{\zeta},\varrho)M^{-4N}+\mathcal{O}\left(M^{-4N-1}\right)\right),
        \end{split}
        \label{eq:lxe_sqz_after_average}
    \end{align}
    where we were able to move the sum over different $\{j_\lambda\}$ past the sum over $\sigma \in S_{2N}$ and the term $1/\Lambda!$ because neither of them depends on the specific values that the $\{j_\lambda\}$ take.
    
    In fact, there are no longer any terms in Eq.~\eqref{eq:lxe_sqz_after_average} that depend on the specific values of these indices. Indeed, as we argued before, the sum over permutations in $\bar{S}_\Lambda$, as well as the terms of order $\mathcal{O}(M^{-4N-1})$, will depend only on the structure of $\bm{j}_\Lambda$, which is determined by $\Lambda$. Consequently, we can make the replacement
    \begin{equation}
        \sum_{\text{diff.}\,\{r_\lambda\}}\rightarrow \frac{M!}{(M-N_\Lambda)!} = M^{N_\Lambda}+\mathcal{O}(M^{N_\Lambda-1}),
        \label{eq:sum_different_representatives}
    \end{equation}
    where $N_\Lambda$ is the number of different $\{j_\lambda\}$, i.e., the number of blocks in $\Lambda$. This allows us to write   
    \begin{align}
        \begin{split}
            \sum_{\mathrm{diff.}\{j_\lambda\}}&\left(\sum_{\varrho \in \bar{S}_\Lambda}f(\bm{\zeta},\varrho)M^{-4N}+\mathcal{O}\left(M^{-4N-1}\right)\right)=\\
            &\sum_{\varrho \in \bar{S}_\Lambda}f(\bm{\zeta},\varrho)M^{-4N+N_\Lambda}+\mathcal{O}\left(M^{-4N+N_\Lambda-1}\right).
        \end{split}
        \label{eq:asymptotic_sum_different_indices}
    \end{align}

    Given that $1\leq N_\Lambda \leq 2N$, with $N_\Lambda = 2N$ for $\Lambda=\{\{j_1\},\dots,\{j_{2N}\}\}$, we can see that the leading term in the asymptotic expansion of the average LXE decays as $M^{-2N}$. On this account, and keeping in mind that $\Lambda!=1$ for the leading term, we may write Eq.~\eqref{eq:lxe_sqz_after_average} as
    \begin{align}
        \begin{split}
            &\mathbb{E}_{\bm{U}}\left[\mathrm{LXE}\left(\bm{A}_{\mathrm{sqz}},\bm{A}_{\mathrm{sqz}};2N\right)\right]=\\
            &\binom{\frac{R}{2}+N-1}{N}^{-2}
            \sum_{\bm{k}^{(c)}, \bm{l}^{(c)}}\,\prod_{a=1}^N\frac{1}{k_a!l_a!(2a)^{k_a+l_a}}\\
            &\times\left(\sum_{\sigma\in S_{2N}}\sum_{\varrho \in \bar{S}}f(\bm{\zeta},\varrho)M^{-2N}+\mathcal{O}\left(M^{-2N-1}\right)\right),
        \end{split}
        \label{eq:average_lxe_sqz_asymptotic}
    \end{align}
    where
    \begin{equation}
        \bar{S} = \left\{\varrho\in S_{4N}\,\vert\,\varrho\left(\bm{j}\oplus\sigma(\bm{j})\right)=\Omega_{\bm{k}}(\bm{j})\oplus\Omega_{\bm{l}}\circ\sigma(\bm{j})\right\}.
        \label{eq:set_bar_S_final}
    \end{equation}

\subsection{Final expression of the ideal LXE score}\label{sec:final_form_lxe_score}

    Recall that in the definition of the LXE score given in Eq.~\eqref{eq:lxe_score_definition} there is one extra term that depends on the number of modes $M$, namely, the normalization factor
    \[\binom{M+2N-1}{2N}=\frac{\Gamma(M+2N)}{\Gamma(2N+1)\Gamma(M)}.\]
    For a fixed $2N$, $\Gamma(M+2N)/\Gamma(M) \sim M^{2N}$ as $M\rightarrow \infty$, so we can write the following asymptotic expression for the binomial coefficient:
    \begin{equation}
        \binom{M+2N-1}{2N}\sim \frac{M^{2N}}{(2N)!}.
        \label{eq:asymptotic_binomial_coeff}
    \end{equation}

    On these grounds, and noticing that 
    \[\binom{\frac{R}{2}+N-1}{N}=\frac{1}{2^NN!}\frac{(R+2N-2)!!}{(R-2)!!},\]
    we are now able to write
    \begin{align}
        \begin{split}
            &s(\bm{A}_{\mathrm{sqz}};2N)=\lim_{M\rightarrow\infty} \binom{M+2N-1}{2N}\\
            &\qquad\qquad\qquad\qquad\times\mathbb{E}_{\bm{U}}\left[\mathrm{LXE}\left(\bm{A}_{\mathrm{sqz}},\bm{A}_{\mathrm{sqz}};2N\right)\right]\\
            &=\frac{4^{N}(N!)^2}{(2N)!}\left[\frac{(R-2)!!}{(R+2N-2)!!}\right]^2
            \\
            &\quad\times\sum_{\bm{k}^{(c)}, \bm{l}^{(c)}}\,\prod_{a=1}^N\frac{1}{k_a!l_a!(2a)^{k_a+l_a}}\sum_{\sigma\in S_{2N}}\sum_{\varrho \in \bar{S}}f(\bm{\zeta},\varrho).
        \end{split}
        \label{eq:lxe_sqz_score_gen}
    \end{align}

    At this point we may take into account that the entries of the diagonal matrix $\bm{\zeta}$ satisfy $\zeta_a=1$ for $1\leq a\leq R$, and $\zeta_a = 0$ otherwise. This allows us to see that
    \begin{equation}
        f(\bm{\zeta},\varrho)=\sum_{\bm{\mu}\in [R]^{2N}} \!\!\!\sum_{\substack{\bm{\nu}\in [R]^{2N}\\ \text{s.t }\bm{\bar{\nu}}=\varrho(\bm{\bar{\mu}})}}\!\!\!1,
        \label{eq:series_in_zeta}
    \end{equation}
    where the symbol $\bm{\mu}\in[R]^{2N}$ indicates that each $\mu_a$, $a=1,\dots, 2N$, takes values in the set $[R]=\{1,\dots,R\}$. According to Lemma 6 in Ref.~\cite{matsumoto2012general}, the sum in the last expression satisfies the relation
    \begin{equation}
        \sum_{\bm{\mu}\in [R]^{2N}} \!\!\!\sum_{\substack{\bm{\nu}\in [R]^{2N}\\ \text{s.t }\bm{\bar{\nu}}=\varrho(\bm{\bar{\mu}})}}\!\!\!1=R^{\,\ell(\varrho)},
        \label{eq:coset_type_sum}
    \end{equation}
    where $\ell(\varrho)$ is the \textit{length of the coset-type} of $\varrho$~\cite{matsumoto2012general, collins2022weingarten}. 

    In order to understand the meaning of $\ell(\varrho)$, let us state the definition of the \textit{coset-type} of a permutation given in Ref.~\cite{collins2022weingarten} (see Fig.~\ref{fig:definition_of_coset_type} for an example). Let $\sigma\in S_{2m}$. We can assign to this permutation an \textit{undirected graph} $\bm{\Gamma}(\sigma)$, whose vertices are $1,2,\dots,2m$, and whose edges are defined by  
    $\{(2k-1,2k)\,\vert\,k\in\{1,\dots,m\}\}$ and $\{(\sigma(2k-1),\sigma(2k))\,\vert\,k\in\{1,\dots,m\}\}$. Note that there are a total of $2m$ edges, and each vertex lies in exactly two edges. This implies that the \textit{connected components} of the graph have an even number of edges~\cite{collins2022weingarten}. Call the lengths of such connected components $2\eta_1,2\eta_2\dots,2\eta_l$ and arrange them so that $\eta_1\geq\eta_2\geq\cdots\geq\eta_l\geq 1$. Then, $\eta(\sigma)=(\eta_1,\eta_2,\dots,\eta_l)$ is an \textit{integer partition} of $m$, and is called the coset-type of $\sigma$. The length of the coset-type of $\sigma$ will then be the length of the partition $\eta(\sigma)$, or equivalently, the number of connected components in the graph $\bm{\Gamma}(\sigma)$.

    \begin{figure}[!t]
        {
          \includegraphics[width=0.8\columnwidth]{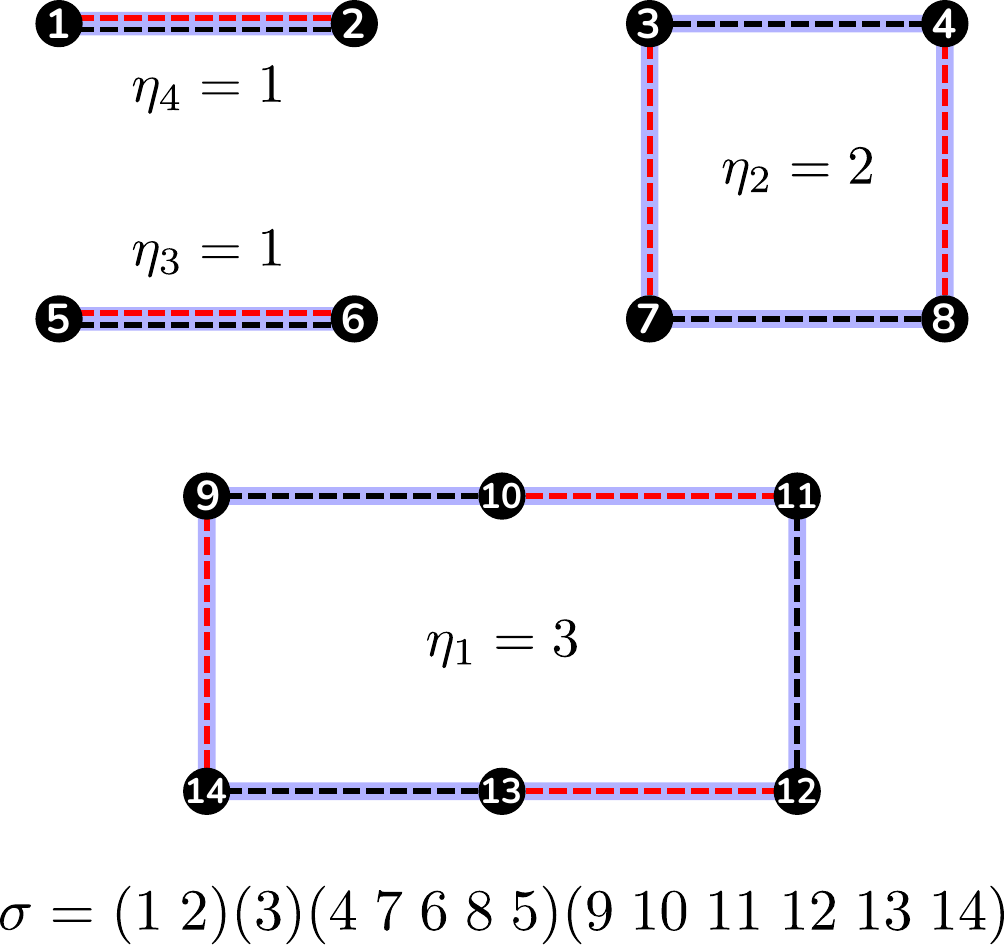}%
        }
        \caption{Illustration of the definition of the coset-type of $\sigma$ for $\sigma = (1\,2)(3)(4\,7\,6\,8\,5)(9\,10\,11\,12\,13\,14)$. Notice that $\sigma$ is an element of $S_{2m}$ with $m=7$.
        The vertices of the undirected graph $\bm{\Gamma}(\sigma)$ are represented by numbered black circles. The edges of the form $(2k-1,2k)$, that correspond to the set $\{(1,2), (3,4), (5,6), (7,8), (9,10), (11,12), (12,14)\}$, are drawn with black dashed lines. The edges of the form $(\sigma(2k-1),\sigma(2k))$ are obtained after the application of $\sigma$ over the pairs $(2k,2k-1)$ and correspond to the set $\{(2,1),(3,7),(4,8),(6,5),(10,11),(12,13),(14,9)\}$. These edges are shown as red dashed lines. 
        Each connected component of the graph is highlighted with a light blue, thick line. As can be seen, there are a total of four connected components, all of them cycles, with lengths $2\eta_b$, $b=1,2,3,4$. This implies that the coset-type of $\sigma$ is $\eta(\sigma)=(3,2,1,1)$, and its length is $\ell(\sigma)=4$. We can readily check that $\eta(\sigma)$ is an integer partition of $m=7$. }
        \label{fig:definition_of_coset_type}
    \end{figure}
    
    According to the previous definition, we can see that $1\leq \ell(\varrho) \leq 2N$, which in turn allows us to write
    \begin{equation}
        \sum_{\varrho \in \bar{S}}f(\bm{\zeta},\varrho)=\sum_{\varrho \in \bar{S}}R^{\,\ell(\varrho)}=\sum_{\ell = 1}^{2N} b_\ell R^\ell,
        \label{eq:coset_type_sum_2}
    \end{equation}
    where 
    \begin{align} b_\ell \equiv b_\ell\left(\bm{j}\oplus\sigma(\bm{j}),\Omega_{\bm{k}}(\bm{j})\oplus\Omega_{\bm{l}}\circ\sigma(\bm{j})\right)\label{eq:definebell}
    \end{align} is the number of permutations $\varrho\in S_{4N}$ that take $\bm{j}\oplus\sigma(\bm{j})$ into $\Omega_{\bm{k}}(\bm{j})\oplus\Omega_{\bm{l}}\circ\sigma(\bm{j})$, and whose coset-type has length $\ell$ (or, equivalently, whose associated graphs $\bm{\Gamma}(\varrho)$ have $\ell$ connected components).
    
    Combining Eqs.~\eqref{eq:coset_type_sum_2} and~\eqref{eq:lxe_sqz_score_gen}, and defining (see Eq.~\eqref{eq:coefficients_sqz_lxe_score})
    \begin{align}
        \begin{split}
            c_\ell = \!\!\!\sum_{\bm{k}^{(c)}, \bm{l}^{(c)}}&\,\prod_{a=1}^N\frac{1}{k_a!l_a!(2a)^{k_a+l_a}}\\
            &\!\times\!\!\!\sum_{\sigma\in S_{2N}}b_\ell\left(\bm{j}\oplus\sigma(\bm{j}),\Omega_{\bm{k}}(\bm{j})\oplus\Omega_{\bm{l}}\circ\sigma(\bm{j})\right),
        \end{split}
    \end{align}
    we finally obtain the result stated in Eq.~\eqref{eq:main_result_lxe_sqz}, at the very beginning of this section:  
    \begin{equation*}
        s(\bm{A}_\mathrm{sqz};2N)=\frac{4^{N}(N!)^2}{(2N)!}\left[\frac{(R-2)!!}{(R+2N-2)!!}\right]^2\sum_{\ell=1}^{2N}c_\ell R^{\ell}.
    \end{equation*}

    \section{\label{sec:different_squeezing} LXE score for the ideal squeezed state model with different squeezing parameters}
    
    We can readily generalize the methods presented in the last section to determine the ideal score for GBS setups that use input squeezed states with different squeezing parameters.
    
    Suppose that the first $R$ modes of the interferometer receive single-mode squeezed states with squeezing parameters $\{r_k\}$, $k\in\{1,\dots, R\}$, while the remaining $M-R$ modes receive the vacuum state. The matrix $\bm{A}_{\mathrm{sqz}}'$ describing this setup can be written as (see Appendix~\ref{app:misc})
    \begin{equation}
        \bm{A}_{\mathrm{sqz}}'=\bm{V}\oplus\bm{V}^*,
        \label{eq:sqz_model_general}
    \end{equation}
    where $\bm{V}=\bm{U}\bm{\zeta}'\bm{U}^{\mathrm{T}}$, with $\bm{\zeta}'=\tanh(\bm{r})\oplus\bm{0}_{M-R}$ and $\tanh(\bm{r})=\mathrm{diag}[\tanh(r_1),\dots,\tanh(r_R)]$.

    Models $\bm{A}_{\mathrm{sqz}}'$ and $\bm{A}_{\mathrm{sqz}}$ (which was defined in Eq.~\eqref{eq:sqz_model_definition}) differ only in the definition of the diagonal matrix $\bm{\zeta}'$. This suggests that we can compute the ideal score associated to $\bm{A}_{\mathrm{sqz}}'$ by making the replacement 
    \[\tanh(r)\bm{\zeta} \longrightarrow \bm{\zeta}'=\tanh(\bm{r})\oplus\bm{0}_{M-R},\] 
    while keeping all the procedures shown in Sec.~\ref{sec:lxe_score_sqz} mostly the same. Following this strategy, we can see that the change from $\bm{A}_{\mathrm{sqz}}$ to $\bm{A}_{\mathrm{sqz}}'$ will become manifest in two terms: the factor $\bar{\mathcal{D}}(\bm{A}_{\mathrm{sqz}}', \bm{A}_{\mathrm{sqz}}', 2N)$ defined
    in Eq.~\eqref{eq:calDdefine}, and the function $f(\bm{\zeta}',\tau)$ defined in Eq.~\eqref{eq:f_function}.

    To compute $\bar{\mathcal{D}}(\bm{A}_{\mathrm{sqz}}', \bm{A}_{\mathrm{sqz}}', 2N)$, we note that, according to Eqs.~\eqref{eq:derivative_q_function} and~\eqref{eq:der_q_function_sqz},
    \begin{equation}
        \frac{1}{(2N)!}\left.\frac{\partial^{2N}q(\alpha,\bm{\phi},\bm{A}_{\mathrm{sqz}}')}{\partial\alpha^{2N}}\right|_{\alpha=0}\!\!\!=\sum_{\bm{k}^{(c)}}\,\prod_{a=1}^N\frac{1}{k_a!a^{k_a}}\prod_{a=1}^Nu_a^{k_a},
        \label{eq:der_q_function_sqz_ds}
    \end{equation}
    where $u_a=\frac{1}{2}\mathrm{tr}\left[(\bm{D}^2\bm{V}\bm{D}^2\bm{V}^*)^a\right]$. When $\bm{\phi}=\bm{0}$, we have $\bm{D}=\mathbb{I}_M$, and so
    \begin{equation*}
        u_a=\frac{1}{2}\mathrm{tr}\left[(\bm{V}\bm{V}^*)^a\right]=\frac{1}{2}\mathrm{tr}\left[(\bm{\zeta}')^{2a}\right]=\frac{1}{2}\sum_{k=1}^R\tanh^{2a}(r_k).
    \end{equation*}
    By defining 
    \begin{align}
        \varepsilon_{a}=\frac{1}{R}\sum_{k=1}^R\tanh^{a}(r_k),
        \label{eq:moment_tanh_sqz}
    \end{align}
    where we note that $0<|\varepsilon_a|<1$ for all $a$ and $R$, we can write $u_a = \frac{1}{2}\varepsilon_{2a}R$, and
    \begin{align}
        \begin{split}
            \frac{1}{(2N)!}\left.\frac{\partial^{2N}q(\alpha,\bm{0},\bm{A}_{\mathrm{sqz}}')}{\partial\alpha^{2N}}\right|_{\alpha=0}
            &=\sum_{\bm{k}^{(c)}}\,\prod_{a=1}^N\frac{\varepsilon_{2a}^{k_a}R^{k_a}}{k_a!(2a)^{k_a}}\\
            &=\sum_{\ell=1}^Nd'_\ell R^\ell.
        \end{split}
        \label{eq:der_q_function_sqz_ds_2}
    \end{align}
    The last equality above is obtained by noticing
    that the non-negative integer components of
    $\bm{k}=(k_1,\dots,k_N)$ satisfy the
    constraint $k_1+2k_2+\cdots+Nk_N=N$. For each $\bm{k}$, the sum $\sum_{\bm{k}^{(c)}}$ will include a term proportional to $R^{k_1+\dots+k_N}$, which can be at least $R$ (for $k_N=1$ and $k_a = 0\;\forall a\neq N$) and at most $R^N$ (for $k_1=N$ and $k_a = 0\;\forall a\neq 1$).

    From Eq.~\eqref{eq:der_q_function_sqz_ds_2}, 
    and using Eq.~\eqref{eq:coefficient_lxe_2},
    we may write
    \begin{equation}
        \bar{\mathcal{D}}(\bm{A}_{\mathrm{sqz}}', \bm{A}_{\mathrm{sqz}}', 2N)=\left(\sum_{\ell=1}^Nd'_\ell R^\ell\right)^{-2}.
        \label{eq:coeff_lxe_sqz_ds}
    \end{equation}
    
    This expression leads to the following modified form of Eq.~\eqref{eq:lxe_sqz_score_gen}:
    \begin{align}
        \begin{split}
            &s(\bm{A}_{\mathrm{sqz}}';2N)=\frac{1}{(2N)!}\left(\sum_{\ell=1}^Nd'_\ell R^\ell\right)^{-2}
            \\
            &\quad\times\sum_{\bm{k}^{(c)}, \bm{l}^{(c)}}\,\prod_{a=1}^N\frac{1}{k_a!l_a!(2a)^{k_a+l_a}}\sum_{\sigma\in S_{2N}}\sum_{\varrho \in \bar{S}}f(\bm{\zeta}',\varrho),
        \end{split}
        \label{eq:lxe_sqz_score_gen_ds}
    \end{align}
    where $f(\bm{\zeta}',\varrho)$, is now computed as 
    \begin{align}
        \begin{split}
            f(\bm{\zeta}',\varrho)&=\sum_{\bm{\mu}}\sum_{\substack{\bm{\nu} \text{ s.t. }\\\bm{\bar{\nu}}=\varrho(\bm{\bar{\mu}})}}\!\!\zeta'_{\bm{\mu}}\,\zeta'_{\bm{\nu}}\\
            &=\sum_{\bm{\mu}\in [R]^{2N}} \!\!\!\sum_{\substack{\bm{\nu}\in [R]^{2N}\\ \text{s.t }\bm{\bar{\nu}}=\varrho(\bm{\bar{\mu}})}}\!\prod_{a=1}^{2N}\tanh(r_{\mu_a})\tanh(r_{\nu_a}).
        \end{split}
        \label{eq:f_function_ds}
    \end{align}

    In Appendix~\ref{app:general_sqz_score}, we prove that Eq.~\eqref{eq:f_function_ds} reduces to 
    \begin{equation}
        f(\bm{\zeta}',\varrho)=\left(\prod_{b\in\eta(\varrho)}\varepsilon_{2b}\right)R^{\ell(\varrho)},
        \label{eq:f_function_ds_2}
    \end{equation}
    where, let us remind the reader, $\eta(\varrho)$ stands for the coset-type of $\varrho$, and $\ell(\varrho)$ is the length of $\eta(\varrho)$ (or, equivalently, the number of connected components in the undirected graph $\bm{\Gamma}(\varrho)$). 

    If we recall that $1\leq\ell(\varrho)\leq 2N$, we can define the subset $\bar{S}_{\ell}\subset S_{4N}$ as
    \begin{align}
        \begin{split}
            \bar{S}_{\ell}=&\{\varrho\in S_{4N}\,\vert\,\varrho\left(\bm{j}\oplus\sigma(\bm{j})\right)=\Omega_{\bm{k}}(\bm{j})\oplus\Omega_{\bm{l}}\circ\sigma(\bm{j})\\
            &\quad\text{ and }\bm{\Gamma}(\varrho)\text{ has }\ell\text{ connected components}\},
        \end{split}
        \label{eq:s_tilde_l_subset}
    \end{align}
    and write
    \begin{equation}
        \sum_{\varrho\in \bar{S}}f(\bm{\zeta}',\varrho)=\sum_{\ell=1}^{2N}\left(\sum_{\varrho\in \bar{S}_{\ell}}\prod_{b\in\eta(\varrho)}\varepsilon_{2b}\right)R^{\ell},
        \label{eq:reorganizing_sums_ds}
    \end{equation}
    thus verifying that $\sum_{\varrho\in \bar{S}}f(\bm{\zeta}',\varrho)$ remains a polynomial of degree $2N$ in $R$.

    Combining Eqs.~\eqref{eq:lxe_sqz_score_gen_ds} and~\eqref{eq:reorganizing_sums_ds} we obtain the final expression for the ideal LXE score for the model $\bm{A}_{\mathrm{sqz}}'$:
    \begin{align}
        \begin{split}
            &s(\bm{A}_{\mathrm{sqz}}';2N)=\frac{1}{(2N)!}\left(\sum_{\ell=1}^Nd'_\ell R^\ell\right)^{-2}\sum_{\ell=1}^{2N}c'_{\ell}R^{\ell},
        \end{split}
        \label{eq:lxe_sqz_score_different_squeezing}
    \end{align}
    where 
    \begin{equation}
        c'_\ell =\!\!\! \sum_{\bm{k}^{(c)}, \bm{l}^{(c)}}\,\prod_{a=1}^N\frac{1}{k_a!l_a!(2a)^{k_a+l_a}}\!\!\sum_{\sigma\in S_{2N}}\sum_{\varrho\in \bar{S}_{\ell}}\prod_{b\in\eta(\varrho)}\varepsilon_{2b}.
        \label{eq:c_tilde_coeffs}
    \end{equation}

    Again, in Appendix~\ref{app:general_sqz_score}, we
    prove that if $r_k=r$ for all $k\in\{1,\dots, R\}$, $s(\bm{A}_{\mathrm{sqz}}';2N)$ becomes 
    independent of $r$ and we recover
    $s(\bm{A}_{\mathrm{sqz}};2N)$ identically,
    showing the consistency between
    Eqs.~\eqref{eq:main_result_lxe_sqz} and~\eqref{eq:lxe_sqz_score_different_squeezing}.
    
    The main difference between the scores
    $s(\bm{A}_{\mathrm{sqz}};2N)$ and
    $s(\bm{A}_{\mathrm{sqz}}';2N)$ is that the
    latter shows an explicit dependence on the
    input squeezing parameters. In order to properly account for this dependence, we need not only determine the number of connected components in $\bm{\Gamma}(\varrho)$, but also the lengths of all connected components in the graph.
    
    Fig.~\ref{fig:lxe_ideal_b}, shows the scores $s(\bm{A}_{\mathrm{sqz}};2N)$ and $s(\bm{A}'_{\mathrm{sqz}};2N)$ as functions of $2N$ for $R\in\{10, 20, 50, 150\}$, and for different choices of the input squeezing parameters. Brute force computation of the coefficients $\{c_\ell\}$ and $\{c_\ell'\}$ for values of $2N>8$ proves to be difficult due to the increasing number of elements in the set $\bar{S}$. In view of this, we only show the exact computation of the scores for $2\leq2N\leq8$. For each value of $R$, we randomly selected four sets of input squeezing parameters, making sure that every set had a different mean number of photons $\bar{n}=\frac{1}{R}\sum_{k=1}^{R}\sinh^2(r_k)$. For $R=10$, we obtained $\bar{n}\in \{0.4,0.8,2.2,6.2\}$; for $R=20$, $\bar{n}\in \{0.4,1.1,2.3,5.2\}$; for $R=50$, $\bar{n}\in \{0.4,1.0,2.4,5.4\}$; and for $R=150$, $\bar{n}\in \{0.4,1.1,2.5,5.6\}$. We computed the coefficients $\{c_\ell\}$ and $\{c_\ell'\}$ using the methods in the libraries 
    \texttt{SymPy}~\cite{meurer2017sympy} and \texttt{graph-theory}~\cite{madsen2024graph}.
    
    As can be seen, when we increase the value of $\bar{n}$, $s(\bm{A}_{\mathrm{sqz}}';2N)$ approaches the score that would be obtained by a model that uses input states with the same squeezing parameter. Interestingly, $s(\bm{A}_{\mathrm{sqz}}';2N)$ appears to approach $s(\bm{A}_{\mathrm{sqz}};2N)$ more rapidly with increasing $R$. This suggests that, in the limit of large $R$, $s(\bm{A}_{\mathrm{sqz}}';2N)$ will be approximately equal to the score of an ideal model with uniform squeezing, no matter the choice of the input squeezing parameters. This behavior nicely connects with the results of the next section.

    \begin{figure*}[t!]
        \centering
        \subfloat[]{\label{fig:lxe_ideal_a}\includegraphics[width=0.49\textwidth]{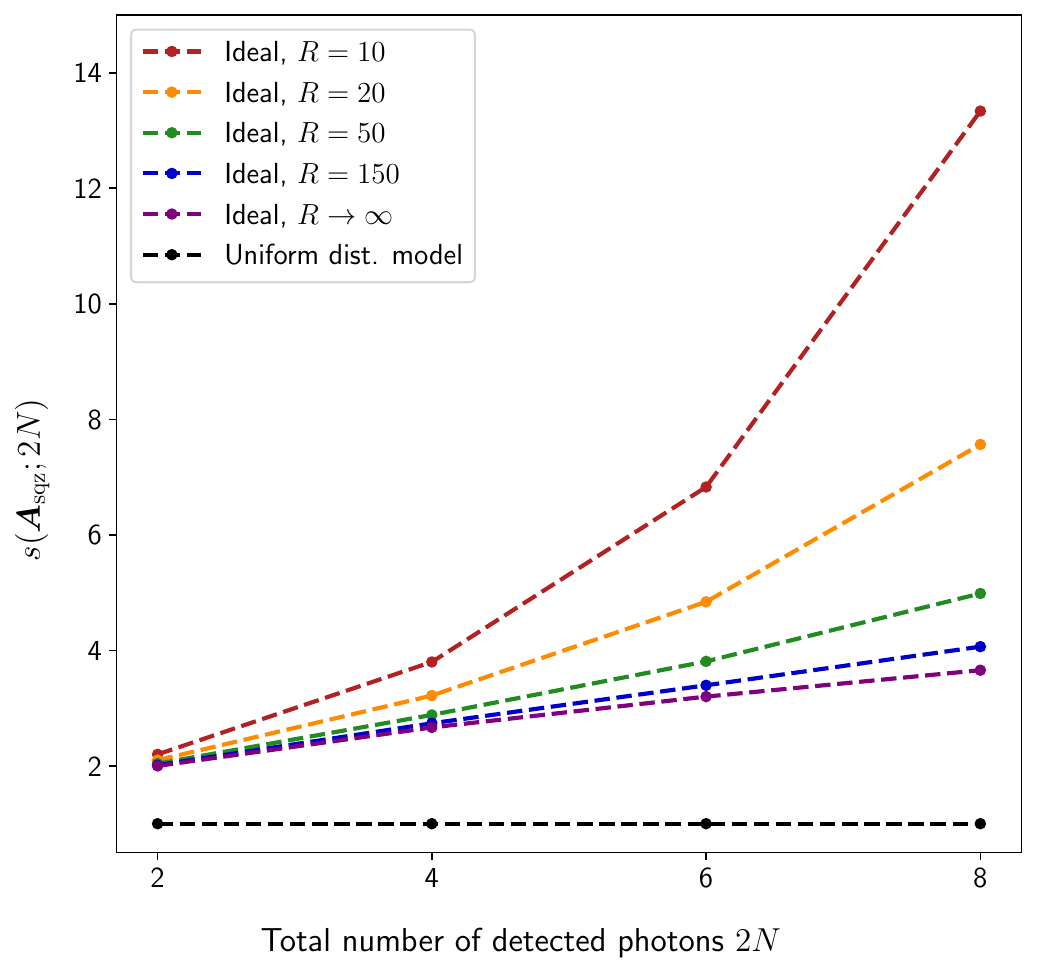}}\hfill
        \subfloat[]{\label{fig:lxe_ideal_b}\includegraphics[width=0.49\textwidth]{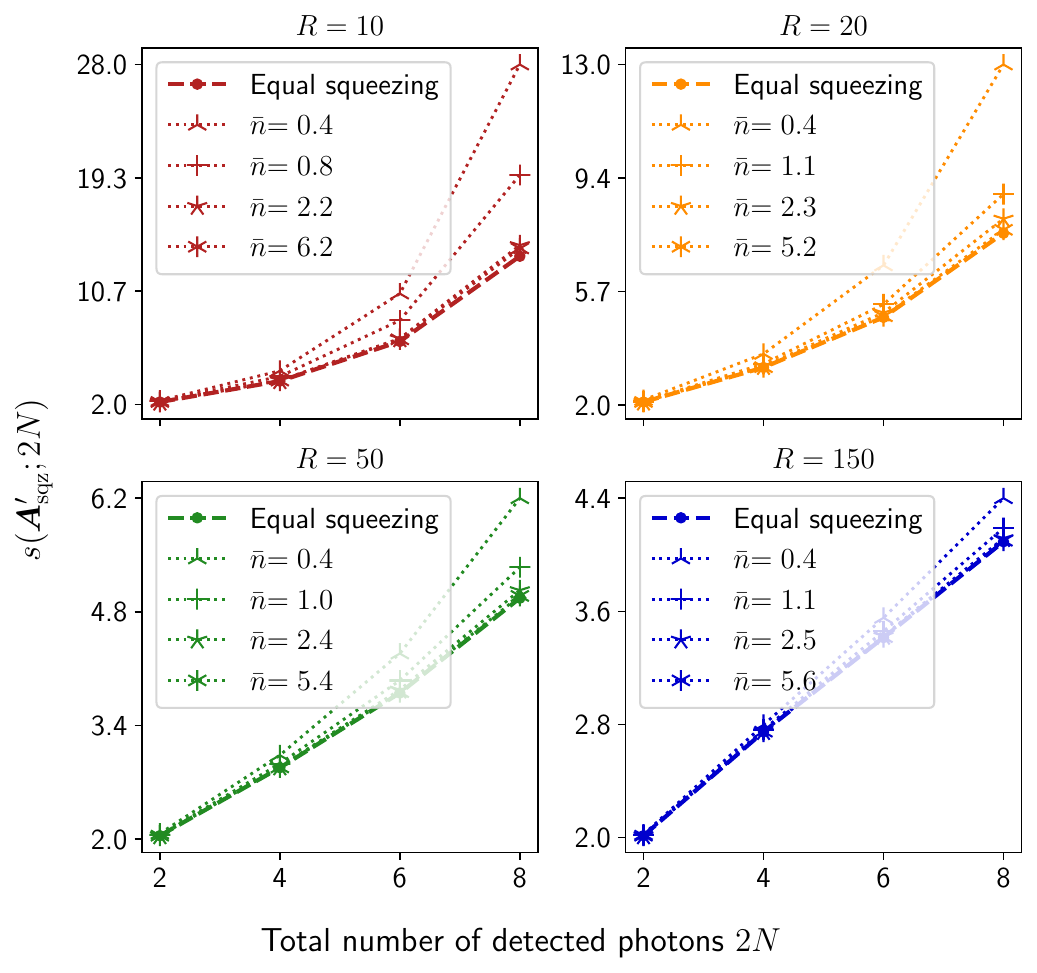}}
        \caption{(a) Ideal LXE score as a function of the total number of detected photons, $2N$. The purple dashed line shows the score for an ideal squeezed state model with no vacuum input modes (as 
        indicated by the label $R\rightarrow \infty$). The black dashed line shows the score for the model $\bm{A}_{\mathrm{uni}}$, which leads to a uniform distribution over the sample space. The score for this model is identically 1 for every $N$. The remaining lines correspond to the ideal squeezed state model $\bm{A}_{\mathrm{sqz}}$, where the first $R$ modes receive identical single-mode squeezed states and the remaining $M-R$ modes receive the vacuum state. The red dashed line corresponds to $R=10$, the orange dashed line to $R=20$, the green dashed line to $R=50$, and the blue dashed line to $R=150$. (b) Ideal LXE score as a function of the total number of detected photons, $2N$, for GBS setups using input states with different squeezing parameters. The dotted lines were obtained by randomly selecting the squeezing parameters of the $R$ input squeezed states. For each value of $R$ there are four sets of squeezing parameters, each of them identified by their corresponding mean number of photons $\bar{n}=\frac{1}{R}\sum_{k=1}^{R}\sinh^2(r_k)$. The dashed lines show the scores for the case of identical squeezing in the input modes. The red lines correspond to $R=10$, the orange lines to $R=20$, the green lines to $R=50$, and the blue lines to $R=150$.}
        \label{fig:lxe_score_2N}
    \end{figure*}

    \section{\label{sec:no_vacuum}LXE score for the ideal squeezed state model without vacuum input modes}

    An interesting instance of the ideal squeezed state model is obtained when $R=M$, i.e., when all the input modes in the interferometer receive single-mode squeezed states. The computation of the ideal score for this type of setup can be done by taking the limit as $R\rightarrow \infty$ of Eqs.~\eqref{eq:main_result_lxe_sqz} and~\eqref{eq:lxe_sqz_score_different_squeezing}.

    Let us consider first the case of $s(\bm{A}_{\mathrm{sqz}};2N)$, given in Eq.~\eqref{eq:main_result_lxe_sqz}. For large $R$,
    \[\left[\frac{(R-2)!!}{(R+2N-2)!!}\right]^2\sim R^{-2N}.\]
    Thus, when taking the limit, the only non-vanishing contribution from the polynomial $\sum_{\ell}c_\ell R^\ell$ will be associated to the term $c_{2N}R^{2N}$. This allows us to write the ideal score as
    \begin{equation}
        s(\tilde{\bm{A}}_{\mathrm{sqz}};2N)=\lim_{R\rightarrow\infty}s(\bm{A}_{\mathrm{sqz}};2N) = 
        \frac{4^{N}(N!)^2}{(2N)!}\,c_{2N}.
        \label{eq:ideal_score_no_vacuum_ident}
    \end{equation}

    In the case of $s(\bm{A}_{\mathrm{sqz}}';2N)$, we have from Eq.~\eqref{eq:lxe_sqz_score_different_squeezing}
    \[\left(\sum_{\ell=1}^Nd'_\ell R^\ell\right)^{-2}\sim (d_{N}')^{-2}R^{-2N},\]
    when $R\rightarrow\infty$, which implies that the only non-vanishing contribution from the polynomial $\sum_{\ell}c_\ell' R^\ell$ will be associated to the term $c_{2N}'R^{2N}$. Then,
    \begin{equation}
        s(\bm{A}'_{\mathrm{sqz}};2N)\sim 
        \frac{1}{(2N)!}\,\frac{c'_{2N}}{(d'_{N})^2}.
        \label{eq:ideal_score_no_vacuum_diff}
    \end{equation}

    We can see in Eq.\eqref{eq:der_q_function_sqz_ds_2} that the only contribution to $d'_N$ comes from a vector $\bm{k}$ satisfying $k_1=N$ and $k_a = 0$ otherwise. This means that 
    \begin{equation}
        d_N'=\frac{\varepsilon_2^{k_1}}{k_1!2^{k_1}}=\frac{\varepsilon_2^N}{2^NN!}.
        \label{eq:coefficient_d_N'}
    \end{equation}

    Following Eq.~\eqref{eq:c_tilde_coeffs}, we see that $c_{2N}'$ is determined by computing the coset-type of the permutations $\varrho\in \bar{S}_{2N}$. By definition, every $\varrho\in \bar{S}_{2N}$ has an associated graph $\bm{\Gamma}(\varrho)$ with $2N$ connected components, which implies that every connected component in $\bm{\Gamma}(\varrho)$ has length two. This means that all $\varrho\in \bar{S}_{2N}$ have coset-type $\eta(\varrho)=(1,\dots,1)$, where 1 appears a total of $2N$ times. Consequently,
    \begin{equation}
        \sum_{\varrho\in \bar{S}_{2N}}\prod_{b\in\eta(\varrho)}\varepsilon_{2b}=\sum_{\varrho\in \bar{S}_{2N}}\prod_{a=1}^{2N}\varepsilon_{2}=\varepsilon_2^{2N}|\bar{S}_{2N}|.
        \label{eq:coefficient_c_N_cont}
    \end{equation}
    Recalling Eq.~\eqref{eq:s_tilde_l_subset} and the definition of the coefficients $b_\ell$ 
    given 
    in Eq.~\eqref{eq:definebell},
    we can verify that
    \begin{equation}
        |\bar{S}_{2N}| = b_{2N}\left(\bm{j}\oplus\sigma(\bm{j}),\Omega_{\bm{k}}(\bm{j})\oplus\Omega_{\bm{l}}\circ\sigma(\bm{j})\right),
        \label{eq:set_length_rel}
    \end{equation}
    and thus
    \begin{align}
        \begin{split}
            c'_{2N} &=\!\!\! \sum_{\bm{k}^{(c)}, \bm{l}^{(c)}}\,\prod_{a=1}^N\frac{1}{k_a!l_a!(2a)^{k_a+l_a}}\!\!\sum_{\sigma\in S_{2N}}\varepsilon_2^{2N}|\bar{S}_{2N}|\\
            &=\varepsilon_2^{2N}\!\!\! \sum_{\bm{k}^{(c)}, \bm{l}^{(c)}}\,\prod_{a=1}^N\frac{1}{k_a!l_a!(2a)^{k_a+l_a}}\!\!\sum_{\sigma\in S_{2N}}b_{2N}\\
            &=\varepsilon_2^{2N}c_{2N}.
        \end{split}
        \label{eq:c_coeffs_relation}
    \end{align}

    Combining Eqs.~\eqref{eq:ideal_score_no_vacuum_diff},~\eqref{eq:coefficient_d_N'} and~\eqref{eq:c_coeffs_relation}, we conclude that
    \begin{equation}
        s(\tilde{\bm{A}}'_{\mathrm{sqz}};2N)=\lim_{R\rightarrow\infty}s(\bm{A}'_{\mathrm{sqz}};2N) = 
        \frac{4^{N}(N!)^2}{(2N)!}\,c_{2N},
        \label{eq:ideal_score_no_vacuum_rel}
    \end{equation}
    i.e., the ideal score for a setup with no vacuum input modes will have the same value, whether we use input squeezed states with the same squeezing parameter or not.
    
    The permutations $\varrho \in S_{4N}$ whose associated undirected graphs $\bm{\Gamma}(\varrho)$ have $2N$ connected components constitute the \textit{hyperoctahedral group} of degree $2N$,
    $H_{2N}$~\cite{matsumoto2012general} (see Definition~\ref{def:hyperoctahedral_definition}). We may therefore say that the coefficient $b_{2N}$ is the number of permutations in $H_{2N}$ that take $\bm{j}\oplus\sigma(\bm{j})$ into $\Omega_{\bm{k}}(\bm{j})\oplus\Omega_{\bm{l}}\circ\sigma(\bm{j})$ for given $\bm{k}$, $\bm{l}$ and $\sigma$.

    In Appendix~\ref{app:non_vacuum_proofs} we prove that 
    \begin{equation}
        c_{2N}=\sum_{\bm{k}^{(c)}, \bm{l}^{(c)}}\,\prod_{a=1}^N\frac{1}{k_a!l_a!(2a)^{k_a+l_a}}\sum_{\sigma\in S_{2N}}b_{2N}=1
        \label{eq:counting_part_res}
    \end{equation}
    for every $N$, which implies that
    \begin{equation}
        s(\tilde{\bm{A}}'_{\mathrm{sqz}};2N)=s(\tilde{\bm{A}}_{\mathrm{sqz}};2N) = 
        \frac{4^{N}(N!)^2}{(2N)!}.
    \label{eq:ideal_score_no_vacuum_2}
    \end{equation}

    Fig.~\ref{fig:lxe_ideal_a} shows the values of the scores $s(\tilde{\bm{A}}_{\mathrm{sqz}};2N)$ and $s(\bm{A}_{\mathrm{sqz}};2N)$ as functions of $2N$ for $R\in\{10,20,50,150\}$. The computation of the coefficients $\{c_\ell\}$ becomes increasingly difficult for values of $2N>8$ due to the sharp increase in the number of elements in $\bar{S}$. For this reason, we only show the scores for $2\leq2N\leq8$. As can be seen, the value of the score for finite $R$ is greater than the score of an ideal model with no vacuum input modes. For small values of $R$, $s(\bm{A}_{\mathrm{sqz}};2N)$ quickly diverges from $s(\tilde{\bm{A}}_{\mathrm{sqz}};2N)$ when we increase $2N$. In contrast, for values of $R\sim 150$, $s(\bm{A}_{\mathrm{sqz}};2N)$ seems to closely follow $s(\tilde{\bm{A}}_{\mathrm{sqz}};2N)$. As shown in Fig.~\ref{fig:lxe_ideal_b}, this behavior can also be expected for models that use input squeezed states with different squeezing parameters.    

    Eq.~\eqref{eq:ideal_score_no_vacuum_2} is consistent with the ideal score found in a recent work by Ehrenberg et al.~\cite{ehrenberg2023transition, ehrenberg2024second}. Their result was obtained by computing the first and second moments of the modulus squared of hafnians of random Gaussian matrices, which, according to the hiding conjecture~\cite{deshpande2022quantum, arkhipov2012bosonic, grier2022complexity}, approximate the GBS distribution in the photon-collision-free limit. We did not use the hiding conjecture in our derivation of Eq.~\eqref{eq:ideal_score_no_vacuum_2}. However, defining the score in the limit $M\rightarrow\infty$ implies that our results are only valid in the photon-collision-free regime.

    The analytical computation of the LXE score for setups that are not in the limit of $M\rightarrow \infty$ is beyond the scope of this work. However, we can study the behavior of the ideal score in this regime by numerically computing $\bar{s}_{\mathrm{sqz}}(2N)$ as indicated in Eq.~\eqref{eq:lxe_score_estimation}. Let us recall that $\bar{s}(N)$ is an estimator of the LXE score determined by computing the probabilities of a given set of samples with respect to an ideal squeezed state model. In the case of $\bar{s}_{\mathrm{sqz}}(2N)$, the samples correspond to the ideal model itself. The procedure to determine $\bar{s}_{\mathrm{sqz}}(2N)$ is the following: For a given value of $M$, generate a Haar-random unitary matrix. Using this unitary, generate a set of $L$ samples from the probability distribution of the ideal squeezed state model, and compute their corresponding probabilities. All the samples must have $2N$ detected photons. Finally, compute the estimator 
    $\bar{s}_{\mathrm{sqz}}(2N)$ using Eq.~\eqref{eq:lxe_score_estimation}.

    It is important to remember that sampling from the probability distribution of an ideal squeezed state model, as well as computing the corresponding probabilities, is a computationally hard task whose cost grows exponentially with the total number of detected photons in the samples. This means that the numerical computation of $\bar{s}_{\mathrm{sqz}}(2N)$ is restricted, in practice, to low values of $2N$.

    We numerically estimated $\bar{s}_{\mathrm{sqz}}(2N)$ for GBS setups with $M\in\{50,100,200\}$. We chose these values of $M$ because they are close to those used in recent experimental implementations of GBS~\cite{zhong2020quantum, zhong2021phase, madsen2022quantum, deng2023gaussian}. We focused on ideal models with no vacuum input modes so we can compare the estimated scores with Eq.~\eqref{eq:ideal_score_no_vacuum_2}, which can be easily computed for $2N>8$. We also considered input states with the same squeezing parameter, which was set so that the mean number of photons were $\bar{n}=20$ for all three values of $M$. Therefore, $r=0.60$ for $M=50$, $r=0.43$ for $M=100$, and $r=0.31$ for $M=200$. Notice, however, that the definition of $\bm{A}_{\mathrm{sqz}}$ makes $\bar{s}_{\mathrm{sqz}}(2N)$ independent of our choice of the squeezing parameters.   
    For each value of $M$, we generated 10 Haar-random unitaries, and for each unitary we generated $L=1000$ samples per value of $2N$. We computed the scores for $10\leq 2N \leq 26$. The generation of all Haar-random unitaries and samples, as well as the computation of the probabilities of each individual sample, were done using the methods in the library \texttt{thewalrus}~\cite{gupt2019thewalrus}.

    Fig.~\ref{fig:lxe_score_finite_M} shows the results of the computation of $\bar{s}_{\mathrm{sqz}}(2N)$ as a function of $2N$. The colored circles in the small figures at the bottom represent the values of $\bar{s}_{\mathrm{sqz}}(2N)$ for each of the 10 Haar-random unitaries. The dashed lines with error bars correspond to the average of these values, which we denote $\langle\bar{s}_{\mathrm{sqz}}(2N)\rangle$. As was mentioned below Eq.~\eqref{eq:lxe_score_estimation}, the estimated score can be interpreted as the average of the probabilities of each individual sample (with respect to the ideal model) multiplied by a constant. Consequently, the error of each $\bar{s}_{\mathrm{sqz}}(2N)$ corresponds to the standard error of the mean. The uncertainty of $\langle\bar{s}_{\mathrm{sqz}}(2N)\rangle$, which leads to the error bars shown in Fig.~\ref{fig:lxe_score_finite_M}, is computed through error propagation.

    \begin{figure}[!t]
        \centering
        \includegraphics[width=\columnwidth]{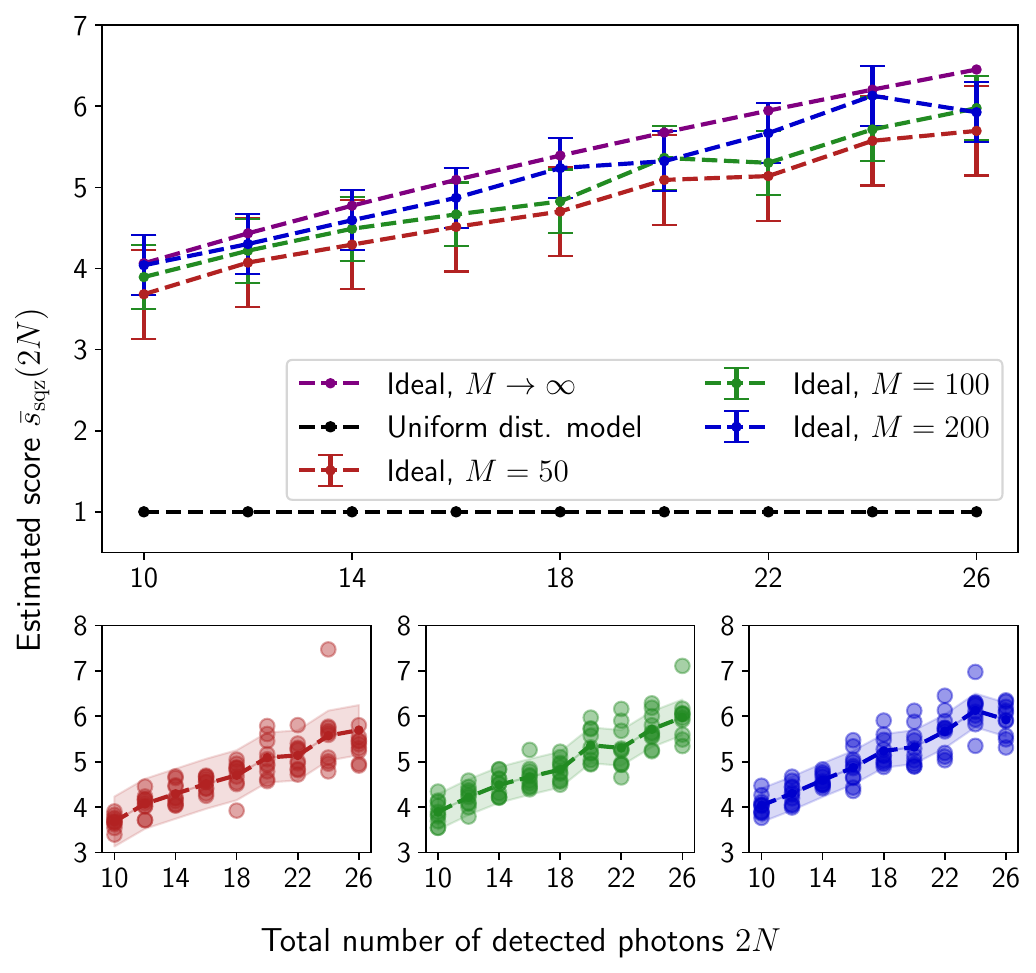}
        \caption{Estimated score $\bar{s}_{\mathrm{sqz}}(2N)$ as a function of $2N$. The violet dashed line corresponds to the score of an ideal model with $M\rightarrow\infty$. The black dashed line corresponds to a model that leads to a uniform distribution over the sample space. The remaining lines show the average estimated score $\langle \bar{s}_{\mathrm{sqz}}(2N)\rangle$: the red dashed line corresponds to $M=50$, the green dashed line to $M=100$, and the blue dashed line to $M=200$. Error bars show the uncertainty in the estimation of the average score. The colored circles shown in the small plots at the bottom indicate the values of $\bar{s}_{\mathrm{sqz}}(2N)$ for each of the 10 Haar-random unitaries used in the calculation of the corresponding average. The estimation of each of these values was done using $L=1000$ samples. The shaded regions indicate how many unitaries obtain values of the score within the error of the estimated average value.}
        \label{fig:lxe_score_finite_M}
    \end{figure}

    We can see that the average estimated score, for all the values of $M$ considered, closely resembles the analytical result in the limit of $M\rightarrow \infty$.
    Indeed, the relative difference between $\langle\bar{s}_{\mathrm{sqz}}(2N)\rangle$ and $s(\tilde{\bm{A}}_{\mathrm{sqz}};2N)$, 
    \begin{equation}
        \Delta(2N) = \frac{s(\tilde{\bm{A}}_{\mathrm{sqz}};2N)-\langle\bar{s}_{\mathrm{sqz}}(2N)\rangle}{s(\tilde{\bm{A}}_{\mathrm{sqz}};2N)}\times100\%,
        \label{eq:relative_difference_scores}
    \end{equation}
    satisfies $8.12\%\leq\Delta(2N)\leq13.57\%$ for $M=50$, $4.17\%\leq\Delta(2N)\leq10.82\%$ for $M=100$, and $0.58\%\leq\Delta(2N)\leq8.16\%$ for $M=200$. 
    
    Interestingly, $\langle\bar{s}_{\mathrm{sqz}}(2N)\rangle$ approaches $s(\tilde{\bm{A}}_{\mathrm{sqz}};2N)$ more rapidly for small values of $2N$, while $\langle\bar{s}_{\mathrm{sqz}}(2N)\rangle$ and $s(\tilde{\bm{A}}_{\mathrm{sqz}};2N)$ seem to diverge for large $2N$. The divergence is faster the smaller is $M$. This suggest that the range of total detected photons for which $s(\tilde{\bm{A}}_{\mathrm{sqz}};2N)$ will be a good approximation of $\bar{s}_{\mathrm{sqz}}(2N)$ depends on the number of modes considered. One condition to ensure that the GBS distribution is in the photon-collision-free regime is that the mean number photons, $\bar{n}$, satisfies $\bar{n}\in o(\sqrt{M})$~\cite{deshpande2022quantum, arkhipov2012bosonic, grier2022complexity}. Since $\bar{n}$ determines the range of $2N$ in which it is more likely to find experimental samples, and considering that Eq.~\eqref{eq:ideal_score_no_vacuum_2} is only valid in the collision-free limit, we may say that $s(\tilde{\bm{A}}_{\mathrm{sqz}};2N)$ will be a good approximation of $\bar{s}_{\mathrm{sqz}}(2N)$ for a range of total detected photons satisfying $2N\in o(\sqrt{M})$.
    
    We can also see that most of the estimated $\bar{s}_{\mathrm{sqz}}(2N)$ take values within the uncertainty of the average estimated score, even for a number of modes as low as $M=50$. This confirms that typical GBS implementations obtain ideal scores that are close to the average over Haar-random unitaries.  

    \begin{figure*}[!t]
        \centering
        \includegraphics[width=\textwidth]{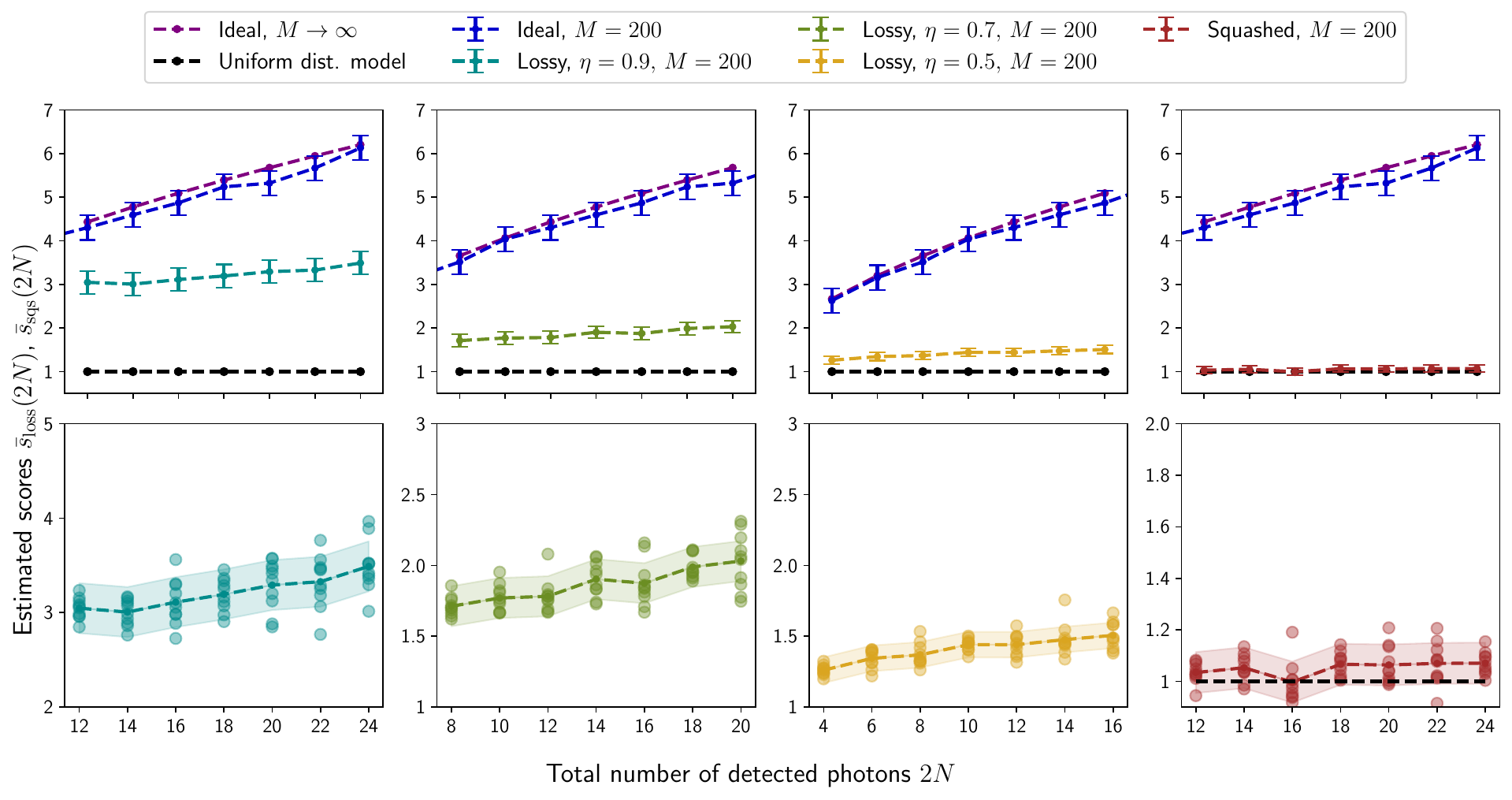}
        \caption{Estimated scores $\bar{s}_{\mathrm{loss}}(2N)$ and $\bar{s}_{\mathrm{sqs}}(2N)$ as functions of $2N$. The violet dashed line corresponds to the score of an ideal model with $M\rightarrow\infty$. The black dashed line corresponds to a model that leads to a uniform distribution over the sample space. The dark blue dashed line with error bars corresponds to the average estimated score of an ideal model with $M=200$. The remaining lines show the average estimated scores $\langle \bar{s}_{\mathrm{loss}}(2N)\rangle$ and $\langle \bar{s}_{\mathrm{sqs}}(2N)\rangle$: the teal dashed line corresponds to a lossy squeezed state model with $\eta=0.9$, the olive green dashed line to $\eta=0.7$, the dark yellow dashed line to $\eta=0.5$, and the dark red line to the squashed state model. Error bars show the uncertainty in the estimation of the average score. The colored circles shown in the plots at the bottom indicate the values of $\bar{s}_{\mathrm{loss}}(2N)$ and $\bar{s}_{\mathrm{sqs}}(2N)$ for each of the 10 Haar-random unitaries used in the calculation of the corresponding average. The estimation of each of these values was done using $L=1000$ samples. The shaded regions indicate how many unitaries obtain values of the score within the error of the estimated average value.}
        \label{fig:lxe_score_loss}
    \end{figure*}

    To finish this section, we would like to numerically investigate the 
    score that would be obtained by GBS setups with transmission losses. This
    study would give us an idea of how sensitive is the LXE score to the
    presence of noise. To do this, we consider a simple model in which the
    input squeezed states are sent through single-mode loss channels before
    entering a Haar-random unitary interferometer. We will consider that there are no vacuum input modes, that all input states have the same
    squeezing, and that all loss channels have the same transmission
    parameter, $\eta$. When the transmission losses are high, the state of
    the light after the single-mode loss channel can be approximated by a
    squashed state~\cite{martinez2023classical}. We will use a squashed state
    model where all input states have the same mean number of photons to
    investigate the case of extreme losses. The details on how obtain the
    matrices $\bm{A}_{\mathrm{loss}}$ and $\bm{A}_{\mathrm{sqs}}$ associated
    to the lossy squeezed state and squashed state models can be found in
    Appendix~\ref{app:misc}.
    
    It is worth mentioning that the squashed states are classical Gaussian states~\cite{martinez2023classical}, and thus sampling from the probability distribution of model $\bm{A}_{\mathrm{sqs}}$ can be done efficiently. The lossy squeezed states used in model $\bm{A}_{\mathrm{loss}}$ are not classical states. However, it has been shown that sampling from $\boldsymbol{A}_{\mathrm{loss}}$ when transmission losses are high can be done with reasonable classical hardware resources (for models with hundreds of modes, millions of samples can be generated in the order of an hour using contemporary supercomputers)~\cite{oh2024classical}. For both models, the calculation of probabilities with respect to model $\bm{A}_{\mathrm{sqz}}$ is still computationally hard. Therefore, the numerical computation of the estimators will also be restricted to low values of $2N$.

    We numerically estimated $\bar{s}_{\mathrm{loss}}(2N)$ and $\bar{s}_{\mathrm{sqs}}(2N)$ for a GBS setup with $M=200$. We considered transmission parameters $\eta\in\{0.5,0.7,0.9\}$, and set the squeezing parameter of the input states to $r=0.31$. The mean number of photons for the squashed state model was $\bar{n}=20$. Here, we also find that the definition of model $\bm{A}_{\mathrm{sqs}}$ makes $\bar{s}_{\mathrm{sqs}}(2N)$ independent of our choice of $\bar{n}$. For each of these cases, we generated 10 Haar-random unitaries, and for each unitary we generated $L=1000$ samples per value of $2N$. We computed the scores for $2N$ between 4 and 24.
    
    Fig.~\ref{fig:lxe_score_loss} shows the computed values of $\bar{s}_{\mathrm{loss}}(2N)$ and $\bar{s}_{\mathrm{sqs}}(2N)$ as functions of $2N$. Just as before, the colored circles in the figures at the bottom represent the values of estimators for each of the 10 Haar-random unitaries. The dashed lines with error bars correspond to $\langle\bar{s}_{\mathrm{loss}}(2N)\rangle$ and $\langle\bar{s}_{\mathrm{sqs}}(2N)\rangle$. The error of each $\bar{s}_{\mathrm{loss}}(2N)$ and $\bar{s}_{\mathrm{sqs}}(2N)$ corresponds to the standard error of the mean, and the uncertainty of the corresponding averages is computed through error propagation.

    As can be seen, the values of $\langle\bar{s}_{\mathrm{loss}}(2N)\rangle$ are significantly lower than those corresponding to the estimated ideal score. We find that the relative difference between $\langle\bar{s}_{\mathrm{loss}}(2N)\rangle$ and $\langle\bar{s}_{\mathrm{sqz}}(2N)\rangle$ satisfies $29.15\%\leq\Delta(2N)\leq43.06\%$ for $\eta=0.9$, $51.31\%\leq\Delta(2N)\leq62.02\%$ for $\eta=0.7$, and $52.03\%\leq\Delta(2N)\leq69.07\%$ for $\eta=0.5$. For the squashed state model, the difference is even more striking. The relative difference between $\langle\bar{s}_{\mathrm{sqs}}(2N)\rangle$ and $\langle\bar{s}_{\mathrm{sqz}}(2N)\rangle$ satisfies $75.96\%\leq\Delta(2N)\leq82.53\%$, and we can readily notice that the estimated $\bar{s}_{\mathrm{sqs}}(2N)$ take values close to 1 for all the unitaries used in the computation. Finally, we confirm that most of the estimated $\bar{s}_{\mathrm{loss}}(2N)$ and $\bar{s}_{\mathrm{sqs}}(2N)$ take values within the uncertainty of the average estimated score. This demonstrates that typical GBS implementations with transmission losses also obtain LXE scores that are close to the average over Haar-random unitaries.  

    These results show that the presence of transmission losses in GBS setups do not necessarily bring the LXE score to values close to those obtained by models that lead to a uniform distribution over the sample space. Indeed, only for situations in which the transmission losses are higher than $50\%$ does the score of a lossy squeezed state model approach the value of 1. However, we notice that losses as low as $10\%$ seem to define a reference value that is easily distinguishable from that of the ideal model. This will prove useful at the moment of verifying real-world GBS implementations, as we can define clearly delimited regions associated to certain values of the transmission loss and then determine in which of them lies the score associated to a set of experimental samples. A more thorough study of the LXE score for models with transmission losses, as well as other types of Gaussian and non-Gaussian noise, is needed in order to completely determine the reference values delimiting these regions. 

    We also bring attention to the fact that the squashed state model obtains estimated scores very close to 1. These models have proven to be good classical adversaries of recent GBS implementations, performing as good as, and sometimes better than, the ground truth at a number of validation tests~\cite{martinez2023classical}. Therefore, it would be valuable to determine the analytical value of the LXE score for the squashed model, and verify if it is identically 1.

    The definition of analytical reference values corresponding to noisy models will be of central importance for the use of the LXE score as a GBS validation metric. However, we need also determine if obtaining a score greater than a given reference value is a computationally hard task or not. In this way, we would be certain that the LXE score is a good witness of quantum computational advantage. Future studies on the LXE score need also focus on this subject.
    
    \section{\label{sec:discussion}Discussion}
    
    In this work, we proposed to use the linear-cross entropy score (LXE score) as a tool for validating quantum advantage claims in the context of Gaussian Boson Sampling (GBS). Taking inspiration from the definition of the linear-cross entropy benchmark (XEB) used for the validation or Random Circuit Sampling (RCS) implementations, we defined the LXE score as a normalized version of the linear cross-entropy between two GBS models, one of them being the ideal squeezed state model, averaged over the Haar measure of the unitary group, and evaluated at the limit of a large number of modes. The key idea of the verification strategy consists in finding reference values of the LXE score (corresponding to the ideal squeezed state model, known classical models, and adversarial samplers) and compare them with the estimated score obtained by the outcomes of a given GBS experiment. Using this comparison, we can assess how far a GBS implementation is from its corresponding typical, ideal model.
    
    Following the definition of the LXE score, we identified two of its reference values: the first one corresponding to a GBS setup leading to a uniform probability distribution for each sector of the total number $N$ of detected photons in the experimental samples. The second one is associated to a GBS implementation using a lossless interferometer that receives single-mode squeezed states in the first $R$ of its $M$ input modes, and the vacuum in the rest of them. The first reference value follows directly from the normalization of the LXE score and is equal to 1. In addition to this, we found two expressions for the second reference value as a function of $R$ and $N$: Eq.~\eqref{eq:main_result_lxe_sqz} for the case of input states having the same squeezing parameter, and Eq.~\eqref{eq:lxe_sqz_score_different_squeezing} for input sates with different squeezing parameters. It is worth mentioning that even if the ideal value of the LXE score is not constant with $N$ (unlike its RCS counterpart) the validation strategy remains sound, we need only compare the estimated score of the experimental outcomes with respect to the reference curve set by the ideal squeezed state model.

    An important feature of the expression we found for the ideal LXE score is that part of its dependence on $R$ can be written as a polynomial of degree $2N$ in this variable. The corresponding coefficients can be computed by counting the number of undirected graphs, associated to certain permutations, with a given number of connected components. When we consider setups that use input states with different squeezing parameters, the coefficients also depend on the lengths of the connected components of the graph. This property of the ideal score closely resembles a recent result by Ehrenberg et al.~\cite{ehrenberg2023transition, ehrenberg2024second} concerning a study on anticoncentration in GBS. In their work, they developed a graph-theoretical technique  to compute the first and second moments of the output GBS distribution in the photon-collision-free regime. They found that the second moment of the distribution 
    can be expressed as a polynomial of degree $2N$ in $R$, with coefficients computed by counting how many graphs have a specific number of connected components. Moreover, they find an expression for the LXE score of an ideal squeezed state model that uses input states with the same squeezing parameter in terms of the first and second moments~\cite{ehrenberg2024second}. 
    
    Our expression for the ideal LXE score, in the case of equal squeezing, and that of Ehrenberg et al. are very similar. The main difference is that their results rely on the GBS hiding conjecture~\cite{deshpande2022quantum, arkhipov2012bosonic, grier2022complexity}, which states that, in the photon-collision-free regime, the distribution of the symmetric product of Haar-random unitary matrices closely approximates the distribution of the symmetric product of complex Gaussian random matrices. In contradistinction, our results rely on the distribution of Haar-random unitaries in the asymptotic limit. Additionally, the definition and origin of the graphs pertinent to our computation of the ideal score differ from those used in Refs.~\cite{ehrenberg2023transition, ehrenberg2024second}. At this time, more work is required in order to find the relation between the graph- theoretical technique of Ehrenberg et al. and the computation of the coefficients defined in Eq.~\eqref{eq:coefficients_sqz_lxe_score}. While our results provide a complete description to determine the ideal LXE score, we must also acknowledge that a more detailed analysis of  Eqs.~\eqref{eq:coefficients_sqz_lxe_score},~\eqref{eq:b_coefficients_definition} and~\eqref{eq:c_tilde_coeffs} is needed in order to make the computation of the score better-suited for the validation real-world GBS implementations.

    When considering setups with no vacuum input modes, i.e., setups with $R=M$, we found a simple expression for the ideal score as a function of $2N$ (see Eq.~\eqref{eq:ideal_score_no_vacuum_2}). We also proved that this equation holds whether we use input squeezed states with different squeezing parameters or not. We compared this result, which was obtained taking the limit as $M\rightarrow\infty$, with numerical estimations of the ideal score for setups with $M\in\{50,100,200\}$. We found that, for all $M$, the analytical expression closely approximates the estimated scores in a range of $2N$ between 10 and 26. However, we noticed that this approximation worsens for increasing $2N$. We argue that the reason behind this is that, being defined for $M\rightarrow \infty$, our expression for the ideal score is only valid in the photon-collision-free regime. Since in this limit the mean number of photons satisfies $\bar{n}\in o(\sqrt{M})$, we expect Eq.~\eqref{eq:ideal_score_no_vacuum_2} to be a good approximation of the estimated score, for a given $M$, for a range of $2N$ satisfying $2N\in o(\sqrt{M})$. 

    We also computed the estimated score for a squashed state model, and for a simple GBS model that includes transmission losses. We found that the presence of these type of loss does not necessarily lead to scores that rapidly approach the value of 1. However, transmission losses as low as $10\%$ lead to an estimated score that is notably different from the reference value set by an ideal model. On the other hand, we found that the estimated score of the squashed state model is surprisingly close to 1. These findings suggest that computing the score for different values of the transmission loss, as well as for classical models such as the squashed state model, is a viable way of defining reference values that will allow us to assess how far a given GBS implementation is from its corresponding ideal model.  
    
    Additional work is required to analytically determine the score of noisy and classical GBS models, as well as to prove whether the score of the squashed state model is identically 1 or not. It would also be highly relevant to provide sufficient evidence to decide if obtaining a score sufficiently close to that of the ideal model (or sufficiently different from those of classical models) is a computationally hard task or not. Our future studies of the LXE score will follow these directions.
    
    The use of the LXE score will greatly benefit the field of GBS verification. Since we have defined the score using an ideal model instead of the ground truth, the validation of GBS implementations using this metric will not rely on providing evidence that the outcomes of the experiment follow its expected theoretical distribution (of which, additionally, we do not know if it is computationally hard to sample). Verifying directly against the ideal model will also shield the validation procedure from classical techniques that directly intend to simulate the ground truth. In view of this, we believe that obtaining a high LXE score constitutes a stricter test that current and future GBS implementations should consider when providing evidence in support of quantum advantage claims. 
    
    In addition, since the computation of probability amplitudes for pure Gaussian states requires the calculation of hafnians of matrices with half the size of those used in the computation of probabilities of mixed states, the estimation of the LXE score can be done for a range of $N$ that is approximately twice as large as those considered in the validation of recent GBS implementations~\cite{zhong2020quantum,zhong2021phase,madsen2022quantum,deng2023gaussian}. This represents a considerable improvement on the verification of GBS experiments.

    Moreover, given that estimating the LXE score relies on the computation of probability amplitudes of pure states, our validation technique is protected against ``spoofing'' attacks that rely on efficiently simulating lossy GBS implementations, postselecting samples with high probabilities in the ground truth (i.e., with heavy outcomes), and then computing cross-entropy measures that are also defined with respect to the ground truth~\cite{oh2023spoofing} (thus obtaining higher scores than the experimental samples). A direct simulation of an ideal GBS experiment is computationally as hard as the computation of the LXE score~\cite{quesada2022quadratic}, so ``spoofing'' our validation strategy using heavy outputs in the ideal distribution is unlikely.

    \textit{Note added.} After the publication of a first version of this work on the arXiv, we became aware of the second article of Ehrenberg et al.~\cite{ehrenberg2024second}, where there is an expression for the ideal LXE score in terms of moments of the GBS distribution. We now include references to this work in the main text.

    \section*{Data and code availability}
    All the data used in the computation of the results shown in Figs.~\ref{fig:lxe_score_2N} to~\ref{fig:lxe_score_loss} can be found at \url{https://doi.org/10.5281/zenodo.13327744}. The corresponding code is available at \url{https://github.com/polyquantique/lxe_score_for_gbs}.

    \section*{Acknowledgements}
    
	We acknowledge the support from the Ministère de l’Économie et de l’Innovation du Québec and the Natural Sciences and Engineering Research Council of Canada (NSERC).  HdG would like to thank NSERC. NQ and HdG would also like to thank ACFAS for partial support of this project. This research was enabled in part by support provided by the Digital Research Alliance of Canada.
    
    \bibliography{lxe_revtex}

    \onecolumngrid
    
    \appendix
    
    \section{\label{app:misc}Compilation of results for the computation of the integral form of the LXE}

    In this appendix we present the proofs of some of the results used in the computation of the integral form of the LXE. We also show that a GBS setup using identical thermal states as input of all modes of a lossless interferometer leads to a probability distribution that is uniform for each sector of the total number of detected photons. 
    
    We begin by proving, in a slightly more general fashion, the factorization $\left(\bm{W}\bm{A}\bm{W}\right)_{\bm{n}}=\bm{\bar{W}}_{\bm{n}}\bm{A}_{\bm{n}}\bm{\bar{W}}_{\bm{n}}$ used in the derivation of Eq.~\eqref{eq:integral_lxe_3}.
    
    \begin{proposition}
        \label{prop:patt_matrix}
        Let $\bm{M}=(M_{j,k})_{j,k=1}^P$ be a complex $P\times P$ matrix, and let $\bm{n}=(n_1,\dots,n_P)$ be a vector whose entries are non-negative integers. Define the matrix $\bm{M}_{(\bm{n})}$ as the matrix obtained from taking the $k$-th row and column of $\bm{M}$ and repeating it $n_k$ times. This matrix is a $N\times N$ block matrix (with $N=\sum_{k=1}^P n_k$) of the form $\bm{M}_{(\bm{n})} = (\bm{B}_{p,q})_{q,p=1}^P$, where the blocks $\bm{B}_{p,q}$ have size $n_p\times n_q$ and their corresponding entries are all equal to $N_{q,p}$. Furthermore, let $\bm{R}$, $\bm{L}$ be $P\times P$ complex diagonal matrices of the form $\bm{R} = \mathrm{diag}(r_1,\dots,r_P)$, $\bm{L} = \mathrm{diag}(l_1,\dots,l_P)$. Then $(\bm{L}\bm{M}\bm{R})_{(\bm{n})} = \bm{\bar{L}}_{(\bm{n})}\bm{M}_{(\bm{n})}\bm{\bar{R}}_{(\bm{n})}$, where $\bm{\bar{R}}_{(\bm{n})}=\bigoplus_{k=1}^Pr_k\mathbb{I}_{n_k}$ and $\bm{\bar{L}}_{(\bm{n})}=\bigoplus_{k=1}^Pl_k\mathbb{I}_{n_k}$ with $\mathbb{I}_{n_k}$ the identity matrix of size $n_k \times n_k$. 
    \end{proposition}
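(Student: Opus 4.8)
The plan is to prove the identity \emph{entry by entry}; the only preparatory work is to set up a clean bookkeeping map that records, for each of the $N$ rows (and columns) of the repeated matrix, which of the original $P$ indices it was copied from. First I would partition $\{1,\dots,N\}$ into consecutive blocks $S_1,\dots,S_P$ with $|S_k|=n_k$ (blocks with $n_k=0$ being empty and simply omitted), and define $\pi\colon\{1,\dots,N\}\to\{1,\dots,P\}$ by $\pi(i)=k$ whenever $i\in S_k$. Unpacking the block description in the statement, the repetition operation is then nothing but
\begin{equation}
    \bigl(\bm{M}_{(\bm{n})}\bigr)_{i,j}=M_{\pi(i),\pi(j)},
\end{equation}
i.e.\ the block in block-row $p$ and block-column $q$ is the constant $n_p\times n_q$ matrix all of whose entries equal $M_{p,q}$. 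With the same map, $\bar{\bm{R}}_{(\bm{n})}=\bigoplus_{k=1}^P r_k\mathbb{I}_{n_k}$ is the genuine diagonal matrix with $\bigl(\bar{\bm{R}}_{(\bm{n})}\bigr)_{i,i}=r_{\pi(i)}$, and likewise $\bigl(\bar{\bm{L}}_{(\bm{n})}\bigr)_{i,i}=l_{\pi(i)}$.

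The core of the argument is then a pair of one-line computations. Because $\bar{\bm{L}}_{(\bm{n})}$ and $\bar{\bm{R}}_{(\bm{n})}$ are diagonal, multiplying on the left by $\bar{\bm{L}}_{(\bm{n})}$ rescales row $i$ by $l_{\pi(i)}$ and multiplying on the right by $\bar{\bm{R}}_{(\bm{n})}$ rescales column $j$ by $r_{\pi(j)}$, so
\begin{equation}
    \bigl(\bar{\bm{L}}_{(\bm{n})}\bm{M}_{(\bm{n})}\bar{\bm{R}}_{(\bm{n})}\bigr)_{i,j}=l_{\pi(i)}\,\bigl(\bm{M}_{(\bm{n})}\bigr)_{i,j}\,r_{\pi(j)}=l_{\pi(i)}\,M_{\pi(i),\pi(j)}\,r_{\pi(j)}.
\end{equation}
Conversely, since $\bm{L}$ and $\bm{R}$ are diagonal, $(\bm{L}\bm{M}\bm{R})_{a,b}=l_a M_{a,b} r_b$, and applying the repetition operation to this $P\times P$ matrix gives
\begin{equation}
    \bigl((\bm{L}\bm{M}\bm{R})_{(\bm{n})}\bigr)_{i,j}=(\bm{L}\bm{M}\bm{R})_{\pi(i),\pi(j)}=l_{\pi(i)}\,M_{\pi(i),\pi(j)}\,r_{\pi(j)}.
\end{equation}
The two right-hand sides agree for every $i,j\in\{1,\dots,N\}$, which is the claim. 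Indices $k$ with $n_k=0$ need no separate discussion: they contribute empty blocks and are invisible on both sides.

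I expect the only delicate point to be notational discipline rather than anything substantive. In particular one should resist writing $\bar{\bm{R}}_{(\bm{n})}$ as ``the repetition of $\bm{R}$'': repeating the rows and columns of the diagonal matrix $\bm{R}$ according to $\bm{n}$ yields a block-diagonal matrix whose $k$-th block is the constant matrix with all entries $r_k$, which coincides with $\bar{\bm{R}}_{(\bm{n})}=\bigoplus_k r_k\mathbb{I}_{n_k}$ only when every $n_k\le 1$; hence $\bar{\bm{R}}_{(\bm{n})}$ must be introduced directly as in the statement. The proof really rests on the elementary fact that two-sided multiplication of a block-constant matrix by block-scalar diagonal matrices acts on each $(p,q)$ block exactly as the scalars $l_p$ and $r_q$ act on $M_{p,q}$, which is precisely what the two displayed computations record. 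As a byproduct, the same map $\pi$ together with Eq.~\eqref{eq:hafnian_property} immediately gives $\mathrm{haf}\bigl[\bar{\bm{L}}_{(\bm{n})}\bm{M}_{(\bm{n})}\bar{\bm{R}}_{(\bm{n})}\bigr]=\bigl(\prod_{i=1}^{N}l_{\pi(i)}r_{\pi(i)}\bigr)\mathrm{haf}\bigl[\bm{M}_{(\bm{n})}\bigr]$, which is the mechanism behind the factor $e^{2i\bm{n}\cdot\bm{\phi}}$ appearing in the main text.
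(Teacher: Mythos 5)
Your proof is correct and follows essentially the same elementary route as the paper: a direct verification that both sides have entries $l_{\pi(i)}M_{\pi(i),\pi(j)}r_{\pi(j)}$, which is exactly the paper's blockwise factorization $\bm{K}_{p,q}=(l_p\mathbb{I}_{n_p})\bm{B}_{p,q}(r_q\mathbb{I}_{n_q})$ rewritten entry by entry via your bookkeeping map $\pi$. Your closing caution that $\bm{\bar{R}}_{(\bm{n})}$ is \emph{not} the repetition $(\bm{R})_{(\bm{n})}$ of the diagonal matrix $\bm{R}$ is a sound and worthwhile observation, consistent with how the paper introduces these matrices.
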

    
    \begin{proof}
        The entries of $\bm{L}\bm{M}\bm{R}$ can be written as $(\bm{L}\bm{M}\bm{R})_{j,k}=l_jM_{j,k}r_k$. Also, suppose that $\bm{D}$ is a $S\times T$ matrix whose entries are all equal $xCy$, where $x$, $y$, $C$ are complex numbers. We can see that such a matrix can be written in the form $\bm{D}=(x\,\mathbb{I}_S)\bm{C}(y\,\mathbb{I}_T)$, where $\bm{C}$ is a $S\times T$ matrix whose entries are all equal to $C$. According to the definition, $(\bm{L}\bm{M}\bm{R})_{(\bm{n})}=(\bm{K}_{p,q})_{p,q=1}^P$, where the entries of the $n_p\times n_q$ blocks are all equal to $l_pM_{p,q}r_q$, which implies $\bm{K}_{p,q}=(l_p\mathbb{I}_{n_p})\bm{B}_{p,q}(r_q\mathbb{I}_{n_q})$. From this expression, we can define the block diagonal matrices $\bm{\bar{R}}_{(\bm{n})}=\bigoplus_{k=1}^Pr_k\mathbb{I}_{n_k}$ and $\bm{\bar{L}}_{(\bm{n})}=\bigoplus_{k=1}^Pl_k\mathbb{I}_{n_k}$ and check that $(\bm{L}\bm{M}\bm{R})_{(\bm{n})} = \bm{\bar{L}}_{(\bm{n})}\bm{M}_{(\bm{n})}\bm{\bar{R}}_{(\bm{n})}$.
    \end{proof}
    
    The relation $\left(\bm{W}\bm{A}\bm{W}\right)_{\bm{n}}=\bm{\bar{W}}_{\bm{n}}\bm{A}_{\bm{n}}\bm{\bar{W}}_{\bm{n}}$ can be recovered by observing that $\bm{A}_{\bm{n}}=\bm{A}_{(\bm{n}\oplus\bm{n})}$.
    
    We now prove that $\Pr(N|\bm{A})$ can be computed by repeated differentiation of $q(\alpha,\bm{0},\bm{A})$.
    
    \begin{proposition}
        \label{prop:total_photons_dist}
        For any GBS model $\bm{A}$
        \begin{equation}
            \Pr(N|\bm{A})=\frac{1}{N!}\Pr(\bm{0}|\bm{A})\left.\frac{\partial^N q(\alpha, \bm{0},\bm{A})}{\partial \alpha^N}\right|_{\alpha=0}\, ,
            \label{eq:total_photon_prob}
        \end{equation}
        where $q(\alpha,\bm{0},\bm{A})=[\mathrm{det}(\mathbb{I}_{2M}-\alpha\bm{X}\bm{A})]^{-1/2}$ is a specialization of Eq.~(\ref{eq:q_function_def}).
    \end{proposition}

    \begin{proof}
        Let us begin by the relation $1=\sum_{\bm{n}}\Pr(\bm{n}|\bm{A})$, where the sum extends over all possible detection patterns. In terms of the hafnian, this relation can be written as
        \begin{equation}
            \frac{1}{\Pr(\bm{0}|\bm{A})}=\sum_{N=0}^\infty\sum_{\bm{n}\in K(N)}\frac{1}{\bm{n}!}\mathrm{haf}\left[\bm{A}_{\bm{n}}\right].    
        \end{equation}
        Making the replacement $\bm{A}\rightarrow\alpha\bm{A}$, we can use the property  $\mathrm{haf}[\alpha\bm{A}_{\bm{n}}]=\alpha^{|\bm{n}|}\mathrm{haf}[\bm{A}_{\bm{n}}]$ to write 
        \begin{equation}
            q(\alpha,\bm{0},\bm{A})=\sum_{N=0}^\infty\alpha^N \left(\sum_{\bm{n}\in K(N)}\frac{1}{\bm{n}!}\mathrm{haf}\left[\bm{A}_{\bm{n}}\right]\right)\, .    
        \end{equation}
       By repeatedly differentiating with respect to $\alpha$ we find
        \begin{equation}
            \left.\frac{1}{N!}\frac{\partial^N q(\alpha,\bm{0},\bm{A})}{\partial \alpha^N}\right|_{\alpha=0}= \sum_{\bm{n}\in K(N)}\frac{1}{\bm{n}!}\mathrm{haf}\left[\bm{A}_{\bm{n}}\right],   
        \end{equation}
        which implies
        \begin{equation}
            \Pr(N|\bm{A})=\frac{1}{N!}\Pr(\bm{0}|\bm{A})\left.\frac{\partial^N q(\alpha, \bm{0},\bm{A})}{\partial \alpha^N}\right|_{\alpha=0}.
        \end{equation}    
    \end{proof}
    
    From this result we can readily see that
    \begin{align}
        \mathcal{D}(\bm{A},\bm{B};N)=\frac{1}{(N!)^2}\frac{\Pr(\bm{0}|\bm{A})\Pr(\bm{0}|\bm{B})}{\Pr(N|\bm{A})\Pr(N|\bm{B})}
        =\left[ 
        \frac{\partial^N q(\alpha, \bm{0},\bm{A})}{\partial \alpha^N} 
        \left.\frac{\partial^N q(\beta, \bm{0},\bm{B})}{\partial \beta^N}\right|_{\substack{\alpha=0\\\beta=0}}\right]^{-1}.
    \end{align}

    For the remaining demonstrations of this section, we need to give a more detailed description of the GBS models introduced in Sec.~\ref{sec:gbs_setup}.

    When we consider input single-mode Gaussian states and a lossless interferometer represented by a Haar-random unitary matrix, the matrix $\bm{A}$ will have the general structure
    \begin{equation}
        \bm{A}=
        \begin{pmatrix}
            \bm{V}&\bm{Y}\\
            \bm{Y}^*&\bm{V}^*
        \end{pmatrix},
        \label{eq:gbs_model_general_structure}
    \end{equation}
    where $\bm{V}=\bm{U}\bm{\lambda}\bm{U}^{\mathrm{T}}$ is a symmetric matrix, and $\bm{Y}=\bm{U}\bm{\mu}\bm{U}^{\dagger}$ is Hermitian. The matrices $\bm{\lambda}$, $\bm{\mu}$ are defined as $\bm{\lambda}=\mathrm{diag}(\lambda_1,\dots,\lambda_M)$, $\bm{\mu}=\mathrm{diag}(\mu_1,\dots,\mu_M)$, where the $\{\lambda_k\}$ and $\{\mu_k\}$ are real parameters that can be written in terms of the entries of the \textit{real covariance matrix} of the input single-mode Gaussian states as
    \begin{align}
        \lambda_k &= \frac{1}{1+2\sigma_p^{(k)}/\hbar}-\frac{1}{1+2\sigma_x^{(k)}/\hbar},\label{eq:real_parameter_general_moments_1}\\
        \mu_k &= 1-\left(\frac{1}{1+2\sigma_x^{(k)}/\hbar}+\frac{1}{1+2\sigma_p^{(k)}/\hbar}\right).
        \label{eq:real_parameter_general_moments_2}
    \end{align}
    
    Recall that the entries of the real covariance matrix, $\bm{\sigma}$, of a $M$-mode, non-displaced Gaussian state are computed as $\sigma_{j,k}=\frac{1}{2}\left\langle\{\hat{r}_j,\hat{r}_k\}\right\rangle$, where the $\{\hat{r}_k\}$ are the components of the operator vector $\bm{\hat{r}}=(\hat{x}_1,\dots,\hat{x}_M,\hat{p}_1\dots,\hat{p}_M)$ with $\hat{x}_k$, $\hat{p}_k$ the quadrature operators of mode $k$. For a single-mode Gaussian state, we can write $\sigma_x^{(k)}=\langle \hat{x}^2 \rangle$ and $\sigma_p^{(k)}=\langle \hat{p}^2 \rangle$. Notice that we can neglect the non-diagonal entries $\sigma_{x,p}^{(k)}=\sigma_{p,x}^{(k)}=\frac{1}{2}\left\langle\{\hat{x},\hat{p}\}\right\rangle$ because these can be obtained from a diagonal covariance matrix via a local rotation (i.e., a local phase-shift), which can be absorbed into the unitary operator describing interferometer~\cite{rahimi2015can}. 
    Using $\bm{\sigma}>0$, it follows that $\sigma_x^{(k)}, \sigma_p^{(k)}>0$ for all $k$. 

    For a GBS setup that uses identical thermal states with mean number of photons $\bar{n}$ as input of all the modes of the interferometer, we have $\sigma_x^{(k)}=\sigma_p^{(k)}=\hbar (2\bar{n}+1)/2$ for all $k$, which implies $\lambda_k=0$, $\mu_k=\bar{n}/(1+\bar{n})$ for all $k$. This leads to the following form of the matrix $\bm{A}_{\mathrm{thm}}$:
    \begin{equation}
        \bm{A}_{\mathrm{thm}}=\frac{\bar{n}}{\bar{n}+1}
        \begin{pmatrix}
            \bm{0}&\mathbb{I}_M\\
            \mathbb{I}_M&\bm{0}
        \end{pmatrix}
        =\frac{\bar{n}}{\bar{n}+1}\bm{X}.
        \label{eq:gbs_thermal_model}
    \end{equation}
    
    We may now prove that this model leads to a uniform probability distribution for each sector of the total number of detected photons.
    
    \begin{proposition}
        \label{prop:thermal_probability}
        \begin{equation}
            \Pr(\bm{n}|\bm{A}_{\mathrm{thm}})=\Pr(\bm{0}|\bm{A}_{\mathrm{thm}})\left(\frac{\bar{n}}{\bar{n}+1}\right)^{|\bm{n}|}.
            \label{eq:thermal_probability}
        \end{equation}
    \end{proposition}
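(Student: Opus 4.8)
The plan is to evaluate the Hafnian in Eq.~\eqref{eq:gbs_probability_distribution} directly for the thermal model. First I would substitute $\bm{A}_{\mathrm{thm}}=\frac{\bar n}{\bar n+1}\bm{X}$, so that the matrix whose Hafnian must be computed is $(\bm{A}_{\mathrm{thm}})_{\bm n}=\frac{\bar n}{\bar n+1}\bm{X}_{\bm n}$, a $2N\times 2N$ matrix with $N=|\bm n|$. Using the homogeneity of the Hafnian (the special case of Eq.~\eqref{eq:hafnian_property} with $\bm{D}=c\,\mathbb{I}_{2N}$, or directly from Eq.~\eqref{eq:hafnian_definition}), I would pull out the scalar: $\mathrm{haf}[(\bm{A}_{\mathrm{thm}})_{\bm n}]=\left(\frac{\bar n}{\bar n+1}\right)^{N}\mathrm{haf}[\bm{X}_{\bm n}]$. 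This reduces the claim to the combinatorial identity $\mathrm{haf}[\bm{X}_{\bm n}]=\bm n!$.

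To prove that identity I would describe the structure of $\bm{X}_{\bm n}$ explicitly. Since $X_{j,k}=1$ iff $|j-k|=M$ and is zero otherwise, repeating the $k$-th and $(k+M)$-th rows and columns $n_k$ times each produces a $0$--$1$ matrix that is the adjacency matrix of a graph which is the disjoint union, over the modes $k$ with $n_k>0$, of balanced complete bipartite graphs $K_{n_k,n_k}$ (the $n_k$ copies of index $k$ on one side, the $n_k$ copies of index $k+M$ on the other), with no edges between distinct modes. For a symmetric $0$--$1$ matrix the Hafnian counts perfect matchings of the associated graph, so $\mathrm{haf}[\bm{X}_{\bm n}]$ equals the number of perfect matchings of this disjoint union. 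A perfect matching of $K_{n_k,n_k}$ is a bijection between the two sides, of which there are $n_k!$, and perfect matchings of a disjoint union factor across components (each component has an even number $2n_k$ of vertices, so a global matching exists); hence the total is $\prod_{k=1}^M n_k!=\bm n!$.

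Finally I would combine the pieces: inserting $\mathrm{haf}[(\bm{A}_{\mathrm{thm}})_{\bm n}]=\left(\frac{\bar n}{\bar n+1}\right)^{|\bm n|}\bm n!$ into Eq.~\eqref{eq:gbs_probability_distribution}, the factor $\bm n!$ cancels the prefactor $1/\bm n!$ and yields $\Pr(\bm n|\bm{A}_{\mathrm{thm}})=\Pr(\bm 0|\bm{A}_{\mathrm{thm}})\left(\frac{\bar n}{\bar n+1}\right)^{|\bm n|}$, as claimed. I do not anticipate a serious obstacle; the only point requiring care is the bookkeeping of which repeated rows and columns of $\bm{X}_{\bm n}$ are adjacent, i.e.\ verifying that the pattern construction applied to the off-diagonal block structure of $\bm{X}$ produces exactly the disjoint complete-bipartite graphs described above. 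As an independent cross-check I would note the physical argument: a lossless interferometer leaves a product of identical thermal states invariant (its covariance matrix is proportional to the identity and so commutes with any passive transformation), so the output is a product of $M$ independent thermal states and $\Pr(\bm n)=\prod_k \bar n^{\,n_k}/(\bar n+1)^{n_k+1}$ factorizes into precisely the stated form.
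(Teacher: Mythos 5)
Your proposal is correct and follows essentially the same route as the paper: both pull out the scalar $\tfrac{\bar n}{\bar n+1}$ by homogeneity of the Hafnian and then reduce the count to $\bm n!$, the paper via $\mathrm{haf}\bigl[\begin{smallmatrix}\bm 0&\bm G\\\bm G^{\mathrm T}&\bm 0\end{smallmatrix}\bigr]=\mathrm{perm}[\bm G]$ with $\bm G=\bigoplus_k\bm 1_{n_k}$ and $\mathrm{perm}[\bigoplus_k\bm 1_{n_k}]=\prod_k n_k!$. Your perfect-matching count of the disjoint union of $K_{n_k,n_k}$ is exactly the same combinatorics in graph language, so the two arguments coincide in substance.
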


    \begin{proof}
        Notice that
        \begin{equation}
            (\bm{A}_{\mathrm{thm}})_{\bm{n}}=\frac{\bar{n}}{\bar{n}+1}
        \begin{pmatrix}
            \bm{0}&(\mathbb{I}_M)_{(\bm{n})}\\
            (\mathbb{I}_M)_{(\bm{n})}&\bm{0}
        \end{pmatrix},    \label{eq:thermalA}
        \end{equation}
        where $\bm{M}_{(\bm{n})}$ is the matrix obtained by taking the $k$-th row and column of $\bm{M}$ and repeating them $n_k$ times. In the special case of $\mathbb{I}_M$ it can be seen that $(\mathbb{I}_M)_{(\bm{n})}=\bigoplus_{k=1}^M\bm{1}_{n_k}$, where $\bm{1}_m$ is a $m\times m$ matrix whose entries are all equal to one.

        Taking into account Eq.~(\ref{eq:thermalA}), we can write
        \begin{align}
            \mathrm{haf}[(\bm{A}_{\mathrm{thm}})_{\bm{n}}]&=\left(\frac{\bar{n}}{\bar{n}+1}\right)^{|\bm{n}|}
            \mathrm{haf}\left[
            \begin{pmatrix}
                \bm{0}&(\mathbb{I}_M)_{(\bm{n})}\nonumber\\
                (\mathbb{I}_M)_{(\bm{n})}&\bm{0}
            \end{pmatrix}
            \right]\\
            &=\left(\frac{\bar{n}}{\bar{n}+1}\right)^{|\bm{n}|}
            \mathrm{per}\left[(\mathbb{I}_M)_{(\bm{n})}\right]\\
            &=\left(\frac{\bar{n}}{\bar{n}+1}\right)^{|\bm{n}|}\bm{n}!,\nonumber
        \end{align}
        where $\mathrm{per}[\cdot]$ stands for the permanent of a matrix, and we have used the relation
        \begin{equation}
            \mathrm{haf}\left[
        \begin{pmatrix}
            \bm{0}&\bm{G}\\
            \bm{G}^{\mathrm{T}}&\bm{0}
        \end{pmatrix}
        \right] = \mathrm{per}[\bm{G}]    
        \end{equation}
        for any $m\times m$ matrix $\bm{G}$. Moreover, we have used the relation $\mathrm{per}\left[\bigoplus_{k=1}^M\bm{1}_{n_k}\right]=\prod_{k=1}^Mn_k!=\bm{n}!$~\cite{barvinok2016combinatorics}. It directly follows that
        \begin{equation}
            \Pr(\bm{n}|\bm{A}_{\mathrm{thm}})=\Pr(\bm{0}|\bm{A}_{\mathrm{thm}})\left(\frac{\bar{n}}{\bar{n}+1}\right)^{|\bm{n}|}.    
        \end{equation}        
    \end{proof}

    We finish this section by proving the convergence of the series expansion that leads to Eq.~\eqref{eq:q_function_expansion_1}. Recall that $q(\alpha,\bm{\phi},\bm{A})=[\mathrm{det}(\mathbb{I}_{2M}-\alpha\bm{\Omega}\bm{A})]^{-1/2}$. By noticing that $\mathbb{I}_{2M}-\alpha\bm{\Omega}\bm{A}=\exp(\log(\mathbb{I}_{2M}-\alpha\bm{\Omega}\bm{A}))$, and using the relation $\mathrm{det}[\exp(\bm{A})]=\exp[\mathrm{tr}(\bm{A})]$, we can write
    \begin{equation}
        q(\alpha,\bm{\phi},\bm{A})=\exp\left[-\frac{1}{2}\mathrm{tr}\left(\log(\mathbb{I}_{2M}-\alpha\bm{\Omega}\bm{A})\right)\right].
        \label{eq:q_function_log_form}
    \end{equation}
    We now use $\log(1+x)=\sum_{l=1}^\infty(-1)^{l+1}x^l/l$, which converges for $|x|<1$, to write
    \begin{equation}
        q(\alpha,\bm{\phi},\bm{A})=\exp\left[-\frac{1}{2}\mathrm{tr}\left(\sum_{l=1}^{\infty}-\frac{\alpha^l}{l}(\bm{\Omega}\bm{A})^{l}\right)\right].      
        \label{eq:q_function_log_form_2}
    \end{equation}
    This expression directly yields to Eq.~\eqref{eq:q_function_expansion_1}. We will focus now on the series inside the trace.
    \begin{proposition}
        \label{eq:convergence}
        The power series 
        \begin{equation}
            \sum_{l=1}^{\infty}\frac{\alpha^l}{l}(\bm{\Omega}\bm{A})^{l}
            \label{eq:series_convergence}
        \end{equation}
        converges for every GBS model $\bm{A}$ of the form of Eq.~\eqref{eq:gbs_model_general_structure} provided that $|\alpha|<1$.
    \end{proposition}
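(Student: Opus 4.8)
The plan is to reduce the statement to the operator-norm bound $\|\bm{\Omega}\bm{A}\|_{\mathrm{op}}<1$, where $\|\cdot\|_{\mathrm{op}}$ denotes the largest singular value of a matrix. Granting this bound, submultiplicativity gives $\|(\bm{\Omega}\bm{A})^{l}\|_{\mathrm{op}}\le\|\bm{\Omega}\bm{A}\|_{\mathrm{op}}^{l}\le c^{l}$ with $c:=\|\bm{\Omega}\bm{A}\|_{\mathrm{op}}<1$, so that for $|\alpha|\le 1$ the series $\sum_{l\ge 1}\frac{\alpha^{l}}{l}(\bm{\Omega}\bm{A})^{l}$ is dominated in norm by $\sum_{l\ge 1}\frac{c^{l}}{l}\le\frac{c}{1-c}<\infty$; hence it converges absolutely — and, finite dimensionality making all matrix norms equivalent, in the ordinary entrywise sense — for every $|\alpha|<1$. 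So the entire content is the norm estimate.

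First I would strip off the unitary factors. Since $\bm{\Omega}=\bm{W}\bm{X}\bm{W}$ is a product of the diagonal unitary $\bm{W}=\bm{D}(\bm{\phi})\oplus\bm{D}(\bm{\phi})$ and the permutation (hence orthogonal) matrix $\bm{X}$, it is unitary, so $\|\bm{\Omega}\bm{A}\|_{\mathrm{op}}\le\|\bm{A}\|_{\mathrm{op}}$ and it suffices to bound $\|\bm{A}\|_{\mathrm{op}}$. Here I would exploit the structure in Eq.~\eqref{eq:gbs_model_general_structure}: with $\bm{\mathcal{U}}=\bm{U}\oplus\bm{U}^{*}$ one checks directly, block by block, that $\bm{A}=\bm{\mathcal{U}}\,\bm{P}\,\bm{\mathcal{U}}^{\mathrm{T}}$, where $\bm{P}=\left(\begin{smallmatrix}\bm{\lambda}&\bm{\mu}\\\bm{\mu}&\bm{\lambda}\end{smallmatrix}\right)$ is real and symmetric. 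The point is that $\bm{\mathcal{U}}$ is unitary and therefore so is $\bm{\mathcal{U}}^{\mathrm{T}}$, so $\bm{A}$ is a \emph{unitary congruence} of $\bm{P}$ and the two matrices have the same singular values (indeed $\bm{A}^{\dagger}\bm{A}=(\bm{\mathcal{U}}^{\mathrm{T}})^{\dagger}(\bm{P}^{\dagger}\bm{P})\bm{\mathcal{U}}^{\mathrm{T}}$ is unitarily similar to $\bm{P}^{\dagger}\bm{P}$). Hence $\|\bm{A}\|_{\mathrm{op}}=\|\bm{P}\|_{\mathrm{op}}$, and since $\bm{\lambda},\bm{\mu}$ are diagonal a basis permutation turns $\bm{P}$ into $\bigoplus_{k=1}^{M}\left(\begin{smallmatrix}\lambda_k&\mu_k\\\mu_k&\lambda_k\end{smallmatrix}\right)$, a real symmetric matrix with eigenvalues $\lambda_k\pm\mu_k$; therefore $\|\bm{P}\|_{\mathrm{op}}=\max_{k}\max\!\bigl(|\lambda_k+\mu_k|,|\lambda_k-\mu_k|\bigr)$.

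It then only remains to show that each eigenvalue $\lambda_k\pm\mu_k$ lies strictly inside $(-1,1)$. I would substitute $u_k=(1+2\sigma_x^{(k)}/\hbar)^{-1}$ and $v_k=(1+2\sigma_p^{(k)}/\hbar)^{-1}$; since $\bm{\sigma}>0$ forces $0<\sigma_x^{(k)},\sigma_p^{(k)}<\infty$, both $u_k$ and $v_k$ lie in $(0,1)$. Equations~\eqref{eq:real_parameter_general_moments_1}–\eqref{eq:real_parameter_general_moments_2} read $\lambda_k=v_k-u_k$ and $\mu_k=1-u_k-v_k$, whence $\lambda_k+\mu_k=1-2u_k$ and $\lambda_k-\mu_k=2v_k-1$, and both of these belong to $(-1,1)$ because $u_k,v_k\in(0,1)$. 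Hence $\|\bm{\Omega}\bm{A}\|_{\mathrm{op}}\le\|\bm{A}\|_{\mathrm{op}}=\|\bm{P}\|_{\mathrm{op}}=\max_{k}\max(|1-2u_k|,|2v_k-1|)<1$, which is exactly what the first paragraph requires. I do not foresee a genuine obstruction here; the two places that need a little care are the congruence observation $\bm{A}=\bm{\mathcal{U}}\,\bm{P}\,\bm{\mathcal{U}}^{\mathrm{T}}$ (which is what lets one read the norm off the harmless $2\times 2$ blocks, and which would fail if one merely claimed $\bm{A}$ is \emph{similar} to something diagonal, since $\bm{\mathcal{U}}^{\mathrm{T}}\neq\bm{\mathcal{U}}^{-1}$ in general), and the use of the strict positivity and finiteness of the single-mode covariances to keep $u_k,v_k$ away from the endpoints $0$ and $1$, which is precisely what makes the bound strict rather than merely $\le 1$.
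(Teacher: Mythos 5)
Your proof is correct and follows essentially the same route as the paper: bound the spectral norm of $\bm{\Omega}\bm{A}$ using unitary invariance, reduce to the real symmetric $2\times 2$ blocks built from $\lambda_k,\mu_k$, show their eigenvalues ($1-2u_k$ and $2v_k-1$ in your parametrization, matching the paper's $1-2/(1+2\sigma_x^{(k)}/\hbar)$ and $1-2/(1+2\sigma_p^{(k)}/\hbar)$) lie strictly in $(-1,1)$ by positivity of the covariance, and dominate the series by $\sum_l c^l/l$. The only cosmetic difference is that the paper factors $\bm{X}\bm{A}=\bm{F}\bm{G}\bm{F}^{\dagger}$ as a unitary similarity, whereas you write $\bm{A}=(\bm{U}\oplus\bm{U}^*)\,\bm{P}\,(\bm{U}\oplus\bm{U}^*)^{\mathrm{T}}$ as a unitary congruence (correctly justifying the equality of singular values) and use the unitarity of $\bm{\Omega}$ separately; both yield the same bound.
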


    \begin{proof}
        Let us take the spectral norm of the power series and write
        \begin{align}
            \left\|\sum_{l=1}^{\infty}\frac{\alpha^l}{l}(\bm{\Omega\bm{A}})^{l}\right\|_2&\leq\sum_{l=1}^{\infty}\left\|\frac{\alpha^l}{l}(\bm{\Omega}\bm{A})^{l}\right\|_2
            =\sum_{l=1}^{\infty}\frac{|\alpha|^l}{l}\left\|(\bm{\Omega}\bm{A})^{l}\right\|_2
            \leq\sum_{l=1}^{\infty}\frac{|\alpha|^l}{l}\left\|\bm{\Omega}\bm{A}\right\|_2^{l}=\sum_{l=1}^{\infty}\frac{|\alpha|^l}{l}\left\|\bm{W}\bm{X}\bm{W}\bm{A}\right\|_2^{l}.
        \end{align}
        Notice that $\bm{W}\bm{X}=\bm{X}\bm{W}$. Moreover $\|\bm{W}^2\|_2=1$ since $\bm{W}$ is unitary. We therefore have
        \begin{align}
            \left\|\sum_{l=1}^{\infty}\frac{\alpha^l}{l}(\bm{\Omega\bm{A}})^{l}\right\|_2&\leq\sum_{l=1}^{\infty}\frac{|\alpha|^l}{l}\left\|\bm{W}^2\bm{X}\bm{A}\right\|_2^{l}
            \leq\sum_{l=1}^{\infty}\frac{|\alpha|^l}{l}\left\|\bm{X}\bm{A}\right\|_2^{l}.
        \end{align}

        According to Eq.~\eqref{eq:gbs_model_general_structure}, we can recast $\bm{X}\bm{A}$ as
        \begin{equation}
            \bm{X}\bm{A}=\bm{F}\bm{G}\bm{F}^{\dagger},
        \quad
        \bm{F}=
        \begin{pmatrix}
            \bm{U}^*&\bm{0}\\
            \bm{0}&\bm{U}
        \end{pmatrix},\quad
        \bm{G}=\begin{pmatrix}
            \bm{\mu}&\bm{\lambda}\\
            \bm{\lambda}&\bm{\mu}
        \end{pmatrix}\, , 
        \end{equation}
        where $\bm{F}$ is also unitary. This allows us to see that $\|\bm{X}\bm{A}\|_2=\|\bm{G}\|_2$. Since $\bm{G}$ is real and symmetric, its singular values are the absolute values of its eigenvalues. These, in turn, can be proven to be
        \begin{equation}
            \left\{1-\frac{2}{1+2\sigma_x^{(k)}/\hbar},\,\,1-\frac{2}{1+2\sigma_p^{(k)}/\hbar}\right\}.    
        \end{equation}
        Because $\sigma_x^{(k)},\sigma_p^{(k)}>0$ for all $k$, these eigenvalues lie in the interval $(-1,1)$, which implies that $\|\bm{G}\|_2<1$. We conclude that
        \begin{equation}
            \left\|\sum_{l=1}^{\infty}\frac{\alpha^l}{l}(\bm{\Omega\bm{A}})^{l}\right\|_2<\sum_{l=1}^{\infty}\frac{|\alpha|^l}{l}.    
        \end{equation}

        For $|\alpha|<1$, $\sum_{l=1}^{\infty}|\alpha|^l/l$ converges to $-\log(1-|\alpha|)$. Therefore, the series in Eq.~\eqref{eq:series_convergence} converges for $|\alpha|<1$.
    \end{proof}

    The specific case of the ideal squeezed state model is obtained for $2\sigma_x^{(k)}/\hbar=e^{2r_k}$, $2\sigma_p^{(k)}/\hbar=e^{-2r_k}$, with $r_k$ the squeezing parameter at mode $k$, for $1\leq k\leq R$ and $\sigma_x^{(k)}=\sigma_p^{(k)}=\hbar/2$ otherwise. This implies that $\mu_k=0$ for all $k$, while $\lambda_k=\tanh(r_k)$ for $1\leq k\leq R$ and zero otherwise.

    For single-mode squeezed states passing through single-mode loss channels before entering the interferometer (like those used in the transmission loss model of Sec.~\ref{sec:no_vacuum}) we have $2\sigma_x^{(k)}/\hbar=\eta_k e^{2r_k}+(1-\eta_k)$ and $2\sigma_p^{(k)}/\hbar=\eta_ke^{-2r_k}+(1-\eta_k)$, with $\eta_k$ the transmission parameter of the loss channel at mode $k$, for $1\leq k\leq R$ and $\sigma_x^{(k)}=\sigma_p^{(k)}=\hbar/2$ otherwise. This implies that 
    \begin{equation}
        \mu_k=\frac{\eta_k(1-\eta_k)\sinh^2(r_k)}{1+\eta_k(2-\eta_k)\sinh^2(r_k)},\quad\lambda_k=\frac{\eta_k\sinh(r_k)\cosh(r_k)}{1+\eta_k(2-\eta_k)\sinh^2(r_k)}
        \label{eq:lossy_sqz_model_especs}
    \end{equation}
    for $1\leq k\leq R$, while $\mu_k=\lambda_k=0$ otherwise. Notice that we can recover the ideal squeezed state model from these equations by setting $\eta_k=1$ for all $k$.

    Finally, the squashed state model (see also Sec.~\ref{sec:no_vacuum}) is obtained for $2\sigma_x^{(k)}/\hbar=1+4\bar{n}_k$, $2\sigma_p^{(k)}/\hbar=1$, with $\bar{n}_k$ the mean number of photons at mode $k$, for $1\leq k\leq R$ and $\sigma_x^{(k)}=\sigma_p^{(k)}=\hbar/2$ otherwise. This implies that $\mu_k=\lambda_k=\bar{n}_k/(1+2\bar{n}_k)$ for $1\leq k\leq R$, while $\mu_k=\lambda_k=0$ otherwise.
    
    \section{\label{app:index_structure}Index structure of the LXE}

    In this appendix we will show how to obtain Eq.~\eqref{eq:lxe_sqz_cycle_U_polynomial} and we will motivate the definition of the permutations $\Omega_{\bm{k}}$. Let us start by recalling that $\mathrm{LXE}\left(\bm{A}_{\mathrm{sqz}},\bm{A}_{\mathrm{sqz}};2N\right)$ is expressed as an integral of $Z_N[\bm{u}(\bm{\phi},\bm{U})]Z_N[\bm{u}(-\bm{\phi},\bm{U})]$ with respect to $d\bm{\phi}$, where
    \begin{equation}
        Z_N[\bm{u}(\bm{\phi},\bm{U})]=\sum_{\bm{k}^{(c)}}\,\prod_{a=1}^N\frac{1}{k_a!a^{k_a}}\prod_{a=1}^Nu_a^{k_a},
        \label{eq:Z_term}
    \end{equation}
    $u_a=\frac{1}{2}\mathrm{tr}\left[(\bm{D}^2\bm{V}\bm{D}^2\bm{V}^*)^a\right]$, $\bm{V}=\bm{U}\bm{\zeta}\bm{U}^{\mathrm{T}}$, 
    and $u_a$ depends on $\bm{\phi}$ through $\bm{D}=\bm{D(\phi)}$. In what follows, we will express $Z_N[\bm{u}(\bm{\phi},\bm{U})]$ in terms of the entries of $\bm{V}$.

    Notice that the entries of $\bm{D}^2\bm{V}\bm{D}^2\bm{V}^*$ can be written as
    \begin{equation}
        (\bm{D}^2\bm{V}\bm{D}^2\bm{V}^*)_{j_1,j_3}=\sum_{j_2=1}^Me^{2i(\phi_{j_1}+\phi_{j_2})}V_{j_1,j_2}V_{j_2,j_3}^*,    
    \end{equation}
    with $\{j_k\}$ denoting a set of dummy indices. Using this expression, we can readily see that
    \begin{align}
        [(\bm{D}^2\bm{V}\bm{D}^2\bm{V}^*)^2]_{j_1,j_5} =\sum_{j_2,j_3,j_4}e^{2i(\phi_{j_1}+\phi_{j_2}+\phi_{j_3}+\phi_{j_4})}V_{j_1,j_2}V_{j_3,j_4}V_{j_2,j_3}^*V_{j_4,j_5}^*,
    \end{align}
    or, for a general power $l$,
    \begin{align}
        [(\bm{D}^2\bm{V}\bm{D}^2\bm{V}^*&)^l]_{j_1,j_{2l+1}}=\sum_{j_2,\dots,j_{2l}}e^{2i(\phi_{j_1}+\cdots+\phi_{j_{2l}})}V_{j_1,j_2} \cdots V_{j_{2l-1},j_{2l}}V_{j_2,j_3}^* \cdots V_{j_{2l},j_{2l+1}}^*.
        \label{eq:sqz_index_structure_mat}
    \end{align}
    Taking the trace of $(\bm{D}^2\bm{V}\bm{D}^2\bm{V}^*)^l$ we obtain
    \begin{align}
        \mathrm{tr}[(\bm{D}^2\bm{V}\bm{D}^2\bm{V}^*&)^l]=\sum_{j_1,\dots,j_{2l}}e^{2i(\phi_{j_1}+\cdots+\phi_{j_{2l}})}V_{j_1,j_2} \cdots V_{j_{2l-1},j_{2l}}V_{j_2,j_3}^* \cdots V_{j_{2l},j_{1}}^*.\label{eq:sqz_index_structure_mat_trace}
    \end{align}
    
    Let us gather all the dummy indices in the sequence $\bm{j}_{1,2l}=(j_1,\dots,j_{2l})$. Notice that the subscripts in $\bm{j}$ indicate the \textit{labels} of the first and last dummy indices. We define $\omega_l \in S_{2l}$ as the permutation that transforms $(j_1,j_2,\dots,j_{2l-1}, j_{2l})$ into $(j_2,j_3,\dots,j_{2l},j_1)$:
    $\omega_l(\bm{j}_{1,2l})=(j_2,j_3,\dots,j_{2l},j_1)$. Furthermore, let us define $V\left[\bm{j}_{1,2l}\right]\equiv V_{j_1,j_2} \cdots V_{j_{2l-1},j_{2l}}$ and $V^*\left[\omega_l(\bm{j}_{1,2l})\right]\equiv V_{j_2,j_3}^* \cdots V_{j_{2l},j_{1}}^*$. Then, we can recast Eq.~\eqref{eq:sqz_index_structure_mat_trace} into 
    \begin{align}
        \mathrm{tr}[(\bm{D}^2\bm{V}\bm{D}^2\bm{V}^*)^l]=\!\!&\sum_{j_1,\dots,j_{2l}}\!\!E(\bm{j}_{1,2l})V\left[\bm{j}_{1,2l}\right]V^*\left[\omega_l(\bm{j}_{1,2l})\right],\label{eq:sqz_index_structure_mat_trace_recast}
    \end{align}
    where $E(\bm{j}_{1,2l})=\exp[2i(\phi_{j_1}+\cdots+\phi_{j_{2l}})]$.

    Defining a second sequence of dummy indices $\bm{j}'_{1,2l}=(j_1',\dots,j_{2l}')$, we can readily see that $V[\bm{j}_{1,2l}]V[\bm{j}'_{1,2l}]=V[\bm{j}_{1,2l}\oplus\bm{j}'_{1,2l}]$, with analogous relations holding for $V^*[\cdot]$ and $E[\cdot]$. These properties come in handy when computing powers of $\mathrm{tr}[(\bm{D}^2\bm{V}\bm{D}^2\bm{V}^*)^l]$. Indeed, consider the expression
    \begin{align}
        &\left(\mathrm{tr}[(\bm{D}^2\bm{V}\bm{D}^2\bm{V}^*)^l]\right)^2=\!\!\sum_{j_1,\dots,j_{2l}}\sum_{j_1',\dots,j_{2l}'}\!\!E(\bm{j}_{1,2l})E(\bm{j}'_{1,2l})V[\bm{j}_{1,2l}]V[\bm{j}'_{1,2l}]V^*[\omega_l(\bm{j}_{1,2l})]V^*[\omega_l(\bm{j}'_{1,2l})].
    \end{align}
    Renaming each dummy index $j_k'$ as $j_k'\rightarrow j_{2l+k}$, and using the direct sum properties of $V$, $V^*$ and $E$, we can write 
    \begin{align}
        \left(\mathrm{tr}[(\bm{D}^2\bm{V}\bm{D}^2\bm{V}^*)^l]\right)^2=\!\!&\sum_{j_1,\dots,j_{4l}}\!\!E(\bm{j}_{1,4l})V[\bm{j}_{1,4l}]V^*[\omega_l(\bm{j}_{1,2l})\oplus \omega_l(\bm{j}_{2l+1,4l})].
    \end{align}
    Applying this same procedure a given number of times, say $k_l$ times, we obtain  
    \begin{align}
            &\left(\mathrm{tr}[(\bm{D}^2\bm{V}\bm{D}^2\bm{V}^*)^l]\right)^{k_l}=\!\!\!\!\sum_{j_1,\dots,j_{2lk_l}}\!\!\!\!E(\bm{j}_{1,2lk_l})V[\bm{j}_{1,2lk_l}]V^*\!\!\left[\bigoplus_{p=1}^{k_l}\omega_l(\bm{j}_{2l(p-1)+1,2lp})\right].
        \label{eq:power_of_power_traces}
    \end{align}

    From the definition of $u_a$, we can see that Eq.~(\ref{eq:power_of_power_traces}) allows us
    to directly express $u_a^{k_a}$ in terms of
    the entries of $\bm{V}$ for an arbitrary
    value of $a$. The next step is to use
    Eq.~\eqref{eq:power_of_power_traces} to
    compute the product $\prod_{a=1}^Nu_a^{k_a}$.
    The strategy is completely analogous to the
    one we used to obtain
    Eq.~\eqref{eq:power_of_power_traces}: we
    define a set of primed dummy indices and make
    the product of two different sums, then rename the primed indices, and
    finally make the direct sum of the
    sequences of indices involved.
    
    Consider, for example, the product $u_1^{k_1}u_2^{k_2}$:
    \begin{align}
        u_1^{k_1}u_2^{k_2}&=\frac{1}{2^{k_1+k_2}}\sum_{j_1,\dots,j_{2k_1}}\!\!\!E(\bm{j}_{1,2k_1})V[\bm{j}_{1,2k_1}] V^*\!\!\left[\bigoplus_{p=1}^{k_1}\omega_1(\bm{j}_{2(p-1)+1,2p})\right]\sum_{j_1',\dots,j_{4k_2}'}\!\!\!E(\bm{j}'_{1,4k_2})V[\bm{j}'_{1,4k_2}V^*\!\!\left[\bigoplus_{p=1}^{k_2}\omega_2(\bm{j}'_{4(p-1)+1,4p})\right].
        \label{eq:u1k1ku2k}
    \end{align}
    Changing the dummy indices $j_b'$ as $j_b'\rightarrow j_{2k_1+b}$, we can rewrite Eq.~(\ref{eq:u1k1ku2k}) as
    \begin{align}
        &u_1^{k_1}u_2^{k_2}=\frac{1}{2^{k_1+k_2}}\sum_{j_1,\dots,j_{2k_1+4k_2}}\!\!\!E(\bm{j}_{1,2k_1+4k_2})V[\bm{j}_{1,2k_1+4k_2}]V^*\!\!\left[\bigoplus_{p=1}^{k_1}\omega_1(\bm{j}_{2(p-1)+1,2p})\bigoplus_{p=1}^{k_2}\omega_2(\bm{j}_{2k_1 + 4(p-1)+1,2k_1+4p})\right].
    \end{align}

    Applying this same process for the remaining $\{u_a\}$'s, we obtain the following general expression: 
    \begin{align}
        \prod_{a=1}^Nu_a^{k_a}=\frac{1}{2^{k_1+\dots+k_N}}\sum_{j_1,\dots,j_{2N}}\!\!\!E(\bm{j}_{1,2N})V[\bm{j}_{1,2N}]V^*\!\!\left[\bigoplus_{a=1}^{N}\bigoplus_{p=1}^{k_a}\omega_a(\bm{j}_{2v_{a-1} + 2a(p-1)+1,2v_{a-1}+2ap})\right],
        \label{eq:index_u_prod}
    \end{align}
    where we have used $k_1 + 2k_2+\dots+Nk_N=N$ and defined $v_a=\sum_{p=1}^apk_p$, $v_0\equiv0$.

    Write $\bm{j}\equiv\bm{j}_{1,2N}=(j_1,\dots,j_{2N})$. Moreover, let $\Omega_{\bm{k}}\in S_{2N}$ act on $\bm{j}$ as
    \begin{align}
            \Omega_{\bm{k}}(\bm{j})=\Omega_{\bm{k}}[(j_1,\dots,j_{2N})]&=\bigoplus_{a=1}^{N}\bigoplus_{p=1}^{k_a}\omega_a(\bm{j}_{2v_{a-1} + 2a(p-1)+1,2v_{a-1}+2ap})\nonumber\\
            &=\bigoplus_{a=1}^N\bigoplus_{p=1}^{k_a}\omega_{a}[(j_{2v_{a-1}+2a(p-1)+1},\dots,j_{2v_{a-1}+2ap})].
    \end{align}
    Then, we can express Eq.~\eqref{eq:index_u_prod} in the form
    \begin{align}
        \begin{split}
            &\prod_{a=1}^Nu_a^{k_a}=\frac{1}{2^{k_1+\dots+k_N}}\sum_{\bm{j}}E(\bm{j})V[\bm{j}]V^*[\Omega_{\bm{k}}(\bm{j})],
        \end{split}
        \label{eq:index_u_prod_2}
    \end{align}
    where $\sum_{\bm{j}}\equiv\sum_{j_1=1}^M\cdots\sum_{j_{2N}=1}^M$. This result allows us to readily write $Z_N[\bm{u}(\bm{\phi},\bm{U})]$ in the form
    \begin{equation}
        Z_N[\bm{u}(\bm{\phi},\bm{U})]=\!\sum_{\bm{k}^{(c)}}\prod_{a=1}^N\frac{1}{k_a!(2a)^{k_a}}\sum_{\bm{j}}E(\bm{j})V[\bm{j}]V^*[\Omega_{\bm{k}}(\bm{j})].
        \label{eq:Z_term_2}
    \end{equation}
    Going one step further, we can obtain the expansion of $Z_N[\bm{u}(\bm{\phi},\bm{U})]Z_N[\bm{u}(-\bm{\phi},\bm{U})]$ in terms of $V$ and $V^*$:
    \begin{align}
            Z_N[\bm{u}(\bm{\phi},\bm{U})]Z_N[\bm{u}(-\bm{\phi},\bm{U})]=\sum_{\bm{k}^{(c)}, \bm{l}^{(c)}}\prod_{a=1}^N\frac{1}{k_a!l_a!(2a)^{k_a+l_a}}\sum_{\bm{j},\bm{j}'}E(\bm{j})E^*(\bm{j}')V[\bm{j}\oplus\bm{j}']V^*[\Omega_{\bm{k}}(\bm{j})\oplus\Omega_{\bm{l}}(\bm{j}')].
        \label{eq:ZZ_term}
    \end{align}

    Recall now that Eq.~\eqref{eq:monomial_V_function_def} allows us to write monomials in the entries of $\bm{V}$ and $\bm{V}^*$ as polynomials in the entries of $\bm{U}$ and $\bm{U}^*$: 
    \begin{align}
        V[\bm{g}]V^*[\bm{h}]=V_{g_1,g_2}\cdots V_{g_{2l-1},g_{2l}}V^*_{h_1,h_2}\cdots V^*_{h_{2m-1},h_{2m}}
        =\sum_{\bm{\mu},\bm{\nu}}\zeta_{\bm{\mu}}\zeta_{\bm{\nu}}\,\mathcal{U}\left(\bm{g},\bm{{\bar{\mu}}}\,\vert\,\bm{h},\bm{\bar{\nu}}\right),
    \end{align}
    where the dummy indices in $\bm{\mu}=(\mu_1,\dots,\mu_l)$ and $\bm{\nu}=(\nu_1,\dots,\nu_m)$ take values in $\{1,\dots, M\}$, and $\bm{\bar{\mu}}=(\mu_1,\mu_1,\dots,\mu_l,\mu_l)$, $\bm{\bar{\nu}}=(\nu_1,\nu_1,\dots,\nu_m,\nu_m)$. As per Eq.~(\ref{eq:zetadefine}), $\zeta_{\bm{\mu}}=\zeta_{\mu_1}\cdots\zeta_{\mu_l}$, with $\{\zeta_{k}\}$ the diagonal entries of $\bm{\zeta}$; and
    \begin{align}
        \mathcal{U}\left(\bm{g},\bm{{\bar{\mu}}}\,\vert\,\bm{h},\bm{\bar{\nu}}\right)&=U_{g_1,\mu_1}U_{g_2,\mu_1}\cdots U_{g_{2l-1},\mu_l}U_{g_{2l},\mu_l}U^*_{h_1,\nu_1}U^*_{h_2,\nu_1}\cdots U^*_{h_{2m-1},\nu_m}U^*_{h_{2m},\nu_m}.   
    \end{align}

    We can now recast Eq.~\eqref{eq:ZZ_term} as 
    \begin{align}
            Z_N[\bm{u}(\bm{\phi},\bm{U})]Z_N[\bm{u}(-\bm{\phi},\bm{U})]
            =\sum_{\bm{k}^{(c)}, \bm{l}^{(c)}}\prod_{a=1}^N\frac{1}{k_a!l_a!(2a)^{k_a+l_a}}\sum_{\bm{j},\bm{j}'}E(\bm{j})E^*(\bm{j}')
            \sum_{\bm{\mu},\bm{\nu}}\zeta_{\bm{\mu}}\zeta_{\bm{\nu}}\,\mathcal{U}\left(\bm{j}\oplus\bm{j}',\bm{\bar{\mu}}\,\vert\,\Omega_{\bm{k}}(\bm{j})\oplus\Omega_{\bm{l}}(\bm{j}'),\bm{\bar{\nu}}\right).
        \label{eq:ZZ_term_2}
    \end{align}
    Integrating the previous equation with respect to $d\bm{\phi}$, multiplying by $\binom{\frac{R}{2}+N-1}{N}^{-2}$, and defining 
    \begin{align}
        I(\bm{j},\bm{j}')=\frac{1}{(2\pi)^M}\int_0^{2\pi}E(\bm{j})E^*(\bm{j}')\,d\bm{\phi}=\frac{1}{(2\pi)^M}\int_0^{2\pi}\exp\left[\sum_{m\in \bm{j},n\in \bm{j}'}\!\!\!\!2i\,(\phi_m-\phi_n)\right]d\bm{\phi},
    \end{align}
    we obtain the expression of $\mathrm{LXE}\left(\bm{A}_{\mathrm{sqz}},\bm{A}_{\mathrm{sqz}};2N\right)$ presented in Eq.~\eqref{eq:lxe_sqz_cycle_U_polynomial}.
    
    \section{\label{app:integral}Integral over phases}
    Consider the integral
    \begin{align}
        I(\bm{j},\bm{j}')&=\frac{1}{(2\pi)^M}\int_0^{2\pi}\exp\left[\sum_{m\in \bm{j},n\in \bm{j}'}\!\!\!\!2i\,(\phi_m-\phi_n)\right]d\bm{\phi},
    \end{align}
    where we see $\bm{j}$, $\bm{j}'$ as sequences of variables taking values in $\{1,\dots M\}$. Recall that $\phi_k\in [0,2\pi]$ for all $k$ and $d\bm{\phi}=d\phi_1\cdots d\phi_M$. 

    Let us turn our attention to the term inside the exponential:
    \begin{equation}
        2i\sum_{m\in\bm{j}}\phi_m-2i\sum_{n\in\bm{j}'}\phi_n.    
    \end{equation}
    As mentioned in the main text, if this sum is non-zero, there must be at least one $p\in\bm{j}$ or $p\in\bm{j}'$ such that
    \begin{equation}
        2i\sum_{m\in\bm{j}}\phi_m-2i\sum_{n\in\bm{j}'}\phi_n=2i\, z \phi_p + (\text{other terms}),    
    \end{equation}
    where $z$ is a non-zero integer. This expression represents the fact that the sum is unbalanced. Since the exponential can be factorized, we can focus on the integral involving $\phi_p$ only:
    \begin{align}
        \int_0^{2\pi}e^{2iz\phi_p}d\phi_p&=\left.\frac{1}{2iz}e^{2iz\phi_p}\right|_0^{2\pi}=\frac{1}{2iz}\left[(e^{2i\pi})^{2z}-1\right]=\frac{1}{2iz}(1-1)=0.
    \end{align}
    This means that the entirety of the integral is zero whenever the sum inside the exponential is different from zero. If the exponent is zero, we can readily see that the integral is equal to one:
    \begin{equation}
        \frac{1}{(2\pi)^M}\int_0^{2\pi}\exp(0)d\bm{\phi}=\frac{1}{(2\pi)^M}(2\pi)^M=1.   
    \end{equation}
    
    When considering the sum inside the exponential as a function of the variables $\bm{j}$ and $\bm{j}'$, we can recognize that we can make it vanish whenever $\bm{j}'$ is a permutation of $\bm{j}$ (this ensures the balance in the summations). This is the case even if we find that some of the $\{j_k\}$ have repeated values, which is allowed given the fact that all of them take values in the same set. The task now is to find a function of $\bm{j}$ and $\bm{j}'$ that is unity for any event that makes the sum in the exponential equal to zero, and that vanishes identically otherwise. The integral will then be equal to this function.

    Let us start by considering that all the indices in $\bm{j}$ take different values. Notice that this makes the indices in $\bm{j}'$ take different values as well. Consider a permutation $\sigma \in S_{2N}$. We can see that the function $F_\sigma(\bm{j}',\bm{j})=\delta_{j_1',\sigma(j_1)}\cdots\delta_{j_{2N}',\sigma(j_{2N})}$, with $\delta_{j,k}$ the usual Kronecker delta, vanishes whenever $\bm{j}'\neq \sigma(\bm{j})$, and is equal to one otherwise. Summing over all the permutations in $S_{2N}$, we obtain a function that is equal to one whenever $\bm{j}'$ is a permutation, any permutation, of $\bm{j}$, and vanishes otherwise:
    \begin{equation}
        F(\bm{j}',\bm{j})=\sum_{\sigma\in S_{2N}}F_{\sigma}(\bm{j}',\bm{j})=\sum_{\sigma\in S_{2N}}\prod_{a=1}^{2N}\delta_{j'_a,\sigma(j_a)}.
        \label{eq:F_delta_function}
    \end{equation}
    This is the value of $I(\bm{j},\bm{j}')$ \textit{when all the indices in $\bm{j}$ take different values}.

    When some of the $\{j_k\}$ have the same values as others, the situation becomes slightly more involved. To address this problem, it is convenient to first translate the phrase \textit{``there are some indices having the same value as others''} into a sequence of repeated dummy indices taken from $\bm{j}$. We can do this by considering the following procedure. First, identify all the indices in $\bm{j}$ that have the same value and gather them in individual sets, one for each different value that the indices take. The result of this step is a set of subsets of $\bm{j}$. For instance, consider $2N=6$, so $\bm{j}=(j_1,j_2,j_3,j_4,j_5,j_6)$, and let us suppose that we are in a situation where $j_1=j_4=j_3$, $j_2=j_5$ and $j_1\neq j_2\neq j_6$. After gathering the indices that have the same values, we obtain the set $\{\{j_1,j_3,j_4\},\{j_2,j_5\},\{j_6\}\}$. We may recognize that this procedure is equivalent to finding a \textit{partition} of $\bm{j}$ (i.e., a collection of non-empty, mutually disjoint subsets of $\bm{j}$, usually called blocks, whose union is equal to $\bm{j}$), where all the indices within each block of the partition take the same value. Next, we select one index within each block and replace all the indices in $\bm{j}$ that belong to the same block by this \textit{representative index}. In our example, let us choose $j_1$, $j_2$, and $j_6$ as representatives. After the replacement of indices belonging to the same block, we obtain the sequence $\bm{g}=(j_1,j_2,j_1,j_1,j_2,j_6)$. This procedure can be applied for any situation where we have repeated indices. Notice that selecting different representative indices leads to different $\bm{g}$. However, no matter the choice of representatives, the resulting $\bm{g}$ represent the same \textit{situation}.

    Having found a sequence $\bm{g}$ associated to a specific case of repeated indices, let us return to finding a function that vanishes whenever $\bm{j}'$ is not a \textit{unique permutation} of $\bm{g}$, and is unity otherwise. We could consider the function $F(\bm{j}',\bm{g})$, but we can readily observe that the sum $\sum_{\sigma\in S_{2N}}$ is overcounting the different permutations of $\bm{g}$, i.e., there are multiple permutations in $S_{2N}$ that when applied to $\bm{g}$ lead to the same result. This implies that if $\bm{j}'\neq\sigma(\bm{j})$, we obtain $F(\bm{j}',\bm{g})=0$, but when $\bm{j}'=\sigma(\bm{j})$, we generally do not obtain $F(\bm{j}',\bm{g})=1$. Fortunately, we can solve this issue in a simple manner: we normalize $F(\bm{j}',\bm{g})$ using the number of times the unique permutations of $\bm{g}$ are being overcounted. This number is the same for every distinct permutation, and depends only on the \textit{multiplicities} of the indices appearing in $\bm{g}$.
    Let $\{j_k\}_{k\in \Theta}$, where $\Theta\subset \{1,\dots,2N\}$, be the different dummy indices that appear in $\bm{g}$, and let each $j_k$ appear $t_k$ times within the sequence. Then the number of times each unique permutation of $\bm{g}$ is being overcounted by $\sum_{\sigma\in S_{2N}}$ is $\prod_{k\in\Theta}t_k!$. 
    
    On this account, we can see that the function
    \begin{equation}
        \left(\prod_{k\in\Theta}\frac{1}{t_k!}\right) F(\bm{j}',\bm{g})=\prod_{k\in\Theta}\frac{1}{t_k!}\sum_{\sigma\in S_{2N}}\prod_{a=1}^{2N}\delta_{j'_a,\sigma(g_a)}
        \label{eq:F_delta_function_repeated}
    \end{equation}
    vanishes whenever $\bm{j}'$ is not a unique permutation of $\bm{g}$, and is equal to one otherwise. This is the result of $I(\bm{j},\bm{j}')$ \textit{when we have a situation of repeated indices represented by $\bm{g}$}.

    We are now two steps away from finding a general expression for $I(\bm{j},\bm{j}')$. The first of these consists in finding a systematic way of computing all the possible $\bm{g}$, i.e., a systematic way of identifying all the events with repeated indices. We already gave a hint of how to do this when we explained how to construct $\bm{g}$; the key is to use the partitions of $\bm{j}$. Indeed, since the procedure of grouping indices in $\bm{j}$ that have the same values naturally leads to a partition of $\bm{j}$, we can invert the process and assign to each possible partition a sequence $\bm{g}$ representing an event with repeated indices. The procedure is the following: consider a partition $\Lambda$ of $\bm{j}$. We will think of all the indices within each block $\lambda\in \Lambda$ as having the same value. The corresponding $\bm{g}$ is constructed by choosing a representative index $j_\lambda$ for each $\lambda$, and then replacing all the $j_k$ in $\bm{j}$ that belong to the same $\lambda$ by the corresponding $j_\lambda$. For clarity, we will introduce the notation $\bm{g}\equiv\bm{j}[\Lambda,\{j_\lambda\}]$.

    Now, note that the degeneracy of each $\{j_\lambda\}$ is equal to the length, $|\lambda|$, of each block (i.e. the number of elements in each block). This allows us to write the normalization factor of $F(\bm{j}',\bm{j}[\Lambda,\{j_\lambda\}])$
    as $\Lambda!=\prod_{\lambda \in \Lambda}|\lambda|!$. Moreover, the event where all the indices in $\bm{j}$ take different values is represented by the partition $\Lambda_0=\{\{j_1\},\dots,\{j_{2N}\}\}$ (one index per block). For this partition $\Lambda_0!=1$.

    Since each partition $\Lambda$ of $\bm{j}$ gives us a unique way of grouping the indices $\{j_k\}$ in different blocks, each $\Lambda$ leads to a $\bm{j}[\Lambda,\{j_\lambda\}]$ representing a unique event with repeated indices. Note, however, that each $\Lambda$ is associated to several $\bm{j}[\Lambda,\{j_\lambda\}]$ differing only on the choice of representative indices $\{j_\lambda\}$. Nevertheless, all of these sequences represent the same unique situation where there are indices taking the same value as others.

    The final step for finding an expression for $I(\bm{j},\bm{j}')$ consists in constructing a function of $\bm{j}$ and $\bm{j}[\Lambda,\{j_\lambda\}]$ that is equal to one only \textit{when we have a situation of repeated indices represented by $\bm{j}[\Lambda,\{j_\lambda\}]$}, and vanishes otherwise. We need this in order to single out the contributions of each unique $\Lambda$ to the integral. The event that we need to identify can be equivalently written as \textit{``when all the indices in $\lambda\in \Lambda$ are equal to $j_\lambda$, and all the $\{j_\lambda\}$ are different from each other''}. This phrase can be readily written in terms of Kronecker deltas as:
    \begin{equation}
        \left[\prod_{\lambda\in\Lambda}\prod_{f\in\lambda}\delta_{j_\lambda,f}\right]\prod_{(\lambda\neq\mu)\in\Lambda}(1-\delta_{j_\lambda,j_\mu}).
        \label{eq:indentifying_partitions}
    \end{equation}

    We may now bring together Eqs.~\eqref{eq:F_delta_function}to~\eqref{eq:indentifying_partitions}; sum over all the possible ways of identifying events with repeated indices, i.e. sum over the set of all partitions of $\bm{j}$, $\mathcal{Q}[\bm{j}]$; and finally write $I(\bm{j},\bm{j}')$ as (see Eq.~\eqref{eq:phases_integral_result})
    \begin{align}
        I(\bm{j},\bm{j'})=\sum_{\Lambda\in\mathcal{Q}[\bm{j}]}\frac{1}{\Lambda!}F(\bm{j}',\bm{j}[\Lambda,\{j_\lambda\}])\left[\prod_{\lambda\in\Lambda}\prod_{f\in\lambda}\delta_{j_\lambda,f}\right]\prod_{(\lambda\neq\mu)\in\Lambda}(1-\delta_{j_\lambda,j_\mu}).
    \end{align}

    It is worth mentioning that in the subscript $\Lambda \in \mathcal{Q}[\bm{j}]$, $\bm{j}$ should be viewed as a collection of fixed indices, or objects, whose sole purpose is to determine all the possible partitions of a set with $2N$ elements. Thus, in this subscript, we must not replace any $j_k$ by one of its possible values $\{1,\dots,M\}$. The reason behind this is that the sum over partitions was included only as a way to index a series of events concerning the variables $\{j_k\}$. 

    To finish this section, let us turn our attention to the reorganization of sums that leads to Eq.~\eqref{eq:lxe_sqz_after_integral}. Let $h(\bm{j},\bm{j}')$ be an arbitrary function of $\bm{j}$ and $\bm{j}'$, and consider the expression
    \begin{align}
        \sum_{\bm{j},\bm{j}'}h(\bm{j},\bm{j}')I(\bm{j},\bm{j}')=\sum_{\bm{j},\bm{j}'}\sum_{\Lambda\in\mathcal{Q}[\bm{j}]}\frac{1}{\Lambda!}h(\bm{j},\bm{j}')F(\bm{j}',\bm{j}[\Lambda,\{j_\lambda\}])\left[\prod_{\lambda\in\Lambda}\prod_{f\in\lambda}\delta_{j_\lambda,f}\right]\prod_{(\lambda\neq\mu)\in\Lambda}(1-\delta_{j_\lambda,j_\mu}),
    \end{align}
    where we recall that $\sum_{\bm{j}}=\sum_{j_1=1}^M\cdots\sum_{j_{2N}=1}^M$.

    Choose an arbitrary partition $\Lambda\in\mathcal{Q}[\bm{j}]$, and let us focus on the term 
    \begin{align}
        T=\sum_{\bm{j},\bm{j}'}h(\bm{j},\bm{j}')F(\bm{j}',\bm{j}[\Lambda,\{j_\lambda\}]) \left[\prod_{\lambda\in\Lambda}\prod_{f\in\lambda}\delta_{j_\lambda,f}\right]\prod_{(\lambda\neq\mu)\in\Lambda}(1-\delta_{j_\lambda,j_\mu}).
    \end{align}
    Following the definition of $F(\bm{j}',\bm{j}[\Lambda,\{j_\lambda\}])$ in terms of Kronecker deltas, we can replace the sum over the indices $\bm{j}'$ by a sum over permutations $\sigma\in S_{2N}$ and, moreover, we can make the transformation $\bm{j}'\rightarrow \sigma(\bm{j}[\Lambda,\{j_\lambda\}])$:
    \begin{align}
        T=\sum_{\bm{j}}\sum_{\sigma \in S_{2N}}h(\bm{j},\sigma(\bm{j}[\Lambda,\{j_\lambda\}])) \left[\prod_{\lambda\in\Lambda}\prod_{f\in\lambda}\delta_{j_\lambda,f}\right]\prod_{(\lambda\neq\mu)\in\Lambda}(1-\delta_{j_\lambda,j_\mu}).
    \end{align}

    The term $\prod_{\lambda\in\Lambda}\prod_{f\in\lambda}\delta_{j_\lambda,f}$ allows us to make the transformation $\bm{j}\rightarrow \bm{j}[\Lambda,\{j_\lambda\}]$, and to turn the sum over $\bm{j}$ into a sum only over the values of the representative indices $\{j_\lambda\}$. We can write
    \begin{align}
        T=\sum_{\{j_\lambda\}}\sum_{\sigma \in S_{2N}}&h(\bm{j}[\Lambda,\{j_\lambda\}],\sigma(\bm{j}[\Lambda,\{j_\lambda\}]))\prod_{(\lambda\neq\mu)\in\Lambda}(1-\delta_{j_\lambda,j_\mu}).
    \end{align}

    Finally, the product of deltas $\prod_{(\lambda\neq\mu)\in\Lambda}(1-\delta_{j_\lambda,j_\mu})$ ensures that we focus only on the terms where the representative indices take different values. Then, $T$ will read
    \begin{equation}
        T=\sum_{\text{diff.}\{j_\lambda\}}\sum_{\sigma \in S_{2N}}h(\bm{j}[\Lambda,\{j_\lambda\}],\sigma(\bm{j}[\Lambda,\{j_\lambda\}])).
    \end{equation}

    Summing all the contributions from different partitions, we reach the result
    \begin{align}
        \sum_{\bm{j},\bm{j}'}h(\bm{j},\bm{j}')I(\bm{j},\bm{j}')=\sum_{\Lambda\in\mathcal{Q}[\bm{j}]}\sum_{\text{diff.}\{j_\lambda\}}\sum_{\sigma \in S_{2N}}\frac{1}{\Lambda!}h(\bm{j}[\Lambda,\{j_\lambda\}],\sigma(\bm{j}[\Lambda,\{j_\lambda\}])).
        \label{eq:reordering_sums}
    \end{align}

    Eq.~\eqref{eq:lxe_sqz_after_integral} is obtained by applying this result to 
    \begin{align}
        h(\bm{j},\bm{j}')=\sum_{\bm{\mu},\bm{\nu}}\zeta_{\bm{\mu}}\zeta_{\bm{\nu}}\,\mathcal{U}\left(\bm{j}\oplus\bm{j}',\bm{\bar{\mu}}\,\vert\,\Omega_{\bm{k}}(\bm{j})\oplus\Omega_{\bm{l}}(\bm{j}'),\bm{\bar{\nu}}\right).
    \end{align}
    
    \section{\label{app:weingarten}Weingarten calculus}

    In this appendix we gather the two key theorems regarding the Weingarten Calculus that we used in Sec.~\ref{sec:average_haar}.

    \begin{proposition}[Lemma 3 from Ref.~\cite{matsumoto2012general}]
        \label{prop:average_unitary_weingarten}
        Let $\bm{U}$ be a $M\times M$ Haar-distributed unitary matrix and let $\bm{j}=(j_1,\dots,j_n)$, $\bm{\mu}=(\mu_1,\dots,\mu_n)$, $\bm{j}'=(j'_1,\dots,j'_m)$ and $\bm{\mu}'=(\mu'_1,\dots,\mu'_m)$ be four sequences of indices in $[M]=\{1,\dots,M\}$ (i.e., each index can take values from 1 to $M$). If $m=n$
        \begin{equation}
            \mathbb{E}_{\bm{U}}\left[\,\mathcal{U}\!\left(\bm{j},\bm{\mu}\,|\,\bm{j}',\bm{\mu}'\right)\right] = \sum_{\substack{\varrho\in S_n\\\bm{j}'=\varrho(\bm{j})}}\sum_{\substack{\tau\in S_n\\\bm{\mu}'=\tau(\bm{\mu})}} \mathrm{Wg}_n(\varrho^{-1}\circ\tau;M),
            \label{eq:average_unitary_weingarten}
        \end{equation}
        and it vanishes otherwise. Here,
        \begin{equation}
            \mathcal{U}\!\left(\bm{j},\bm{\mu}\,|\,\bm{j}',\bm{\mu}'\right)=U_{j_1,\mu_1}\cdots U_{j_n,\mu_n}U^*_{j'_1,\mu'_1}\cdots U^*_{j'_n,\mu'_n},    
        \end{equation}
        and $\mathrm{Wg}_n(\sigma;M)$ is the Weingarten function for the unitary group $U(M)$~\cite{collins2022weingarten,matsumoto2012general}. The sums extend over all permutations $\varrho$, $\tau$ in the symmetric group of degree $n$, $S_n$, such that $\bm{j}'=\varrho(\bm{j})$ and $\bm{\mu}'=\tau(\bm{\mu})$.
    \end{proposition}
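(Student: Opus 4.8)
The plan is to recognize this as the Collins--Śniady Weingarten formula for the unitary group (indeed it is quoted here as Lemma~3 of Ref.~\cite{matsumoto2012general}), so the cleanest route is to invoke that reference; for completeness I would reconstruct the argument in two parts: a \emph{charge-balancing} step that produces the ``vanishes otherwise'' clause, and a \emph{Schur--Weyl} step that evaluates the average in the matched case.

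First I would handle the vanishing. Using left-invariance of the Haar measure, replace $\bm U$ by $\bm\Lambda\bm U$ with $\bm\Lambda=\mathrm{diag}(e^{i\theta_1},\dots,e^{i\theta_M})$. Each factor of $\mathcal{U}(\bm j,\bm\mu\,|\,\bm j',\bm\mu')$ transforms as $U_{j_k,\mu_k}\to e^{i\theta_{j_k}}U_{j_k,\mu_k}$ and $U^*_{j'_l,\mu'_l}\to e^{-i\theta_{j'_l}}U^*_{j'_l,\mu'_l}$, so the expectation acquires a factor $\exp\!\big[i\big(\sum_k\theta_{j_k}-\sum_l\theta_{j'_l}\big)\big]$. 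Since this holds for every $\bm\theta\in[0,2\pi)^M$, integrating over $\bm\theta$ forces $\sum_k\bm e_{j_k}=\sum_l\bm e_{j'_l}$ in $\mathbb{Z}^M$ (with $\bm e_i$ the standard basis vectors), i.e.\ the multisets $\{j_1,\dots,j_n\}$ and $\{j'_1,\dots,j'_m\}$ agree; in particular $m=n$. Repeating the argument with right multiplication $\bm U\to\bm U\bm\Lambda$ forces the multisets $\{\mu_k\}$ and $\{\mu'_l\}$ to agree as well. When $m=n$ these two multiset conditions are exactly the existence of $\varrho,\tau\in S_n$ with $\bm j'=\varrho(\bm j)$ and $\bm\mu'=\tau(\bm\mu)$; when they fail---in particular whenever $m\neq n$---the sum on the right of Eq.~\eqref{eq:average_unitary_weingarten} is empty and both sides vanish.

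For the matched case ($m=n$) I would invoke Schur--Weyl duality. The operator $P=\mathbb{E}_{\bm U}[\bm U^{\otimes n}\otimes\overline{\bm U}^{\,\otimes n}]$ is the orthogonal projection onto the invariant subspace of the representation $\bm U\mapsto \bm U^{\otimes n}\otimes\overline{\bm U}^{\,\otimes n}$; identifying $(\mathbb{C}^M)^{\otimes n}\otimes\overline{(\mathbb{C}^M)^{\otimes n}}$ with $\mathrm{End}\big((\mathbb{C}^M)^{\otimes n}\big)$, that subspace is the commutant of the diagonal $U(M)$-action on $(\mathbb{C}^M)^{\otimes n}$, which by Schur--Weyl duality is spanned by the tensor-factor permutation operators $\{\rho(\sigma):\sigma\in S_n\}$. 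Writing $P=\sum_{\varrho,\tau\in S_n}W_{\varrho,\tau}\,\rho(\varrho)\otimes\overline{\rho(\tau)}$ and imposing idempotence of $P$, one finds that $W$ is the (pseudo-)inverse of the Gram matrix $G_{\varrho,\tau}=\mathrm{tr}\big(\rho(\varrho)\rho(\tau)^{-1}\big)=M^{\,c(\varrho^{-1}\circ\tau)}$, where $c(\cdot)$ counts cycles, and $W_{\varrho,\tau}=\mathrm{Wg}_n(\varrho^{-1}\circ\tau;M)$ by definition of the Weingarten function. Extracting the $(\bm j,\bm\mu\,;\,\bm j',\bm\mu')$ matrix element of this expansion gives
\begin{equation*}
\mathbb{E}_{\bm U}\!\left[\mathcal{U}(\bm j,\bm\mu\,|\,\bm j',\bm\mu')\right]=\sum_{\varrho,\tau\in S_n}\Big(\prod_{a=1}^n\delta_{j'_a,\varrho(j_a)}\Big)\Big(\prod_{a=1}^n\delta_{\mu'_a,\tau(\mu_a)}\Big)\,\mathrm{Wg}_n(\varrho^{-1}\circ\tau;M),
\end{equation*}
and collapsing the two products of Kronecker deltas restricts the $\varrho$-sum to $\{\varrho:\bm j'=\varrho(\bm j)\}$ and the $\tau$-sum to $\{\tau:\bm\mu'=\tau(\bm\mu)\}$, which is precisely Eq.~\eqref{eq:average_unitary_weingarten}.

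The hard part will be Step~2: justifying that the permutation operators span the commutant and---crucially, so the Gram matrix can be inverted---are linearly independent, which holds cleanly for $M\ge n$ but requires a pseudo-inverse (and continuity in $M$) when $M<n$; this is exactly the point at which the definition of $\mathrm{Wg}_n(\cdot;M)$ must be pinned down. A secondary but error-prone issue is matching the permutation-action convention of the paper---so that ``$\bm j'=\varrho(\bm j)$'' corresponds to the index pattern $\prod_a\delta_{j'_a,\varrho(j_a)}$ produced by $\rho(\varrho)$---since a stray inverse would replace $\varrho^{-1}\circ\tau$ by $\tau\circ\varrho^{-1}$ inside the Weingarten function. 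Since every use of this lemma in Sec.~\ref{sec:average_haar} is in the asymptotic regime $M\to\infty$, restricting to $M\ge n$ removes the only genuine difficulty, and the statement then follows from the cited references.
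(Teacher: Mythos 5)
Your proposal is correct, but note that the paper does not prove this statement at all: it is imported verbatim as Lemma~3 of the cited reference~\cite{matsumoto2012general}, so "invoke the reference" is precisely the paper's approach, and your reconstruction is the standard Collins--\'Sniady argument behind that reference. The two ingredients you supply are the right ones: left/right Haar invariance under diagonal phases gives the ``vanishes otherwise'' clause (including $m\neq n$), and Schur--Weyl duality plus (pseudo-)inversion of the Gram matrix $M^{c(\varrho^{-1}\circ\tau)}$ gives the matched case; your caveats about linear independence of the permutation operators only for $M\ge n$ (pseudo-inverse otherwise) and about fixing the permutation-action convention so that the argument of $\mathrm{Wg}_n$ is $\varrho^{-1}\circ\tau$ are exactly the points that need care, and the restriction $M\ge n$ is harmless for the paper's use of the lemma in the $M\to\infty$ regime.
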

    Notice that we can write 
    \begin{align}
        \sum_{\substack{\varrho\in S_n\\\bm{j}'=\varrho(\bm{j})}}\sum_{\substack{\tau\in S_n\\\bm{\mu}'=\tau(\bm{\mu})}} \mathrm{Wg}_n(\varrho^{-1}\circ\tau;M)=\sum_{\varrho\in S_n}\sum_{\tau\in S_n}\Delta[\bm{j}'\,\vert\,\varrho(\bm{j})]\Delta[\bm{\mu}'\,\vert\,\tau(\bm{\mu})]\mathrm{Wg}_n(\varrho^{-1}\circ\tau;M),
    \end{align}
    with $\Delta[\bm{j}'\,\vert\,\varrho(\bm{j})]=\prod_{a=1}^n\delta_{j'_a,\varrho(j_a)}$, which corresponds to the form used in Eq.~\eqref{eq:average_monomial}.

    \begin{proposition}[Corollary 2.7 from Ref.~\cite{collins2006integration}]
    \label{prop:asymptotic_weingarten}
        \begin{equation}
            M^{n+||\sigma||}\,\mathrm{Wg}_n(\sigma;M)=\mathrm{Moeb}(\sigma) + \mathcal{O}(M^{-2}),
            \label{eq:asymptotic_weingarten_app}
        \end{equation}
        where $||\sigma||$ is the minimum number of transpositions in which we can write $\sigma$, and
        \begin{equation}
            \mathrm{Moeb}(\sigma)=\prod_{k=1}^l\mathrm{Cat}(|\chi_k|-1)(-1)^{|\chi_k|-1}    
        \end{equation}
        is the Möbius function. In this expression, the $\{\chi_k\}$ are the disjoint cycles in the cycle decomposition of $\sigma$ (i.e., we write $\sigma=\chi_1\chi_2\cdots\chi_l$), and $|\chi_k|$ denotes the length of the cycle $\chi_k$. $\mathrm{Cat}(n)=(2n)!/(n!(n+1)!)$ stands for the $n$-th Catalan number. 
    \end{proposition}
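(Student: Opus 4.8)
The plan is to compute $s(\bm{A}_{\mathrm{sqz}};2N)$ directly from its definition by setting $\bm{A}=\bm{B}=\bm{A}_{\mathrm{sqz}}$ in the linear cross-entropy~\eqref{eq:lxe_definition} and then executing the four-stage reduction announced at the start of the section. The central difficulty is that the defining sum in~\eqref{eq:lxe_definition} ranges over all detection patterns $\bm{n}\in K(N)$, which is intractable directly. My first move is therefore to eliminate the pattern sum entirely by passing to an integral representation: I would twist the two copies of $\bm{A}_{\mathrm{sqz}}$ by conjugate diagonal phase matrices $\bm{W}(\pm\bm{\phi})$, use the normalization identity $\sum_{\bm{n}}\Pr(\bm{n}|\bm{A})=1$ together with the Hafnian scaling property~\eqref{eq:hafnian_property}, and exploit Fourier orthogonality $\int_0^{2\pi}e^{2i(\bm{n}-\bm{m})\cdot\bm{\phi}}\,d\bm{\phi}=(2\pi)^M\delta_{\bm{m},\bm{n}}$ to project onto the fixed-$N$ sector. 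Recasting $q(\alpha,\bm{\phi},\bm{A})$ as the generating function of the symmetric-group cycle index $Z_N$ then yields the pattern-free form~\eqref{eq:integral_lxe_cycle}, from which all explicit reference to measurement outcomes has disappeared.

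Next I would specialize to $\bm{A}_{\mathrm{sqz}}=\tanh(r)\,\bm{V}\oplus\bm{V}^{*}$ with $\bm{V}=\bm{U}\bm{\zeta}\bm{U}^{\mathrm{T}}$. Evaluating the determinant at $\bm{\phi}=\bm{0}$ gives the closed form~\eqref{eq:q_function_sqz_at_0}, which fixes the normalization $\bar{\mathcal{D}}$ and produces the prefactor $\binom{\frac{R}{2}+N-1}{N}^{-2}$; this is where the $\tanh(r)$ dependence cancels and the factor $[(R-2)!!/(R+2N-2)!!]^{2}$ will ultimately originate. For general $\bm{\phi}$, Sylvester's identity collapses the $2M$-dimensional determinant to an $M$-dimensional one, so that $u_k=\tfrac12\mathrm{tr}[(\bm{D}^2\bm{V}\bm{D}^2\bm{V}^{*})^{k}]$. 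Expanding these trace powers as monomials in the entries of $\bm{V}$, and then via $\bm{V}=\bm{U}\bm{\zeta}\bm{U}^{\mathrm{T}}$ in the entries of $\bm{U}$, rewrites the integrand $Z_N[\bm{u}(\bm{\phi},\bm{U})]Z_N[\bm{u}(-\bm{\phi},\bm{U})]$ as a polynomial in $U$ and $U^{*}$. The cyclic bookkeeping of the traces is precisely what the permutation $\Omega_{\bm{k}}$ of~\eqref{eq:big_omega_permutation_def} encodes: each cyclic shift $\omega_a$ acts on a block of $2a$ indices, with the blocks arranged according to the composition constraint $k_1+2k_2+\cdots+Nk_N=N$.

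The third stage integrates the phases away. The integral $I(\bm{j},\bm{j}')$ of~\eqref{eq:phases_integral} equals $1$ exactly when the phase exponents balance, i.e.\ when $\bm{j}'$ is a permutation of $\bm{j}$, and vanishes otherwise. The subtlety is that the $\{j_k\}$ may coincide, so I would organize the outcome as a sum over \emph{set partitions} $\Lambda$ of $\bm{j}$: the product $\prod_{\lambda}\prod_{f\in\lambda}\delta_{j_\lambda,f}$ forces indices in a block to share a value, the inclusion–exclusion factor $\prod_{\lambda\neq\mu}(1-\delta_{j_\lambda,j_\mu})$ makes representatives distinct, and $1/\Lambda!$ corrects the overcount by the block-size factorials, giving~\eqref{eq:phases_integral_result}. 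Substituting and reorganizing the nested sums (Appendix~\ref{app:integral}) collapses the sum over $\bm{j}'$ into a sum over $\sigma\in S_{2N}$ and the sum over $\bm{j}$ into a sum over distinct representative indices, producing~\eqref{eq:lxe_sqz_after_integral}. Finally I would average over the Haar measure: Lemma~3 (Prop.~\ref{prop:average_unitary_weingarten}) turns each monomial $\mathcal{U}$ into the double sum~\eqref{eq:average_monomial} of Weingarten functions over $\varrho,\tau\in S_{4N}$, with the $\bar{\bm{\nu}}=\tau(\bar{\bm{\mu}})$ constraint defining $f(\bm{\zeta},\tau)$ and the $\varrho$-constraint restricting to $\bar{S}$. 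Invoking the asymptotic expansion of Prop.~\ref{prop:asymptotic_weingarten}, only the $\tau=\varrho$ terms survive at leading order $M^{-4N}$, giving~\eqref{eq:polynomial_to_integrate_2}; counting distinct representatives via $M!/(M-N_\Lambda)!\sim M^{N_\Lambda}$ shows the dominant contribution comes from the finest partition $N_\Lambda=2N$, so the averaged LXE scales as $M^{-2N}$ and exactly cancels the normalization $\binom{M+2N-1}{2N}\sim M^{2N}/(2N)!$. Evaluating $f(\bm{\zeta},\varrho)=R^{\ell(\varrho)}$ through the coset-type count~\eqref{eq:coset_type_sum}, with $\ell(\varrho)$ the number of connected components of $\bm{\Gamma}(\varrho)$, converts $\sum_{\varrho\in\bar{S}}R^{\ell(\varrho)}$ into $\sum_{\ell}b_\ell R^{\ell}$ and reassembles the weights into the $c_\ell$ of~\eqref{eq:coefficients_sqz_lxe_score}; non-negativity is immediate since each $b_\ell$ counts permutations. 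Combining with $\binom{\frac{R}{2}+N-1}{N}=(R+2N-2)!!/[2^{N}N!\,(R-2)!!]$ reproduces~\eqref{eq:main_result_lxe_sqz}.

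The step I expect to be the main obstacle is the asymptotic bookkeeping in the final stage: three competing powers of $M$ must be reconciled — the $M^{-4N}$ from the leading Weingarten term, the $M^{N_\Lambda}$ from counting distinct representatives, and the $M^{2N}$ from the score's normalization. One must argue carefully that the finest partition ($N_\Lambda=2N$, whence $\Lambda!=1$) strictly dominates, that every coarser partition as well as the $\mathcal{O}(M^{-2})$ Weingarten corrections are suppressed, and that the limit is therefore finite and independent of the specific representative values. This is entangled with the partition-based treatment of repeated indices in the phase integral, which is the other delicate point; the remaining algebra — the determinant reductions, the cycle-index expansion, and the double-factorial identities — is essentially routine by comparison.
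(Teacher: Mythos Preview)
You have proved the wrong statement. The proposition in question, Prop.~\ref{prop:asymptotic_weingarten}, is the asymptotic expansion of the Weingarten function itself,
\[
M^{n+\|\sigma\|}\,\mathrm{Wg}_n(\sigma;M)=\mathrm{Moeb}(\sigma)+\mathcal{O}(M^{-2}),
\]
which the paper simply \emph{quotes} as Corollary~2.7 from Ref.~\cite{collins2006integration} and does not prove. What you have written is instead a (faithful) outline of the derivation of the main result Eq.~\eqref{eq:main_result_lxe_sqz}, i.e.\ the content of Sec.~\ref{sec:lxe_score_sqz}. That derivation \emph{uses} Prop.~\ref{prop:asymptotic_weingarten} as an input (at the step you call ``Invoking the asymptotic expansion of Prop.~\ref{prop:asymptotic_weingarten}''), but nowhere establishes it.

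A proof of the stated proposition would have to proceed from the character expansion of the Weingarten function,
\[
\mathrm{Wg}_n(\sigma;M)=\frac{1}{n!^2}\sum_{\lambda\vdash n}\frac{\chi^\lambda(1)^2\,\chi^\lambda(\sigma)}{s_\lambda(1^M)},
\]
and analyze the large-$M$ behavior of the Schur polynomials $s_\lambda(1^M)$, or alternatively use the Jucys--Murphy realization of $\mathrm{Wg}_n$ and a genus/length expansion to extract the leading M\"obius term cycle-by-cycle. None of that appears in your proposal. Your four-stage plan is a correct summary of how the paper obtains the LXE score, but it is not a proof of Prop.~\ref{prop:asymptotic_weingarten}.
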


    \section{\label{app:general_sqz_score}Computation of the LXE score for the ideal squeezed state model with different squeezing parameters}

    In this appendix we complete the details of the computation of the ideal score for GBS setups using different squeezing parameters, and show that resulting expression is consistent with the findings of Sec.~\ref{sec:lxe_score_sqz}.

    Our main goal in this section is to prove that
    \begin{equation}
        \sum_{\bm{\mu}\in [R]^{2N}} \!\!\!\sum_{\substack{\bm{\nu}\in [R]^{2N}\\ \text{s.t }\bm{\bar{\nu}}=\varrho(\bm{\bar{\mu}})}}\!\prod_{a=1}^{2N}\tanh(r_{\mu_a})\tanh(r_{\nu_a}) = \left(\prod_{b\in\eta(\varrho)}\varepsilon_{2b}\right)R^{\ell(\varrho)}.
        \label{eq:rel_f_1}
    \end{equation}
    In order to do so, let us first briefly recall the definition of all the terms involved in this relation. $\bm{\mu}=(\mu_1,\dots,\mu_{2N})$, $\bm{\nu}=(\nu_1,\dots,\nu_{2N})$ are vectors whose integer components take values in the set $[R]=\{1,\dots,R\}$ (this is indicated by the symbols $\bm{\mu}\in [R]^{2N}$, $\bm{\nu}\in [R]^{2N}$). $\bar{\bm{\mu}}=(\bar{\mu}_1,\dots,\bar{\mu}_{4N})=(\mu_1,\mu_1,\dots,\mu_{2N},\mu_{2N})$, and $\bar{\bm{\nu}}$ has an analogous definition. Let us remind the reader that $R$ is the number of modes with a squeezed state at the input. The squeezing parameters of the input single-mode squeezed states are $\{r_k\}_{k=1}^R$. $\eta(\varrho)$ stands for the coset-type of $\varrho\in S_{4N}$, and $\ell(\varrho)$ is the length of $\eta(\varrho)$. Finally, $\varepsilon_a$ is defined as
    \begin{equation}
        \varepsilon_a = \frac{1}{R}\sum_{k=1}^{R}\tanh^a(r_k).    
    \end{equation}

    Now, notice that the second sum in the left-hand side of Eq.~\eqref{eq:rel_f_1} runs over values of $\bm{\nu}$ that satisfy the condition $\bar{\bm{\nu}}=\varrho(\bar{\bm{\mu}})$. Along with the definitions of $\bar{\bm{\nu}}$ and $\bar{\bm{\mu}}$, this condition allows us to write the following relations:
    \begin{align}
        &\bar{\mu}_{2k-1}=\bar{\mu}_{2k}=\mu_k,\;k\in\{1,\dots,2N\},\label{eq:mu_rel_1}\\
        &\bar{\mu}_{\varrho(2k-1)}=\bar{\mu}_{\varrho(2k)}=\nu_k,\;k\in\{1,\dots,2N\}\label{eq:mu_rel_2}.
    \end{align}
    The second of these relations leads to two types of constraints: (i) a constraint that defines each $\nu_k$ in terms of $\mu_k$ and $\varrho$; and (ii) a constraint between different $\{\mu_k\}$, which comes from the relation $\bar{\mu}_{\rho(2k-1)}=\bar{\mu}_{\varrho(2k)}$. This means that the final sum in Eq.~\eqref{eq:rel_f_1} will run only over a number of ``free'' $\{\tilde{\mu}_l\}$. In what follows, we will determine how many of these free indices there are.

    Let us recall that each $\varrho\in S_{4N}$ has an associated undirected graph, $\bm{\Gamma}(\varrho)$, whose vertices are $\{1,\dots,4N\}$, and whose edges are defined by the sets $\{(2k-1, 2k),\,k\in\{1,\dots,2N\}\}$ and $\{(\varrho(2k-1), \varrho(2k)),\,k\in\{1,\dots,2N\}\}$. Combined with relations~\eqref{eq:mu_rel_1} and~\eqref{eq:mu_rel_2}, this definition implies that for every edge in $\bm{\Gamma}(\varrho)$ there will be an equality of the form $\bar{\mu}_{2k}=\bar{\mu}_{2k-1}$ or $\bar{\mu}_{\varrho(2k)}=\bar{\mu}_{\varrho(2k-1)}$. This, in turn, leads to the following result:
    \begin{proposition}
        \label{prop:connected_equiv_equal}
        Let $a,b\in\{1,\dots,4N\}$ and $a\neq b$, then $\bar{\mu}_a=\bar{\mu}_b$ if and only if the vertices $a$ and $b$ are connected in $\Gamma(\varrho)$.
    \end{proposition}
    \begin{proof}
        If the vertices $a$ and $b$ are connected in $\bm{\Gamma}(\varrho)$, then there exists a sequence of vertices $\{v_1,v_2,\dots,v_p\}$, with $v_1=a$, $v_p=b$ and $2\leq p\leq 4N$, where $v_{k}$ is adjacent to $v_{k+1}$, i.e. there is an edge connecting them, for all $k\in\{1,\dots,p-1\}$. The presence of an edge between $v_{k}$ and $v_{k+1}$, implies that $\bar{\mu}_{v_k}=\bar{\mu}_{v_{k+1}}$. Since these relations hold for every $k$, we can conclude that $\bar{\mu}_{v_1}=\bar{\mu}_{v_p}$, so $\bar{\mu}_a=\bar{\mu}_b$. 

        Suppose now that $\bar{\mu}_{a}=\bar{\mu}_b$. If there exists some $k\in\{1,\dots,2N\}$ such that $a=2k-1$, $b=2k$ or $a=\varrho(2k-1),b=\varrho(2k)$ (or the same relations but inverting the places of $a$ and $b$), then $a$ and $b$ are connected. In any other case, we can find a new vertex $v_2$ adjacent to $a$ that satisfies $\bar{\mu}_a=\bar{\mu}_{v_2}$, either by Eq.~\eqref{eq:mu_rel_1} or Eq.~\eqref{eq:mu_rel_2}, and check if it is connected to $b$. If not, we can find a second vertex $v_3$ adjacent to $v_2$ but not to $a$ (because every vertex lies in exactly two edges) that will satisfy $\bar{\mu}_a=\bar{\mu}_{v_2}=\bar{\mu}_{v_3}$, then we can verify whether it is adjacent to $b$ or not. We can repeat this process until we find a sequence of adjacent vertices $\{v_2,\dots,v_{p-1}\}$ that satisfy $\bar{\mu}_a=\bar{\mu}_{v_2}=\cdots=\bar{\mu}_b$. The existence of this sequence implies that $a$ and $b$ are connected in $\bm{\Gamma}(\varrho)$.
    \end{proof}

    Following this proposition, we can see that if $\bar{\mu}_a\neq\bar{\mu}_b$, then the vertices $a$ and $b$ will belong to different connected components in $\bm{\Gamma}(\varrho)$. Since every $\bar{\mu}_a$ is equal to some $\mu_k$, with $k\in\{1,\dots,2N\}$, this implies that there will be as many different (or free) $\{\tilde{\mu}_{l}\}$ as there are connected components in $\bm{\Gamma}(\varrho)$. Therefore, we may say that the number of different $\{\tilde{\mu}_l\}$ is equal to the length of the coset-type of $\varrho$, $\ell(\varrho)$.

    In each connected component, an edge of the form $(2k-1,2k)$ determines a single $\mu_k$ via the relation $\bar{\mu}_{2k-1}=\bar{\mu}_{2k}=\mu_k$. The edges of the form $(\varrho(2k-1),\varrho(2k))$ determine which $\{\mu_k\}$ are equal to others. This implies that the number of repetitions, i.e., the degeneracy, of every free $\tilde{\mu}_l$ will be equal to the number of edges of the form $(2k-1, 2k)$ in its corresponding connected component. This number is equal to half the length of the connected component.

    Let us recall the definition of the coset-type of $\varrho\in S_{4N}$. Let $2\eta_1,2\eta_2,\dots,2\eta_{\ell}$, with $\eta_1\geq\eta_2\geq\cdots\geq\eta_\ell\geq1$, be the lengths of the different connected components in $\bm{\Gamma}(\varrho)$. The coset-type of $\varrho$ is defined as $\eta(\varrho)=(\eta_1,\dots,\eta_\ell)$. As can be seen, $\eta_l$ is half the length of the $l$-th connected component of $\bm{\Gamma}(\varrho)$, which, according to the argument given above, will be equal to the degeneracy of the free $\tilde{\mu}_l$ corresponding to that component. This means that we can determine the degeneracies of all $\{\tilde{\mu}_l\}$ by computing $\eta(\varrho)$. 

    We are now in the position to prove Eq.~\eqref{eq:rel_f_1}. Taking into account that each $\{\nu_k\}$ will be equal to some free $\tilde{\mu}_l$, we may write 
    \begin{equation}
        \sum_{\bm{\mu}\in [R]^{2N}} \!\!\!\sum_{\substack{\bm{\nu}\in [R]^{2N}\\ \text{s.t }\bm{\bar{\nu}}=\varrho(\bm{\bar{\mu}})}}\!\prod_{a=1}^{2N}\tanh(r_{\mu_a})\tanh(r_{\nu_a}) = \sum_{\{\tilde{\mu}_l\}}\prod_{a=1}^{\ell(\varrho)}\tanh^{2\eta_a}(r_{\tilde{\mu}_a})=\prod_{a=1}^{\ell(\varrho)}\sum_{\tilde{\mu}_a=1}^R\tanh^{2\eta_a}(r_{\tilde{\mu}_a}).
        \label{eq:rel_f_2}
    \end{equation}
    Recalling the definition of $\varepsilon_a$, we obtain
    \begin{equation}
        \prod_{a=1}^{\ell(\varrho)}\sum_{\tilde{\mu}_a=1}^R\tanh^{2\eta_a}(r_{\tilde{\mu}_a})=\prod_{a=1}^{\ell(\varrho)}\varepsilon_{2\eta_a}R= \left(\prod_{b\in\eta(\varrho)}\varepsilon_{2b}\right)R^{\ell(\varrho)}.
        \label{eq:rel_f_3}
    \end{equation}

    To conclude this appendix, we will show that the result in Eq.~\eqref{eq:lxe_sqz_score_different_squeezing} is consistent with Eq.~\eqref{eq:main_result_lxe_sqz}. Suppose that $r_k=r$ for all $k\in\{1,\dots,R\}$, then $\varepsilon_a = \tanh^a(r)$ for every $a$. This allows us to see that
    \begin{equation}
        \prod_{b\in\eta(\varrho)}\varepsilon_{2b}=\tanh^{2\eta_1+\dots+2\eta_\ell}(r)=\tanh^{4N}(r),
        \label{eq:epsilon_rel}
    \end{equation}
    where we took into account that, according to its definition, $\eta(\varrho)$ is an integer partition of $2N$ for all $\varrho$ (therefore $\eta_1 +\dots +\eta_\ell=2N$). This allows us to recast Eq.~\eqref{eq:c_tilde_coeffs} as 
    \begin{align}
        \begin{split}
            c'_\ell &=\!\!\! \sum_{\bm{k}^{(c)}, \bm{l}^{(c)}}\,\prod_{a=1}^N\frac{1}{k_a!l_a!(2a)^{k_a+l_a}}\!\!\sum_{\sigma\in S_{2N}}\sum_{\varrho\in \bar{S}_{\ell}}\tanh^{4N}(r)\\
            &=\tanh^{4N}(r) \sum_{\bm{k}^{(c)}, \bm{l}^{(c)}}\,\prod_{a=1}^N\frac{1}{k_a!l_a!(2a)^{k_a+l_a}}\!\!\sum_{\sigma\in S_{2N}}|\bar{S}_\ell|.
        \end{split}
        \label{eq:c_tilde_coeffs_app}
    \end{align}

    Now recalling the definition of $|\bar{S}_{\ell}|$ from Eq.~\eqref{eq:set_length_rel},
    \begin{equation}
        |\bar{S}_{\ell}| = b_{\ell}\left(\bm{j}\oplus\sigma(\bm{j}),\Omega_{\bm{k}}(\bm{j})\oplus\Omega_{\bm{l}}\circ\sigma(\bm{j})\right),
        \label{eq:set_length_rel_l}
    \end{equation}
    and $b_\ell$ from Eq.~(\ref{eq:definebell}), we can then write
    \begin{align}
        c'_\ell &=\tanh^{4N}(r) \sum_{\bm{k}^{(c)}, \bm{l}^{(c)}}\,\prod_{a=1}^N\frac{1}{k_a!l_a!(2a)^{k_a+l_a}}\!\!\sum_{\sigma\in S_{2N}}b_{\ell}\left(\bm{j}\oplus\sigma(\bm{j}),\Omega_{\bm{k}}(\bm{j})\oplus\Omega_{\bm{l}}\circ\sigma(\bm{j})\right)=\tanh^{4N}(r)\,c_\ell.
        \label{eq:c_tilde_coeffs_app_2}
    \end{align}

    On the other hand, we have 
    \begin{equation}
        \sum_{\ell=1}^Nd'_\ell R^\ell=\sum_{\bm{k}^{(c)}}\,\prod_{a=1}^N\frac{\varepsilon_{2a}^{k_a}R^{k_a}}{k_a!(2a)^{k_a}}=\sum_{\bm{k}^{(c)}}\,\prod_{a=1}^N\frac{\tanh^{2ak_a}(r)R^{k_a}}{k_a!(2a)^{k_a}}=\tanh^{2N}(r)\sum_{\bm{k}^{(c)}}\,\prod_{a=1}^N\frac{R^{k_a}}{k_a!(2a)^{k_a}},
        \label{eq:d_coeff_app}
    \end{equation}
    where we used the condition $k_1+2k_2+\cdots+Nk_N=N$. In order to continue, we need the following proposition:
    \begin{proposition}
        \label{prop:sum_Z_n_R}
        \begin{equation}
            \sum_{\bm{k}^{(c)}}\,\prod_{a=1}^N\frac{R^{k_a}}{k_a!(2a)^{k_a}}=\binom{\frac{R}{2}+N-1}{N}.
            \label{eq:sum_Z_n_R}
        \end{equation}
    \end{proposition}
    \begin{proof}
        Notice that
        \begin{equation}
            \sum_{\bm{k}^{(c)}}\,\prod_{a=1}^N\frac{R^{k_a}}{k_a!(2a)^{k_a}}=Z_N\left(\frac{R}{2},\frac{R}{2},\cdots,\frac{R}{2}\right),   
        \end{equation}
        where $Z_N$ is the cycle index of $S_{N}$ (see Eq.~\eqref{eq:cycle_index}):
        \begin{equation}
            Z_N(y_1,\dots,y_N)=\sum_{\bm{k}^{(c)}}\,\prod_{a=1}^N\frac{1}{k_a!a^{k_a}}\prod_{a=1}^Ny_a^{k_a}.
        \end{equation}

        Using the generating function of $Z_n(y_1,\dots,y_n)$~\cite{comtet1974advanced,wilf2005generatingfunctionology}
        \begin{equation}
            \exp\left[\sum_{l=1}^\infty y_l\frac{\alpha^l}{l}\right]=\sum_{n=0}^\infty Z_n(y_1,\dots,y_n)\alpha^n,    
        \end{equation}
        we can write
        \begin{equation}
            \exp\left[\frac{R}{2}\sum_{l=1}^\infty\frac{\alpha^l}{l}\right]=\sum_{n=0}^\infty Z_n\left(\frac{R}{2},\frac{R}{2},\dots,\frac{R}{2}\right)\alpha^n.    
        \end{equation}
        Assuming that $|\alpha|<1$, we can write $\sum_{l=1}^\infty \alpha^l/l=-\log(1-\alpha)$, and obtain
        \begin{equation}
            \exp\left[-\frac{R}{2}\log(1-\alpha)\right]=(1-\alpha)^{-R/2}=\sum_{n=0}^\infty Z_n\left(\frac{R}{2},\frac{R}{2},\dots,\frac{R}{2}\right)\alpha^n.    
        \end{equation}
        
        Taking into account that
        \begin{equation}
            (1-\alpha)^{-R/2}=\sum_{n=0}^\infty\binom{\frac{R}{2}+n-1}{n}\alpha^n,   
        \end{equation}
        we see that Eq.~\eqref{eq:sum_Z_n_R} holds.
    \end{proof}

    From this result, it follows that
    \begin{equation}
        \sum_{\ell=1}^Nd'_\ell R^\ell=\tanh^{2N}(r)\binom{\frac{R}{2}+N-1}{N}.
        \label{eq:d_coeff_app_2}
    \end{equation}
    Combining this relation with Eqs.~\eqref{eq:c_tilde_coeffs_app_2} and~\eqref{eq:lxe_sqz_score_different_squeezing}, we obtain
    \begin{align}
        \begin{split}
            s(\bm{A}_{\mathrm{sqz}}';2N)&=\frac{1}{(2N)!}\left(\sum_{\ell=1}^Nd'_\ell R^\ell\right)^{-2}\sum_{\ell=1}^{2N}c'_{\ell}R^{\ell}=\frac{1}{(2N)!}\tanh^{-4N}(r)\binom{\frac{R}{2}+N-1}{N}^{-2}\tanh^{4N}(r)\sum_{\ell=1}^{2N}c_{\ell}R^{\ell}\\
            &=\frac{4^{N}(N!)^2}{(2N)!}\left[\frac{(R-2)!!}{(R+2N-2)!!}\right]^2\sum_{\ell=1}^{2N}c_\ell R^{\ell}=s(\bm{A}_\mathrm{sqz};2N).
        \end{split}
        \label{eq:lxe_sqz_score_different_squeezing_app}
    \end{align}

    \section{\label{app:non_vacuum_proofs} Computation of the LXE score for the ideal squeezed state model without vacuum input modes}

    In this appendix we present the details of the computation of the score for the ideal squeezed state model without vacuum input modes. Specifically, we prove that 
    \begin{align}
        \begin{split}
            \sum_{\bm{k}^{(c)}, \bm{l}^{(c)}}\,\prod_{a=1}^N\frac{1}{k_a!l_a!(2a)^{k_a+l_a}}\sum_{\sigma\in S_{2N}}b_{2N}\left(\bm{j}\oplus\sigma(\bm{j}),\Omega_{\bm{k}}(\bm{j})\oplus\Omega_{\bm{l}}\circ\sigma(\bm{j})\right)=1,
        \end{split}
        \label{eq:counting_part}
    \end{align}
    with $b_{2N}$ defined in Eq.~(\ref{eq:b_coefficients_definition}).
    This leads to the result
    \begin{equation}
        s(\tilde{\bm{A}}'_\mathrm{sqz};2N)=s(\tilde{\bm{A}}_\mathrm{sqz};2N)=\frac{4^{N}(N!)^2}{(2N)!}.
    \label{eq:ideal_score_no_vacuum_app}
    \end{equation}

    Let us recall that $b_{2N}\left(\bm{j}\oplus\sigma(\bm{j}),\Omega_{\bm{k}}(\bm{j})\oplus\Omega_{\bm{l}}\circ\sigma(\bm{j})\right)$ is the number of permutations in the hyperoctahedral group of degree $2N$, $H_{2N}\subset S_{4N}$, that transform the sequence $\bm{j}\oplus\sigma(\bm{j})$ into $\Omega_{\bm{k}}(\bm{j})\oplus\Omega_{\bm{l}}\circ\sigma(\bm{j})$. Recall also that the indices $\bm{j}=(j_1,\dots,j_{2N})$ are all different. 
    
    We begin by recasting $b_{2N}\left(\bm{j}\oplus\sigma(\bm{j}),\Omega_{\bm{k}}(\bm{j})\oplus\Omega_{\bm{l}}\circ\sigma(\bm{j})\right)$ into an expression that is independent of $\bm{j}$. 
    Given two permutations $\sigma \in S_{n}$ and $\sigma'\in S_m$, their \textit{direct sum} $\sigma\oplus\sigma' \in S_{m+n}$ is defined by its action on the sequence $(1,\dots,m+n)$ as
    \begin{equation}
        (\sigma\oplus\sigma')(k)=
        \begin{cases}
            \sigma(k) & k\in\{1,\dots, n\},\\
            \sigma'(k-n) + n & k\in\{n+1,\dots, n+m\},
        \end{cases}
        \label{eq:direct_sum_perm_def}
    \end{equation}
    i.e., $\sigma\oplus\sigma'$ acts as $\sigma$ in the first $n$ elements of a sequence, and as $\sigma'$ in the remaining elements. From this definition, we can write the relations
    \begin{equation}
        \bm{j}\oplus\sigma(\bm{j})=(e_{2N}\oplus\sigma)(\bm{j}\oplus\bm{j}),   
    \end{equation}
    \begin{equation}
        \Omega_{\bm{k}}(\bm{j})\oplus\Omega_{\bm{l}}\circ\sigma(\bm{j})=(\Omega_{\bm{k}}\oplus\Omega_{\bm{l}})\circ(e_{2N}\oplus\sigma)(\bm{j}\oplus\bm{j}),   
    \end{equation}
    where $e_{2N}$ is the identity permutation in $S_{2N}$.
    
    To shorten the notation, we will denote the composition $\sigma\circ\tau$ of two permutations $\sigma,\tau\in S_n$ as $\sigma\tau$.
    Suppose now that $\varrho \in S_{4N}$ transforms $\bm{j}\oplus\sigma(\bm{j})$ into $\Omega_{\bm{k}}(\bm{j})\oplus\Omega_{\bm{l}}\,\sigma(\bm{j})$. Then
    \begin{equation}
        \varrho\,(e_{2N}\oplus\sigma)(\bm{j}\oplus\bm{j})=(\Omega_{\bm{k}}\oplus\Omega_{\bm{l}})(e_{2N}\oplus\sigma)(\bm{j}\oplus\bm{j}),    
    \end{equation}
    which implies 
    \begin{align}
        \left[[(\Omega_{\bm{k}}\oplus\Omega_{\bm{l}})(e_{2N}\oplus\sigma)]^{-1}\varrho\,(e_{2N}\oplus\sigma)\right](\bm{j}\oplus\bm{j})=\left[(e_{2N}\oplus\sigma^{-1})(\Omega_{\bm{k}}^{-1}\oplus\Omega_{\bm{l}}^{-1})\,\varrho\,(e_{2N}\oplus\sigma)\right](\bm{j}\oplus\bm{j})=\bm{j}\oplus\bm{j},
        \label{eq:rho_def_1}
    \end{align}
    where $(\sigma\oplus\tau)^{-1}=\sigma^{-1}\oplus\tau^{-1}$. This means that $(e_{2N}\oplus\sigma^{-1})(\Omega_{\bm{k}}^{-1}\oplus\Omega_{\bm{l}}^{-1})\,\varrho\,(e_{2N}\oplus\sigma)$ is a permutation that leaves the sequence 
    $\bm{j}\oplus\bm{j}$ invariant.
    
    Let $S^\star\subset S_{4N}$ be defined as
    \begin{equation}
        S^\star=\{\nu\in S_{4N}\,\vert\,\nu(\bm{j}\oplus\bm{j})=\bm{j}\oplus\bm{j}\}.
        \label{eq:star_subset}
    \end{equation}
    Then, every permutation $\varrho$ that takes $\bm{j}\oplus\sigma(\bm{j})$ into $\Omega_{\bm{k}}(\bm{j})\oplus\Omega_{\bm{l}}\,\sigma(\bm{j})$ can be written in the form
    \begin{equation}    \varrho=(\Omega_{\bm{k}}\oplus\Omega_{\bm{l}})\left[(e_{2N}\oplus\sigma)\,\nu\,(e_{2N}\oplus\sigma^{-1})\right],\;\,\, \nu \in S^\star.
        \label{eq:rho_def_2}
    \end{equation}
    
    It can be easily shown that $S^\star$ forms a subgroup of $S_{4N}$. Indeed, the identity permutation $e_{4N}\in S_{4N}$ clearly leaves the sequence $\bm{j}\oplus\bm{j}$ invariant, so $e_{4N}\in S^\star$. Let $\nu_1,\nu_2\in S^\star$, we have $\nu_1\nu_2(\bm{j}\oplus\bm{j})=\nu_1(\nu_2(\bm{j}\oplus\bm{j}))=\nu_1(\bm{j}\oplus\bm{j})= \bm{j}\oplus\bm{j}$, which implies that $\nu_1\nu_2\in S^\star$. We can show that $\nu_2\nu_1\in S^*$ in a completely analogous way. Finally, for any $\nu_1\in S^\star$, $\bm{j}\oplus\bm{j}=\nu^{-1}_1\nu_1(\bm{j}\oplus\bm{j})=\nu_1^{-1}(\nu_1(\bm{j}\oplus\bm{j}))=\nu_1^{-1}(\bm{j}\oplus\bm{j})$, which implies that $\nu_1^{-1}\in S^\star$.

    The subset $S_\sigma^{\star}\subset S_{4N}$ defined by 
    \begin{equation}
        S_\sigma^\star=\{\tau\in S_{4N}\,\vert\,\tau=(e_{2N}\oplus\sigma)\,\nu\,(e_{2N}\oplus\sigma^{-1}),\;\nu \in S^\star\}
        \label{eq:star_subset_2}
    \end{equation}
    is also subgroup of $S_{4N}$ for every $\sigma \in S_{2N}$; it is the group of all permutations in $S_{4N}$ that leave invariant the sequence $\bm{j}\oplus\sigma(\bm{j})$.
    
    The permutations in $S^\star$ can only make the interchange $k \leftrightarrow k+2N$ for $k \in \{1,\dots, 2N\}$. In this way we guarantee that $\bm{j}\oplus\bm{j}$ remains invariant. Thus, an arbitrary permutation in $S^\star$ will take a subset $B$ of $\{1,\dots,2N\}$, make the interchange $k \leftrightarrow k+2N$ for all $k \in B$, and let all the remaining elements in $\{1,\dots,4N\}$ unchanged. We may then write every $\nu\in S^\star$ as $\nu \equiv \nu_B$, where $B\in \mathcal{P}(\{1,\dots,2N\})$ and $\mathcal{P}(A)$ stands for the power set of $A$. Furthermore, we may write the action of $\nu_B$ over the sequence $(1,\dots, 4N)$ as
    \begin{equation}
        \nu_B(k)=
        \begin{cases}
            k & k\not\in B,\\
            k + 2N & k\in B,\\
            k & k-2N\not\in B\text{ and }k\in\{2N+1,\dots,4N\},\\
            k-2N & k-2N\in B\text{ and }k\in\{2N+1,\dots,4N\}.
        \end{cases}
        \label{eq:nu_b_permutation}
    \end{equation}

    We can readily see that $|S^*|=|S_\sigma^\star|=2^{2N}$ for all $\sigma\in S_{2N}$, since there are as many $\nu_B$ as there are subsets of $\{1,\dots, 2N\}$. Moreover, we can write the action of the permutations $\tau_B \in S_\sigma^\star$ over the sequence $(1,\dots, 4N)$ as
    \begin{equation}
        \tau_B(k)=[(e_{2N}\oplus\sigma)\,\nu_B\,(e_{2N}\oplus\sigma^{-1})](k)=
        \begin{cases}
            k & k\not\in B,\\
            \sigma(k) + 2N & k\in B,\\
            k & \sigma^{-1}(k-2N)\not\in B\text{ and }k\in\{2N+1,\dots,4N\},\\
            \sigma^{-1}(k-2N) & \sigma^{-1}(k-2N)\in B\text{ and }k\in\{2N+1,\dots,4N\}.
        \end{cases}
        \label{eq:nu_b_sigma_permutation}
    \end{equation}

    Having characterized the permutations in $S^\star$ and $S_\sigma^\star$, let us return to the expression for $\varrho$ given in Eq.~\eqref{eq:rho_def_2}. In order to compute $b_{2N}\left(\bm{j}\oplus\sigma(\bm{j}),\Omega_{\bm{k}}(\bm{j})\oplus\Omega_{\bm{l}}\sigma(\bm{j})\right)$, we require that $\varrho = (\Omega_{\bm{k}}\oplus \Omega_{\bm{l}})\tau_B = \beta$ for some $\beta \in H_{2N}$. This implies that $\tau_B = (\Omega_{\bm{k}}^{-1}\oplus \Omega_{\bm{l}}^{-1})\beta$, or equivalently, $\tau_B \in (\Omega_{\bm{k}}^{-1}\oplus \Omega_{\bm{l}}^{-1}) H_{2N}$, i.e., $\tau_B$ must be an element of the \textit{left coset} of $H_{2N}$ corresponding to $\Omega_{\bm{k}}^{-1}\oplus \Omega_{\bm{l}}^{-1}$. On this account, determining the adequate $\varrho$ amounts to finding the elements of $(\Omega_{\bm{k}}^{-1}\oplus \Omega_{\bm{l}}^{-1}) H_{2N}\cap S_\sigma^\star$. Then, we may write 
    \begin{equation}
        b_{2N}\left(\bm{j}\oplus\sigma(\bm{j}),\Omega_{\bm{k}}(\bm{j})\oplus\Omega_{\bm{l}}\sigma(\bm{j})\right) = \left\vert(\Omega_{\bm{k}}^{-1}\oplus \Omega_{\bm{l}}^{-1}) H_{2N}\cap S_\sigma^\star\right\vert. 
        \label{eq:b_coeff_group_expression}
    \end{equation}
    This is the $\bm{j}$-independent expression of the coefficients $b_{2N}$. 
    
    In the following we will show how to compute $\left\vert(\Omega_{\bm{k}}^{-1}\oplus \Omega_{\bm{l}}^{-1}) H_{2N}\cap S_\sigma^\star\right\vert$ as a function of $\bm{k}$, $\bm{l}$ and $\sigma$. However, in order to do so, we need to prove a number of important intermediate results. The first of these results concerns the following definition:    
    \begin{definition}[Hyperoctahedral group]
        \label{def:hyperoctahedral_definition}
        The hyperoctahedral group of degree $n$, $H_n\subset S_{2n}$ is the centralizer of the permutation $\Omega^\star \in S_{2n}$ that transforms the sequence of indices $(g_1,\dots,g_{2n})$ as $\Omega^\star[(g_1,g_2,\dots,g_{2n-1},g_{2n})]=(g_2,g_1,\dots,g_{2n},g_{2n-1})$~\cite{islami2020symmetric}.
    \end{definition}

    Let us recall that the centralizer of a permutation $\tau \in S_m$ is the set of all permutations in $S_m$ that commute with $\tau$, i.e., $\{\varrho\in S_m\,\vert\,\varrho\tau=\tau\varrho\}$. As a product of disjoint transpositions, $\Omega^\star$ reads $\Omega^\star = (1\,2)(3\,4)\dots(2n-1,2n)$. Notice that $(\Omega^\star)^2 = e_{2n}$. The following proposition gives us a useful way of characterizing the elements of $H_n$.

    \begin{proposition}
        \label{prop:hyperpc_perms}
        $\sigma \in H_n$ if and only if for every $p\in \{1,\dots,n\}$, there exists a $q\in\{1,\dots,n\}$ such that $\sigma(2p)=2q$, $\sigma(2p-1)=2q-1$ or $\sigma(2p)=2q-1$, $\sigma(2p-1)=2q$.
    \end{proposition}

    \begin{proof}        
        Since $H_n$ is the centralizer of $\Omega^\star$, $\sigma\in H_n$ if and only if $\Omega^\star\sigma\,\Omega^\star = \sigma$. Let us consider an arbitrary $p\in\{1,\dots,n\}$. We have 
        \begin{equation}
            \sigma(2p)=[\Omega^\star\sigma\,\Omega^\star](2p)=[\Omega^\star\sigma](2p-1)=\Omega^*(\sigma(2p-1)).    
        \end{equation}
        Since $\sigma$ is a bijection from $\{1,\dots,2n\}$ to itself, there exists 
        a unique $q\in\{1,\dots,n\}$ such that $\sigma(2p-1)=2q$ or $\sigma(2p-1)=2q-1$. In the first case we have
        \begin{equation}
            \sigma(2p)=\Omega^\star(\sigma(2p-1))=\Omega^\star(2q)=2q-1.  
        \end{equation}
        In the second case we obtain
        \begin{equation}
            \sigma(2p)=\Omega^\star(\sigma(2p-1))=\Omega^\star(2q-1)=2q.
        \end{equation}
        
        On the other hand, suppose that $\sigma \in S_{2n}$ satisfies that for 
        every $p\in \{1,\dots,n\}$, there exists a $q\in\{1,\dots,n\}$ such that 
        $\sigma(2p)=2q$, $\sigma(2p-1)=2q-1$ or $\sigma(2p)=2q-1$, $\sigma(2p-1)=2q$. In the first case we have
        \begin{align}
            &[\Omega^\star\sigma](2p)=\Omega^\star(\sigma(2p))=\Omega^\star(2q)=2q-1,\\
            &[\Omega^\star\sigma](2p-1)=\Omega^\star(\sigma(2p-1))=\Omega^\star(2q-1)=2q,\\
            &[\sigma\,\Omega^\star](2p)=\sigma(\Omega^\star(2p))=\sigma(2p-1)=2q-1,\\
            &[\sigma\,\Omega^\star](2p-1)=\sigma(\Omega^\star(2p-1))=\sigma(2p)=2q.
        \end{align}
        Thus, $[\sigma\,\Omega^\star](2p)=[\Omega^\star\sigma](2p)$ and 
        $[\sigma\,\Omega^\star](2p-1)=[\Omega^\star\sigma](2p-1)$. Since these relations hold for an arbitrary $p$, we conclude that
        $\sigma\,\Omega^\star=\Omega^\star\sigma$ and, consequently, $\sigma\in H_n$. For the second case we may write
        \begin{align}
            &[\Omega^\star\sigma](2p)=\Omega^\star(\sigma(2p))=\Omega^\star(2q-1)=2q,\\
            &[\Omega^\star\sigma](2p-1)=\Omega^\star(\sigma(2p-1))=\Omega^\star(2q)=2q-1,\\
            &[\sigma\,\Omega^\star](2p)=\sigma(\Omega^\star(2p))=\sigma(2p-1)=2q,\\
            &[\sigma\,\Omega^\star](2p-1)=\sigma(\Omega^\star(2p-1))=\sigma(2p)=2q-1.
        \end{align}
        We conclude again that $[\sigma\,\Omega^\star](2p)=[\Omega^\star\sigma](2p)$ and 
        $[\sigma\,\Omega^\star](2p-1)=[\Omega^\star\sigma](2p-1)$ for every $p$, therefore $\sigma\,\Omega^\star=\Omega^\star\sigma$ and $\sigma \in H_n$.
    \end{proof}

    A useful interpretation of the previous proposition follows from analyzing the undirected graphs $\bm{\Gamma}(\sigma)$ associated with the $\sigma\in H_n$. As mentioned in Sec.~\ref{sec:final_form_lxe_score}, these graphs have $n$ connected components with two edges (See Fig.~\ref{fig:gamma_hn}). Since the first set of edges of $\bm{\Gamma}(\sigma)$ is of the form $\{(2p-1,2p)\,\vert\, p\in\{1,\dots, n\}\}$, the only way we can obtain $n$ connected components is that the edges in the second set $\{(\sigma(2p-1),\sigma(2p))\,\vert\, p\in\{1,\dots, n\}\}$ be of the form $(2q-1,2q)$ or $(2q,2q-1)$, with $q\in\{1,\dots,n\}$, for every $p$. 

    \begin{figure}[!ht]
        {
          \includegraphics[width=0.6\columnwidth]{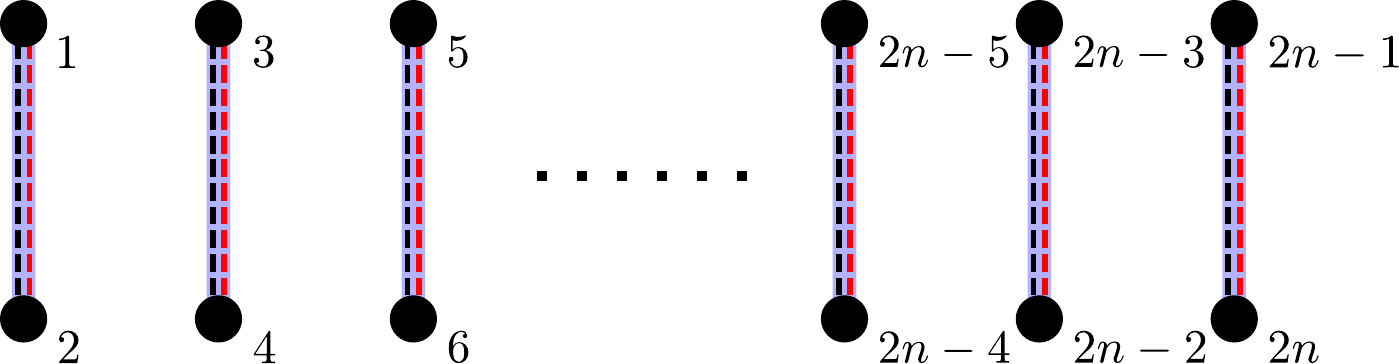}%
        }
        \caption{Structure of the undirected graphs $\bm{\Gamma}(\sigma)$ for $\sigma \in H_n$. The vertices of $\bm{\Gamma}(\sigma)$ are represented by black circles. The edges corresponding to $\{(2p-1,2p)\,\vert\,p\in\{1,\dots,n\}\}$ are shown as black dashed lines, while the edges corresponding to $\{(\sigma(2p-1),\sigma(2p))\,\vert\,p\in\{1,\dots,n\}\}$ are shown as red dashed lines. Each connected component of the graph is highlighted with a light blue, thick line. For $\bm{\Gamma}(\sigma)$ to have $n$ connected components, so that $\sigma\in H_n$, $\sigma$ must \textit{keep together} the pairs of the form $(2q-1,2q)$. This is the statement of Proposition ~\ref{prop:hyperpc_perms}.}
        \label{fig:gamma_hn}
    \end{figure}

    Next, we show how to determine the coset-type of the permutations $\Omega_{\bm{k}}\in S_{2N}$. From their definition (Eq.~\eqref{eq:big_omega_permutation_def}) we see that the $\Omega_{\bm{k}}$ are written as a product of disjoint cycles of even length, where a cycle of length $2a$ appears a total of $k_a$ times (notice that the permutations $\omega_a \in S_{2a}$ defined just below Eq.~\eqref{eq:big_omega_permutation_def} are precisely these cycles). Moreover, all the elements within the cycles are consecutive. These features of $\Omega_{\bm{k}}$ greatly facilitate the determination of its corresponding undirected graph $\bm{\Gamma}(\Omega_{\bm{k}})$.
    
    \begin{proposition}{\label{prop:coset_type_capital_omega}}
        $\Omega_{\bm{k}}$ has coset-type $\eta(\Omega_{\bm{k}})=(N^{k_N},\dots,2^{k_2},1^{k_1})$, where $a^{k_a}$ indicates that $a$ appears $k_a$ times in the partition $\eta(\Omega_{\bm{k}})$. For example, $(3^2, 1^4)\equiv (3,3,1,1,1,1)$. If $k_a=0$, the corresponding $a$ does not appear in the partition. It follows that the length of the coset-type of $\Omega_{\bm{k}}$ is $\ell(\Omega_{\bm{k}})=\sum_{a=1}^Nk_a$.
    \end{proposition}

    \begin{figure}[!ht]
        {
          \includegraphics[width=0.85\columnwidth]{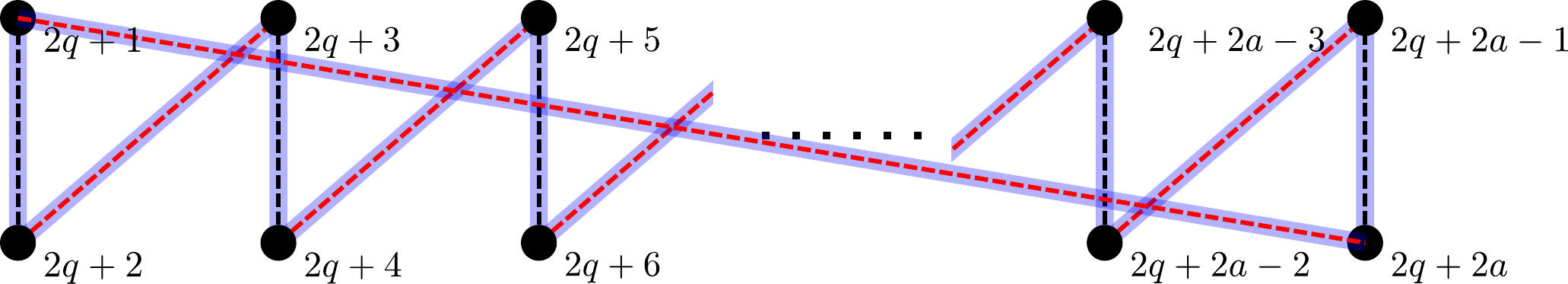}%
        }
        \caption{Illustration of the connected component of $\bm{\Gamma}(\Omega_{\bm{k}})$ corresponding to the vertices $(2q+1,\dots,2q+2a)$ (black circles). The edges corresponding to $\{(2p-1,2p)\,\vert\,p\in\{q+1,\dots,2q+2a\}\}$ are shown as black dashed lines, while the edges corresponding to $\{(\Omega_{\bm{k}}(2p-1),\Omega_{\bm{k}}(2p))\,\vert\,p\in\{q+1,\dots,2q+2a\}\}=\{(\omega_{a}(2p-1),\omega_{a}(2p))\,\vert\,p\in\{q+1,\dots,2q+2a\}\}$ are shown as red dashed lines. The connected component is highlighted by a light blue, thick line. As mentioned in the proof of Proposition~\ref{prop:coset_type_capital_omega}, a cycle of length $2a$ acting over $(2q+1,\dots,2q+2a)$ leads to a connected component of length $2a$ in $\bm{\Gamma}(\Omega_{\bm{k}})$.}
        \label{fig:connected_comp_omega}
    \end{figure}
    
    \begin{proof}
        Let us consider one of the cycles of length $2a$ that appear in the definition of $\Omega_{\bm{k}}$. This cycle acts over a sequence of 
        consecutive indices of the form $(2q+1,\dots,2q+2a)$, where we use $2q$
        to label the beginning of this particular cycle. After applying $\Omega_{\bm{k}}$, this sequence transforms as
        $(2q+1,\dots,2q+2a)\rightarrow\omega_a[(2q+1,\dots,2q+2a)]=(2q+2,\dots,2q+2a,2q+1)$. 
        
        When constructing $\bm{\Gamma}(\Omega_{\bm{k}})$ for this subset of 
        vertices, we can see that the first set of edges of the graph has the 
        form $\{(2q+1,2q+2),\dots,(2q+2a-1, 2q+2a)\}$, while the second reads 
        $\{(\omega_a(2q+1),\omega_a(2q+2)),\dots,(\omega_a(2q+2a-1), \omega_a(2q+2a))\} = \{(2q+2,2q+3),(2q+4,2q+5),\dots,(2q+2a, 2q+1)\}$.
        These edges lead to a connected component of length $2a$ in $\bm{\Gamma}(\Omega_{\bm{k}})$ (see Fig.~\ref{fig:connected_comp_omega}).
        
        Since all the cycles in $\Omega_{\bm{k}}$ are disjoint, each cycle of
        length $2a$ leads to a different connected component of length $2a$ in
        $\bm{\Gamma}(\Omega_{\bm{k}})$. Since there are $k_a$ cycles of length
        $2a$ in $\Omega_{\bm{k}}$, there will be $k_a$ connected components of
        length $2a$ in $\bm{\Gamma}(\Omega_{\bm{k}})$. This implies that $a$ will appear a total of $k_a$ times in $\eta(\Omega_{\bm{k}})$, and thus $\eta(\Omega_{\bm{k}})=(N^{k_N},\dots,2^{k_2},1^{k_1})$. Notice that since $k_1+2k_2+\cdots+Nk_N=N$, $\eta(\Omega_{\bm{k}})$ will be, as expected, an integer partition of $N$. Finally, there will be a total of $k_1+\dots+k_N$ connected components in $\bm{\Gamma}(\Omega_{\bm{k}})$, so $\ell(\Omega_{\bm{k}})=\sum_{a=1}^Nk_a$. 
    \end{proof}

    From the previous proposition, we can see that $\Omega_{\bm{k}}$ and $\Omega_{\bm{l}}$ will have different coset-types whenever $\bm{k}\neq\bm{l}$.
    
    We move on to stating two important results, one of them proved in~\cite{macdonald1998symmetric}, regarding coset-types and left cosets.

    \begin{proposition}[(Macdonald) Theorem 2.1 \cite{macdonald1998symmetric}]
        \label{prop:coset_type_equality}
        \textbf{(i)} Two permutations $\sigma_1, \sigma_2 \in S_{2n}$ have the same coset-type if and only if $\sigma_2\in H_n\sigma_1H_n$, that is, if and only if there exist $\tau_1,\tau_2\in H_n$ such that $\sigma_2 = \tau_1\sigma_1\tau_2$. \textbf{(ii)} $\sigma\in S_{2n}$ has the same coset-type as $\sigma^{-1}$.        
    \end{proposition}

    \begin{proposition}
        \label{prop:coset_type_and_left_coset}
        If $\sigma_1 \in \sigma_2^{-1}H_n$, then $\sigma_1$ and $\sigma_2$ have the same coset-type.
    \end{proposition}
    \begin{proof}
        If $\sigma_1\in \sigma_2^{-1}H_n$, then there exists $\tau\in H_n$ such that $\sigma_1=\sigma^{-1}_2\tau$. Since $H_n$ is a group, $e_{2n}\in H_n$, and thus we may write $\sigma_1=e_{2n}\sigma^{-1}_2\tau$. This means that $\sigma_1\in H_n\sigma_2^{-1}H_n$, which implies that $\sigma_1$ and $\sigma_2^{-1}$ have the same coset-type.  Given that $\sigma_2$ and $\sigma_2^{-1}$ have the same coset-type, we conclude that $\sigma_1$ and $\sigma_2$ have the same coset-type. 
    \end{proof}

    Equivalently, we may say that if $\sigma_1,\sigma_2 \in S_{2n}$ have different coset-types, then $\sigma_1\not\in \sigma_2^{-1}H_n$. Notice that the converse of Proposition~\ref{prop:coset_type_and_left_coset} does not necessarily hold; the fact that $\sigma_1$ and $\sigma_2$ have the same coset-type does not imply that $\sigma_1\in \sigma_2^{-1}H_n$.
 As a direct consequence of Propositions~\ref{prop:coset_type_equality} and~\ref{prop:coset_type_and_left_coset}, we obtain the following result, which will be of particular importance in our proof of Eq.~\eqref{eq:counting_part}: 

    \begin{proposition}
        \label{prop:no_simultaneaous_breaking}
        Let $\sigma_1,\sigma_2,\tau\in S_{2n}$ and suppose that $\sigma_1$ and $\sigma_2$ have different coset-types. Then $\tau\in\sigma_1^{-1}H_n$ implies that $\tau^{-1}\not\in\sigma_2^{-1}H_n$.   
    \end{proposition}
    \begin{proof}
        If $\tau\in\sigma_1^{-1}H_n$, $\tau$ and $\tau^{-1}$ have the same coset-type as $\sigma_1$. Since $\sigma_1$ and $\sigma_2$ have different coset-types, so will be the case for $\tau^{-1}$ and $\sigma_{2}$ and, consequently, $\tau^{-1}\not\in\sigma_2^{-1}H_n$.
    \end{proof}

    We are now in the position to compute $\left\vert(\Omega_{\bm{k}}^{-1}\oplus \Omega_{\bm{l}}^{-1}) H_{2N}\cap S_\sigma^\star\right\vert$ for the case $\bm{k}\neq\bm{l}$. A permutation $\tau_B\in S_{\sigma}^\star$, corresponding to a subset $B\subseteq\{1,\dots,2N\}$, will be an element of $(\Omega_{\bm{k}}^{-1}\oplus \Omega_{\bm{l}}^{-1}) H_{2N}$ if $(\Omega_{\bm{k}}\oplus\Omega_{\bm{l}})\tau_B=\beta$ for $\beta\in H_{2N}$. This relation will set some constraints over $B$ and $\sigma$. In order to find them, let us combine Eqs.~\eqref{eq:direct_sum_perm_def} and~\eqref{eq:nu_b_sigma_permutation}, and express the action of $\beta$ over pairs of the form $(2p-1,2p)$, with $p\in\{1,\dots,2N\}$, as
    \begin{equation}
        \beta(2p-1)=
        \begin{cases}
            \Omega_{\bm{k}}(2p-1) & 2p-1\not\in B\text{ and }p\in\{1,\dots,N\},\\
            \Omega_{\bm{l}}(\sigma(2p-1)) + 2N & 2p-1\in B\text{ and }p\in\{1,\dots,N\},\\
            \Omega_{\bm{l}}(2p-2N-1) + 2N & \sigma^{-1}(2p-2N-1)\not\in B\text{ and }p\in\{N+1,\dots,2N\},\\
            \Omega_{\bm{k}}(\sigma^{-1}(2p-2N-1)) & \sigma^{-1}(2p-2N-1)\in B\text{ and }p\in\{N+1,\dots,2N\},
        \end{cases}
        \label{eq:sigma_octahedral_permutation_1}
    \end{equation}
    \begin{equation}
        \beta(2p)=
        \begin{cases}
            \Omega_{\bm{k}}(2p) & 2p\not\in B\text{ and }p\in\{1,\dots,N\},\\
            \Omega_{\bm{l}}(\sigma(2p)) + 2N & 2p\in B\text{ and }p\in\{1,\dots,N\},\\
            \Omega_{\bm{l}}(2p-2N) + 2N & \sigma^{-1}(2p-2N)\not\in B\text{ and }p\in\{N+1,\dots,2N\},\\
            \Omega_{\bm{k}}(\sigma^{-1}(2p-2N)) & \sigma^{-1}(2p-2N)\in B\text{ and }p\in\{N+1,\dots,2N\}.
        \end{cases}
        \label{eq:sigma_octahedral_permutation_2}
    \end{equation}

    According to Proposition~\ref{prop:hyperpc_perms}, $\beta\in H_{2N}$ if for every $p\in\{1,\dots,2N\}$ there exists a $q\in\{1,\dots,2N\}$ such that $\beta(2p)=2q$, $\beta(2p-1)=2q-1$ or $\beta(2p)=2q-1$, $\beta(2p-1)=2q$. Keeping in mind that $B^c$ is the complement of $B$, this condition allows us to draw the following conclusions:
    \begin{enumerate}
        \item $B$ must be non-empty because $\Omega_{\bm{k}}$, $\Omega_{\bm{l}}$ are not, in general, elements of $H_{N}$, which would be necessary in order to have $\beta\in H_{2N}$.
        
        \item When $p\in\{1,\dots,N\}$, it is necessary that either $2p-1,\,2p\in B$ or $2p-1,\,2p\not\in B$, i.e., the pair $(2p-1,2p)$ must remain together either in $B$ or in $B^{\mathrm{c}}$. Indeed, if $2p\in B$ but $2p-1\not\in B$, then $\beta(2p-1)=\Omega_{\bm{k}}(2p-1)\in\{1,\dots,2N\}$, while $\beta(2p)=\Omega_{\bm{l}}(\sigma(2p))+2N\in\{2N+1,\dots,4N\}$, which breaks the condition for $\beta$ to be in $H_{2N}$. The case $2p-1\in B$, $2p\not\in B$ yields to a similar conclusion.

        \item When $p\in\{N+1,\dots,2N\}$, it is necessary that either $\sigma^{-1}(2p-2N),\, \sigma^{-1}(2p-2N-1) \in B$ or $\sigma^{-1}(2p-2N),\,\sigma^{-1}(2p-2N-1) \not\in B$. Just as before, if $\sigma^{-1}(2p-2N)\in B$ but $\sigma^{-1}(2p-2N-1)\not\in B$, then $\beta(2p-1)=\Omega_{\bm{l}}(2p-2N-1)+2N\in\{2N+1,\dots,4N\}$, while $\beta(2p)=\Omega_{\bm{k}}(\sigma^{-1}(2p-2N))\in \{1,\dots,2N\}$, which implies that $\beta\not\in H_{2N}$. The case $\sigma^{-1}(2p-2N-1)\in B$, $\sigma^{-1}(2p-2N)\not\in B$ leads to a similar conclusion.

        \item Assuming that the two conditions above hold, we need that $B^{\mathrm{c}}$ contains the pairs $(2p-1, 2p)$, with $p\in\{1,\dots,N\}$, for which $\Omega_{\bm{k}}(2p-1)=2q-1$, $\Omega_{\bm{k}}(2p)=2q$ or $\Omega_{\bm{k}}(2p-1)=2q$, $\Omega_{\bm{k}}(2p)=2q-1$ for some $q\in\{1,\dots,N\}$. Moreover, we need that $B^{\mathrm{c}}$ contains the pairs $(\sigma^{-1}(2p-2N-1),\sigma^{-1}(2p-2N))$, with $p\in\{N+1,\dots,2N\}$,  for which $\Omega_{\bm{l}}(2p-2N-1)=2q'-1$, $\Omega_{\bm{l}}(2p-2N)=2q'$ or $\Omega_{\bm{l}}(2p-2N-1)=2q'$, $\Omega_{\bm{l}}(2p-2N)=2q'-1$ for some $q'\in\{1,\dots,N\}$. In other words, the elements in $B^{\mathrm{c}}$ must correspond to transpositions in the cycle decomposition of both $\Omega_{\bm{k}}$ and $\Omega_{\bm{l}}$. Recall that the permutations $\Omega_{\bm{k}}$ are defined as products of cycles of even length. The only possible cycles that naturally satisfy the condition described here are those of length two, i.e., the transpositions.

        \item We can guarantee that the elements in $B^{\mathrm{c}}$ correspond to transpositions in both $\Omega_{\bm{k}}$ and $\Omega_{\bm{l}}$ by constructing $B$ using the following prescription. There are $k_1$ transpositions (cycles of length two) in $\Omega_{\bm{k}}$ and the last element in $\{1,\dots,2N\}$ that belongs to a transposition in $\Omega_{\bm{k}}$ is precisely $2k_1$. Similarly, the last element in $\{1,\dots,2N\}$ that belongs to a transposition in $\Omega_{\bm{l}}$ is $2l_1$. Let $2p^*=\mathrm{min}(2k_1,2l_1)$, then $B=\{2p^*+1,\dots,2N\}$ contains all the elements in $\{1,\dots,2N\}$ that correspond to cycles of length greater than two in both $\Omega_{\bm{k}}$ and $\Omega_{\bm{l}}$. Notice that $B$ will also contain elements that correspond to some transpositions in $\Omega_{\bm{k}}$ or $\Omega_{\bm{l}}$ as long as $k_1\neq l_1$. This construction ensures that the elements of $B^{\mathrm{c}}$ lead only to transpositions in both $\Omega_{\bm{k}}$ and $\Omega_{\bm{l}}$. 

        \item The prescription given above is not the only way to construct $B$. Indeed, the only condition that $B$ should satisfy is that it contains all the elements in $\{1,\dots,2N\}$ that correspond to cycles of length greater than two in both $\Omega_{\bm{k}}$ and $\Omega_{\bm{l}}$. This means that $B$ can also contain any number of pairs of the form $(2p-1,2p)$ that correspond to transpositions in $\Omega_{\bm{k}}$ or $\Omega_{\bm{l}}$.

        \item Let $B$ satisfy items 5 or 6 above, and suppose that there is some $p\in\{N+1,\dots, 2N\}$ such that $\sigma^{-1}(2p-2N-1),\,\sigma^{-1}(2p-2N)\not\in B$ but $2p-2N-1,\,2p - 2N\in B$. Then, in general, we cannot guarantee that $\beta(2p-1)=\Omega_{\bm{l}}(2p-2N-1)+2N$, $\beta(2p)=\Omega_{\bm{l}}(2p-2N)+2N$ will satisfy the conditions that let $\beta\in H_{2N}$. This is because most (or all) of the elements in $B$ correspond to cycles of length greater than two in $\Omega_{\bm{l}}$. In order to avoid this, we demand that $\sigma^{-1}(B)=\sigma(B)=B$, i.e., the image of $B$ under $\sigma$, $\sigma^{-1}$ is $B$ itself. In this way, the condition $\sigma^{-1}(2p-2N-1),\,\sigma^{-1}(2p-2N)\not\in B$ will always correspond to a transposition in $\Omega_{\bm{l}}$. This constraints $\sigma$ to have the form $\sigma = \bar{\sigma}\oplus\widetilde{\sigma}$, where $\bar{\sigma}\in S_{2N-|B|}$ acts only over $B^{\mathrm{c}}$, and $\widetilde{\sigma}\in S_{|B|}$ acts only over $B$.
    \end{enumerate}
    Following these considerations, we can recast Eqs.~\eqref{eq:sigma_octahedral_permutation_1} and~\eqref{eq:sigma_octahedral_permutation_2} as
    \begin{equation}
        \beta(2p-1)=
        \begin{cases}
            \bar{\Omega}_{\bm{k}}(2p-1) & 2p-1\not\in B\text{ and }p\in\{1,\dots,N\},\\
            \widetilde{\Omega}_{\bm{l}}(\widetilde{\sigma}(2p-1)) + 2N & 2p-1\in B\text{ and }p\in\{1,\dots,N\},\\
            \bar{\Omega}_{\bm{l}}(2p-2N-1) + 2N & 2p-2N-1\not\in B\text{ and }p\in\{N+1,\dots,2N\},\\
            \widetilde{\Omega}_{\bm{k}}(\widetilde{\sigma}^{-1}(2p-2N-1)) & 2p-2N-1\in B\text{ and }p\in\{N+1,\dots,2N\},
        \end{cases}
        \label{eq:sigma_octahedral_permutation_3}
    \end{equation}
    \begin{equation}
        \beta(2p)=
        \begin{cases}
            \bar{\Omega}_{\bm{k}}(2p) & 2p\not\in B\text{ and }p\in\{1,\dots,N\},\\
            \widetilde{\Omega}_{\bm{l}}(\widetilde{\sigma}(2p)) + 2N & 2p\in B\text{ and }p\in\{1,\dots,N\},\\
            \bar{\Omega}_{\bm{l}}(2p-2N) + 2N & 2p-2N\not\in B\text{ and }p\in\{N+1,\dots,2N\},\\
            \widetilde{\Omega}_{\bm{k}}(\widetilde{\sigma}^{-1}(2p-2N)) & 2p-2N\in B\text{ and }p\in\{N+1,\dots,2N\},
        \end{cases}
        \label{eq:sigma_octahedral_permutation_4}
    \end{equation}
    where we used the decomposition of $\Omega_{\bm{k}}$ and $\Omega_{\bm{l}}$ as $\Omega_{\bm{k}}=\bar{\Omega}_{\bm{k}}\oplus\widetilde{\Omega}_{\bm{k}}$, $\Omega_{\bm{l}}=\bar{\Omega}_{\bm{l}}\oplus\widetilde{\Omega}_{\bm{l}}$ with $\bar{\Omega}_{\bm{k}},\,\bar{\Omega}_{\bm{l}}\in S_{2N-|B|}$ and $\widetilde{\Omega}_{\bm{k}},\,\widetilde{\Omega}_{\bm{l}}\in S_{|B|}$ ($\bar{\Omega}_{\bm{k}}$, $\bar{\Omega}_{\bm{l}}$ will be a product of some of the transpositions of $\Omega_{\bm{k}}$ and $\Omega_{\bm{l}}$, and $\widetilde{\Omega}_{\bm{k}}$, $\widetilde{\Omega}_{\bm{l}}$ will be the products of the remaining cycles). Notice also that $\bar{\Omega}_{\bm{k}},\,\bar{\Omega}_{\bm{l}}\in H_{N-|B|/2}$.

    We can see now that finding a way for $\beta$ to be in $H_{2N}$ is
    equivalent to finding a $\widetilde{\sigma}\in S_{|B|}$ that satisfies the
    following two conditions simultaneously: 
    \begin{itemize}
        \item For every $2p-1,2p\in B$, there exist $2q-1,2q\in B$ such that $\widetilde{\Omega}_{\bm{k}}(\widetilde{\sigma}^{-1}(2p-1))=2q-1$, $\widetilde{\Omega}_{\bm{k}}(\widetilde{\sigma}^{-1}(2p))=2q$ or $\widetilde{\Omega}_{\bm{k}}(\widetilde{\sigma}^{-1}(2p-1))=2q$, $\widetilde{\Omega}_{\bm{k}}(\widetilde{\sigma}^{-1}(2p))=2q-1$.
        \item For every $2p-1,2p\in B$, there exist $2q'-1,2q'\in B$ such that $\widetilde{\Omega}_{\bm{l}}(\widetilde{\sigma}(2p-1))=2q'-1$, $\widetilde{\Omega}_{\bm{l}}(\widetilde{\sigma}(2p))=2q'$ or $\widetilde{\Omega}_{\bm{l}}(\widetilde{\sigma}(2p-1))=2q'$, $\widetilde{\Omega}_{\bm{l}}(\widetilde{\sigma}(2p))=2q'-1$.
    \end{itemize}
    
    These statements are equivalent to having $\widetilde{\Omega}_{\bm{k}}\widetilde{\sigma}^{-1},\, \widetilde{\Omega}_{\bm{l}}\widetilde{\sigma}\in H_{|B|/2}$, which may also be written as $\widetilde{\sigma}^{-1}\in \widetilde{\Omega}_{\bm{k}}^{-1}H_{|B|/2}$ and $\widetilde{\sigma}\in \widetilde{\Omega}_{\bm{l}}^{-1}H_{|B|/2}$. However, Proposition~\ref{prop:no_simultaneaous_breaking} forbids the existence of $\widetilde{\sigma}$. Indeed, $\widetilde{\Omega}_{\bm{k}}$ and $\widetilde{\Omega}_{\bm{l}}$ have different coset-types, and thus, according to Proposition~\ref{prop:no_simultaneaous_breaking}, the conditions $\widetilde{\sigma}^{-1}\in \widetilde{\Omega}_{\bm{k}}^{-1}H_{|B|/2}$, $\widetilde{\sigma}\in \widetilde{\Omega}_{\bm{l}}^{-1}H_{|B|/2}$ cannot hold simultaneously.

    To confirm that $\widetilde{\Omega}_{\bm{k}}$ and $\widetilde{\Omega}_{\bm{l}}$ have different coset-types, recall that $B$ is defined in such a way that it contains all the elements in $\{1,\dots,2N\}$ that correspond to cycles of length greater than 2 in both $\Omega_{\bm{k}}$ and $\Omega_{\bm{l}}$. Suppose for a moment that $k_1=l_1$. In order to have $\bm{k}\neq\bm{l}$ there must be at least one $a\neq1$ such that $k_a\neq l_a$, which implies that the number of connected components of length $2a$ in $\bm{\Gamma}(\widetilde{\Omega}_{\bm{k}})$ will be different form that of $\bm{\Gamma}(\widetilde{\Omega}_{\bm{l}})$, and thus $\widetilde{\Omega}_{\bm{k}}$ and $\widetilde{\Omega}_{\bm{l}}$ will have different coset-types. If $k_1\neq l_1$, $B$ will contain at least one pair of elements that will correspond to a transposition in $\Omega_{\bm{k}}$ but not in $\Omega_{\bm{l}}$ (or vice versa). This implies that $\bm{\Gamma}(\widetilde{\Omega}_{\bm{k}})$ and $\bm{\Gamma}(\widetilde{\Omega}_{\bm{l}})$ will differ by at least one connected component of length two, and so $\widetilde{\Omega}_{\bm{k}}$ and $\widetilde{\Omega}_{\bm{l}}$ will have different coset-types.

    The following proposition is a direct consequence of the fact that conditions $\widetilde{\sigma}^{-1}\in \widetilde{\Omega}_{\bm{k}}^{-1}H_{|B|/2}$, $\widetilde{\sigma}\in \widetilde{\Omega}_{\bm{l}}^{-1}H_{|B|/2}$ cannot hold at the same time:
    \begin{proposition}
    \label{prop:counting_result_different_coset}
        Let $\bm{k}\neq\bm{l}$, then $\left\vert(\Omega_{\bm{k}}^{-1}\oplus \Omega_{\bm{l}}^{-1}) H_{2N}\cap S_\sigma^\star\right\vert=0$ for every $\sigma\in S_{2N}$.
    \end{proposition} 
    \begin{proof}
        We have seen that in order for $\tau_B\in S_{\sigma}^\star$, with
        $B\subseteq\{1,\dots,2N\}$, to also be an element of
        $(\Omega_{\bm{k}}^{-1}\oplus \Omega_{\bm{l}}^{-1}) H_{2N}$,
        Eqs.~\eqref{eq:sigma_octahedral_permutation_1}
        and~\eqref{eq:sigma_octahedral_permutation_2}, later transformed into
        Eqs.~\eqref{eq:sigma_octahedral_permutation_3}
        and~\eqref{eq:sigma_octahedral_permutation_4}, must hold. This can
        only happen if $B$ contains all the elements in $\{1,\dots,2N\}$ that
        correspond to cycles of length greater than two in both
        $\Omega_{\bm{k}}$ and $\Omega_{\bm{l}}$. This, in turn, makes
        $\sigma$ take the form $\sigma=\bar{\sigma}\oplus\widetilde{\sigma}$,
        where $\bar{\sigma}\in S_{2N-|B|}$ and $\widetilde{\sigma}\in S_{|B|}$.
        Moreover, no matter the choice of $B$, $\widetilde{\sigma}$ must satisfy
        the conditions $\widetilde{\sigma}^{-1}\in
        \widetilde{\Omega}_{\bm{k}}^{-1}H_{|B|/2}$, $\widetilde{\sigma}\in
        \widetilde{\Omega}_{\bm{l}}^{-1}H_{|B|/2}$ simultaneously. However, as
        proved in Proposition~\eqref{prop:no_simultaneaous_breaking}, this
        cannot happen because $\Omega_{\bm{k}}$ and $\Omega_{\bm{l}}$, and 
        consequently $\widetilde{\Omega}_{\bm{k}}$ and $\widetilde{\Omega}_{\bm{l}}$,
        have different coset-types when $\bm{k}\neq\bm{l}$. Therefore, there
        is no $\widetilde{\sigma}\in S_{|B|}$, and thus no $\sigma\in S_{2N}$, for which $\tau_B\in S_{\sigma}^\star$ and $\tau_B \in (\Omega_{\bm{k}}^{-1}\oplus \Omega_{\bm{l}}^{-1}) H_{2N}$. It follows that $\left\vert(\Omega_{\bm{k}}^{-1}\oplus \Omega_{\bm{l}}^{-1}) H_{2N}\cap S_\sigma^\star\right\vert=0$ for every $\sigma\in S_{2N}$.
    \end{proof}

    We now focus on the case $\bm{k}=\bm{l}$. Let us set $B=\{2k_1+1,\dots,2N\}$, so that the elements in $B^{\mathrm{c}}$ correspond to all the transpositions in $\Omega_{\bm{k}}$ (recall that $B^{\mathrm{c}}$ is the complement of $B$). We will discuss possible modifications of $B$ later on.

    According to relations~\eqref{eq:sigma_octahedral_permutation_3} 
    and~\eqref{eq:sigma_octahedral_permutation_4}, in order to have $\beta\in H_{2N}$, we need $\widetilde{\Omega}_{\bm{k}}\widetilde{\sigma},\, \widetilde{\Omega}_{\bm{k}}\widetilde{\sigma}^{-1}\in H_{N-k_1}$. We will see how
    these relations determine the possible $\widetilde{\sigma}$ that we can use.
    First, notice that $\widetilde{\Omega}_{\bm{k}}^2\in H_{N-k_1}$. Indeed, let
    the sequence $(2q+1,\dots,2q+2a)$ correspond to a cycle of length $2a$ in
    $\widetilde{\Omega}_{\bm{k}}$. After one application of $\widetilde{\Omega}_{\bm{k}}$, $(2q+1,\dots,2q+2a)$ maps to $(2q+1,\dots,2q+2a) 
    \mapsto (2q+2,2q+3,\dots,2q+2a-1,2q+2a,2q+1)$. After a second application
    of the permutation, we have $(2q+1,\dots,2q+2a)\mapsto(2q+3,2q+4,\dots,2q+2a-1,2q+2a,2q+1,2q+2)$, which brings the pairs of the 
    form $(2p-1, 2p)$, with $p\in\{1,\dots,a\}$, back together. Since this
    holds for an arbitrary $a$, we see that $\widetilde{\Omega}_{\bm{k}}^2$
    satisfies the conditions to be in $H_{N-k_1}$. In view of this, we see that $\widetilde{\Omega}_{\bm{k}}^{-2}\in H_{N-k_1}$, and thus 
    $\widetilde{\Omega}_{\bm{k}}^{-2}\widetilde{\Omega}_{\bm{k}}\widetilde{\sigma}=\widetilde{\Omega}_{\bm{k}}^{-1}\widetilde{\sigma}^{-1} \in H_{N-k_1}$. Consequently, the conditions that $\widetilde{\sigma}$ must satisfy can be written as $\widetilde{\Omega}_{\bm{k}}\widetilde{\sigma},\, \widetilde{\sigma}\widetilde{\Omega}_{\bm{k}}\in H_{N-k_1}$.

    Another interesting property of $\widetilde{\Omega}_{\bm{k}}$ is that every pair of the form $(2p-1,2p)$,
    with $p\in\{k_1+1,\dots,N\}$, belongs to the same cycle in
    $\widetilde{\Omega}_{\bm{k}}$. This is because all the elements within the
    cycles of $\widetilde{\Omega}_{\bm{k}}$ are consecutive and, moreover, all of
    the cycles have even length. We will use this fact to find the structure
    of $\widetilde{\sigma}$.

    Consider again the sequence $(2q+1,\dots,2q+2a)$, corresponding to a
    cycle of length $2a$ in $\widetilde{\Omega}_{\bm{k}}$. Every element in this
    sequence can be written as $2q+2l-1$ or $2q+2l$ with $l\in\{1,\dots,a\}$.
    Therefore, every pair of the form $(2p-1,2p)$ with elements in this
    sequence can be written as $(2q+2l-1,2q+2l)$. This holds for every cycle
    in $\widetilde{\Omega}_{\bm{k}}$.
    
    If $\widetilde{\Omega}_{\bm{k}}\widetilde{\sigma}\in H_{N-k_1}$, then, for all $l\in\{1,\dots,a\}$, $\widetilde{\Omega}_{\bm{k}}(\widetilde{\sigma}(2q+2l-1))=2q'+2l'-1$ and $\widetilde{\Omega}_{\bm{k}}(\widetilde{\sigma}(2q+2l))=2q'+2l'$, or $\widetilde{\Omega}_{\bm{k}}(\widetilde{\sigma}(2q+2l-1))=2q'+2l'$ and $\widetilde{\Omega}_{\bm{k}}(\widetilde{\sigma}(2q+2l))=2q'+2l'-1$, with $2q'+1$ being the first element in another cycle of length $2a'$ in $\widetilde{\Omega}_{\bm{k}}$. Here $l'\in\{1,\dots,a'\}$. We can use these relations to find the action of $\widetilde{\sigma}$ over the elements in $(2q+1,\dots,2q+2a)$. We will analyze these two possible conditions separately, and during this procedure we will keep in mind $\widetilde{\sigma}\widetilde{\Omega}_{\bm{k}}\in H_{N-k_1}$ also holds.
    
    \textbf{Case 1:} $\widetilde{\Omega}_{\bm{k}}(\widetilde{\sigma}(2q+2l-1))=2q'+2l'-1$ and $\widetilde{\Omega}_{\bm{k}}(\widetilde{\sigma}(2q+2l))=2q'+2l'$.
    
    The first of these relations implies that $\widetilde{\sigma}(2q+2l-1)=\Omega_{\bm{k}}^{-1}(2q'+2l'-1)$. Consequently, if $l'\neq1$, $\widetilde{\sigma}(2q+2l-1)=2q'+2l'-2$. If $l'=1$, $\widetilde{\sigma}(2q+2l-1)=2q'+2a'$. The second of these relations implies that $\widetilde{\sigma}(2q+2l)=\Omega_{\bm{k}}^{-1}(2q'+2l')=2q'+2l'-1$. 
    
    Moreover, we have that $\widetilde{\sigma}(\widetilde{\Omega}_{\bm{k}}(2q+2l-1))=\widetilde{\sigma}(2q+2l)=2q'+2l'-1$, which, given that $\widetilde{\sigma}\widetilde{\Omega}_{\bm{k}}\in H_{N-k_1}$, implies $\widetilde{\sigma}(\widetilde{\Omega}_{\bm{k}}(2q+2l))=2q'+2l'$. If $l\neq a$, $\widetilde{\sigma}(\widetilde{\Omega}_{\bm{k}}(2q+2l))=\widetilde{\sigma}(2q+2l+1)=2q'+2l'$. If $l=a$, $\widetilde{\sigma}(\widetilde{\Omega}_{\bm{k}}(2q+2l))=\widetilde{\sigma}(2q+1)=2q'+2l'$. 
    
    Therefore, $\widetilde{\sigma}$ must have the following structure:
    \begin{equation}
        \begin{array}{cc|c|c|c|c}
            p:&\dots& 2q+2l-1&2q+2l&2q+2l+1&\dots,\\
            \hline
            \widetilde{\sigma}(p):&\dots& 2q'+2l'-2&2q'+2l'-1&2q'+2l'&\dots.
        \end{array}
        \label{eq:sigma_structure_1}
    \end{equation}
    The first and last elements of the cycles can be included in this structure by taking the addition and subtraction modulo $a$ or $a'$.
    
    \textbf{Case 2:} $\widetilde{\Omega}_{\bm{k}}(\widetilde{\sigma}(2q+2l-1))=2q'+2l'$ and $\widetilde{\Omega}_{\bm{k}}(\widetilde{\sigma}(2q+2l))=2q'+2l'-1$.
    
    The first relation implies that $\widetilde{\sigma}(2q+2l-1)=\Omega_{\bm{k}}^{-1}(2q'+2l')=2q'+2l'-1$. The second relation implies that $\widetilde{\sigma}(2q+2l)=\Omega_{\bm{k}}^{-1}(2q'+2l'-1)$. If $l'=1$, $\widetilde{\sigma}(2q+2l)=\Omega_{\bm{k}}^{-1}(2q'+1)=2q'+2a'$, otherwise $\widetilde{\sigma}(2q+2l)=\Omega_{\bm{k}}^{-1}(2q'+2l'-1)=2q'+2l'-2$.   
    
    Moreover, we have that $\widetilde{\sigma}(\widetilde{\Omega}_{\bm{k}}(2q+2l-1))=\widetilde{\sigma}(2q+2l)$. Thus, considering that 
    $\widetilde{\sigma}\widetilde{\Omega}_{\bm{k}}\in H_{N-k_1}$, $\widetilde{\sigma}(\widetilde{\Omega}_{\bm{k}}(2q+2l))=2q'+2a'-1$ for $l'= 1$, and 
    $\widetilde{\sigma}(\widetilde{\Omega}_{\bm{k}}(2q+2l))=2q'+2l'-3$ for $l'\neq 1$. These relations lead to four results: 
    \begin{align}
     \sigma(2q+2l+1)&=2q'+2l'-3,\text{ for }l'\neq 1, l\neq a;\\
     \sigma(2q+1)&=2q'+2l'-3,\text{ for }l'\neq 1, l= a;\\
     \sigma(2q+2l+1)&=2q'+2a'-1,\text{ for }l'= 1, l\neq a;\\
     \sigma(2q+1)&=2q'+2a'-1,\text{ for }l'= 1, l= a.
    \end{align}
    
    Gathering all these results, we find that $\widetilde{\sigma}$ must have the 
    following structure:
    \begin{equation}
        \begin{array}{cc|c|c|c|c} p:&\dots&2q+2l-1&2q+2l&2q+2l+1&\dots,\\
            \hline
            \widetilde{\sigma}(p):&\dots& 2q'+2l'-1&2q'+2l'-2&2q'+2l'-3&\dots,
        \end{array}
        \label{eq:sigma_structure_2}
    \end{equation}
    where the first and last elements of the cycles can be included by taking the addition and subtraction modulo $a$ or $a'$.

    As we can see, the results of case 1 show that the image of $(2q+1,\dots, 2q+2a)$ under $\widetilde{\sigma}$ corresponds to shifting the elements in
    $(2q'+1,\dots, 2q'+2a')$ while maintaining their ascending order. The results of case 2 show that we are also
    allowed to make shifts using a descending order of the elements within the cycles. The value of $l'$
    determines by how much $\widetilde{\sigma}$ shifts the elements in
    $(2q'+1,\dots, 2q'+2a')$, while the value of $q'$ determines which cycle
    corresponds to the image of $(2q+1,\dots, 2q+2a)$.

    It is apparent that we need to have $a=a'$, so that every element in the sequence $(2q+1,\dots, 2q+2a)$ has an image in $(2q'+1,\dots, 2q'+2a')$. We can show that this is in fact a consequence of the relations $\widetilde{\sigma}\widetilde{\Omega}_{\bm{k}},\, \widetilde{\sigma}^{-1}\widetilde{\Omega}_{\bm{k}}\in H_{N-k_1}$ (the second of these is obtained by noticing that $\widetilde{\Omega}_{\bm{k}}\widetilde{\sigma}\in H_{N-k_1}$ implies  
    $\widetilde{\sigma}^{-1}\widetilde{\Omega}_{\bm{k}}^{-1}\in H_{N-k_1}$, which leads to $\widetilde{\sigma}^{-1}\widetilde{\Omega}_{\bm{k}}^{-1}\widetilde{\Omega}_{\bm{k}}^{2}=\widetilde{\sigma}^{-1}\widetilde{\Omega}_{\bm{k}}\in H_{N-k_1}$).
    
    Suppose that $a>a'$, then there must be some $l^*\in\{1,\dots,a\}$ such that $\widetilde{\sigma}(2q+2l^*-1),\, \widetilde{\sigma}(2q+2l^*)\not\in\{2q'+1,\dots, 2q'+2a'\}$, but $\widetilde{\sigma}(2q+2l^*-3),\, \widetilde{\sigma}(2q+2l^*-2)\in\{2q'+1,\dots, 2q'+2a'\}$. Let $\widetilde{\sigma}(2q+2l^*-1)=r$, $\widetilde{\sigma}(2q+2l^*)=r'$. After applying $\widetilde{\Omega}_{\bm{k}}$ and $\widetilde{\sigma}$ over $(2q+1,\dots,2q+2a)$, we have 
    
    \begin{equation}
        \begin{array}{cc|c|c|c|c|c}p:&\dots&2q+2l^*-3&2q+2l^*-2&2q+2l^*-1&2q+2l^*&\dots,\\
            \hline
            \widetilde{\Omega}_{\bm{k}}(p):&\dots& 2q+2l^*-2&2q+2l^*-1&2q+2l^*&2q+2l^*+1 &\dots,\\
            \hline
            \widetilde{\sigma}(\widetilde{\Omega}_{\bm{k}}(p)):&\dots&2q'+2l'+ j &r&r'&\dots&\dots,\\
        \end{array}
    \end{equation}
    where $j\in\{1,\dots,a'\}$. We see that the image of the pair 
    $(2q+2l^*-3,2q+2l^*-2)$ is $(2q'+2l'+j, r)$ under 
    $\widetilde{\sigma}\widetilde{\Omega}_{\bm{k}}$, and this breaks the condition 
    $\widetilde{\sigma}\widetilde{\Omega}_{\bm{k}}\in H_{N-k_1}$ because the
    companion of $2q'+2l'+j$ should be an element of $\{2q'+1,\dots, 2q'+2a'\}$, but $r$ is not. Consequently, we must have $a\leq a'$.

    Suppose now that $a<a'$, then there must exist $m^*\in\{1,\dots,a'\}$
    such that $\widetilde{\sigma}^{-1}(2q'+2m^*-1),\, \widetilde{\sigma}^{-1}(2q'+2m^*)\not\in\{2q+1,\dots, 2q+2a\} $, but $\widetilde{\sigma}^{-1}(2q'+2m^*-3),\, \widetilde{\sigma}^{-1}(2q'+2m^*-2)\in\{2q+1,\dots, 2q+2a\}$.
    Let $\widetilde{\sigma}^{-1}(2q'+2m^*-1)=s$, $\widetilde{\sigma}^{-1}(2q'+2m^*)=s'$. After applying $\widetilde{\Omega}_{\bm{k}}$ and
    $\widetilde{\sigma}^{-1}$ to $(2q'+1,\dots,2q'+2a')$, we obtain  

    \begin{equation}
        \begin{array}{cc|c|c|c|c|c}p:&\dots&2q'+2m^*-3&2q'+2m^*-2&2q'+2m^*-1&2q'+2m^*&\dots,\\
            \hline
            \widetilde{\Omega}_{\bm{k}}(p):&\dots& 2q'+2m^*-2&2q'+2m^*-1&2q'+2m^*&2q'+2m^*+1&\dots,\\
            \hline
            \sigma^{-1}(\widetilde{\Omega}_{\bm{k}}(p)):&\dots&2q+2l+j &s&s'&\dots&\dots,\\
        \end{array}
    \end{equation}
    where $j\in\{1,\dots,a\}$. Just as before, we see that the image under $\widetilde{\sigma}^{-1}\widetilde{\Omega}_{\bm{k}}$ of the pair $(2q'+2m^*-3,2q'+2m^*-2)$ is $(2q+2l+j, s)$, which breaks the condition $\widetilde{\sigma}^{-1}\widetilde{\Omega}_{\bm{k}}\in H_{N-k_1}$ because  $2q+2l+j$ should be paired with an element of $\{2q+1,\dots, 2q+2a\}$, not with $s\not\in \{2q+1,\dots, 2q+2a\}$. We therefore conclude that $a=a'$, i.e., the image under $\widetilde{\sigma}$ of any cycle of length $2a$ in $\widetilde{\Omega}_{\bm{k}}$ must correspond to another cycle of length $2a$ in $\widetilde{\Omega}_{\bm{k}}$.

    Having found the structure of $\widetilde{\sigma}$, we can easily count how many of them there are.

    \begin{proposition}
        \label{prop:how_many_sigma_p1}
        There are $\prod_{a=2}^N(2a)^{k_a}k_a!$ possible $\widetilde{\sigma}$.
    \end{proposition}
    \begin{proof}
    As we just showed, the image under $\widetilde{\sigma}$ of the elements within a cycle of length $2a$ in $\widetilde{\Omega}_{\bm{k}}$ corresponds to a shift of the elements within another cycle of length $2a$ in $\widetilde{\Omega}_{\bm{k}}$. For all of these cycles, there are $a$ possible values of the shift (because there are $a$ possible values of $l'$ in Eqs.~\eqref{eq:sigma_structure_1},~\eqref{eq:sigma_structure_2}). Additionally, we can make each shift with the elements within the cycles having either an ascending or descending order (as shown also in Eqs.~\eqref{eq:sigma_structure_1},~\eqref{eq:sigma_structure_2}). Since there are $k_a$ cycles of length $2a$ in $\widetilde{\Omega}_{\bm{k}}$, we have a total of $(2a)^{k_a}$ ways of choosing the shift and order of all of them. Moreover, as mentioned below Eq.~\eqref{eq:sigma_structure_2}, $q'$ determines the image of a given cycle. The number of possible $q'$ is equal to the number of cycles of length $2a$, namely $k_a$, which means that there are $k_a!$ ways of choosing the images of all these cycles. This implies that there are a total of $(2a)^{k_a}k_a!$ ways of choosing the image under $\widetilde{\sigma}$ of all the cycles of length $2a$ in $\widetilde{\Omega}_{\bm{k}}$. Since $a$ is arbitrary, and we are only considering cycles of length greater than two, we conclude that there are $\prod_{a=2}^N(2a)^{k_a}k_a!$ possible ways of choosing the image of $\widetilde{\sigma}$.
    \end{proof}

    Having found the number of possible $\widetilde{\sigma}$ and their structure, we can focus on $\bar{\sigma}$. This will also allow us to discuss possible modifications to $B$. 
    As shown in Eqs.~\eqref{eq:sigma_octahedral_permutation_3} and~\eqref{eq:sigma_octahedral_permutation_4}, once we choose 
    $\widetilde{\sigma}$ for $B=\{2k_1+1,\dots,2N\}$, we have complete freedom in
    the choice of $\bar{\sigma}\in S_{2k_1}$. This is because $\bar{\sigma}$
    plays no role in the computation of $\beta$. Indeed, we needed
    $\widetilde{\sigma}$ to ``break'' all the connected components of length greater
    than two in $\bm{\Gamma}(\Omega_{\bm{k}})$, while we ``hid'' the action of
    $\bar{\sigma}$ with our choice of $B$. We can now fix $\widetilde{\sigma}$, so
    that $\sigma=\bar{\sigma}\oplus\widetilde{\sigma}$ keeps breaking the connected
    components of interest, and see if we can ``reveal'' some of the structure 
    of $\bar{\sigma}$ while keeping $\beta\in H_{2N}$.

    If we extend $B$ to include a pair of points $(2p-1,2p)$, with $p\in\{1,\dots,k_1\}$, computing $\beta$ amounts to calculating $\bar{\Omega}_{\bm{k}}(\bar{\sigma}(2p-1))$, $\bar{\Omega}_{\bm{k}}(\bar{\sigma}(2p))$ and $\bar{\Omega}_{\bm{k}}(\bar{\sigma}^{-1}(2p-1))$, $\bar{\Omega}_{\bm{k}}(\bar{\sigma}^{-1}(2p))$. Since $\bar{\Omega}_{\bm{k}}$ consists only of transpositions, in order to 
    have $\beta\in H_{2N}$ we need the image of $(2p-1,2p)$ under $\bar{\sigma}$ to be of the form $(2q-1,2q)$ or $(2q,2q-1)$, with $q\in\{1,\dots,k_1\}$. Similarly, the image of $(2p-1,2p)$ under $\bar{\sigma}^{-1}$ must be of the 
    form $(2q'-1,2q')$ or $(2q',2q'-1)$, with $q'\in\{1,\dots,k_1\}$. This means we can extend $B$ to include $(2p-1,2p)$ as long as this pair of
    points leads to a connected component of length two in $\bm{\Gamma}(\bar{\sigma})$. Notice that extending $B$ to include $(2p-1,2p)$ may imply 
    $\sigma(B)\neq B,\, \sigma^{-1}(B)\neq B$. However, since the pair 
    $(2p-1,2p)$ leads to a connected component of length two in $\bm{\Gamma}(\bar{\sigma})$, we can guarantee that it will also lead to connected components of length two in 
    $\bm{\Gamma}(\bar{\Omega}_{\bm{k}}\bar{\sigma}),\,\bm{\Gamma}(\bar{\Omega}_{\bm{k}}\bar{\sigma}^{-1})$, and thus $\bm{\Gamma}(\Omega_{\bm{k}}\sigma),\,\bm{\Gamma}(\Omega_{\bm{k}}\sigma^{-1})$, making $\beta$ satisfy the conditions to be an element of $H_{2N}$. Consequently, given an specific $\bar{\sigma}$, any modification of $B$ will consist in appending pairs of the form $\{(2p-1,2p)\}$ that correspond to connected components of length two in $\bm{\Gamma}(\bar{\sigma})$. Notice that the choice of including a given pair is independent of the choice of including any other pair.

    The exact number of possible modifications of $B$, for given $\bar{\sigma}$ and $\widetilde{\sigma}$, will depend on the number of connected components of 
    length two in $\bm{\Gamma}(\bar{\sigma})$. Let $\eta(\bar{\sigma})$ be the
    coset-type of $\bar{\sigma}$, and let $\eta^{(1)}(\bar{\sigma})$ be the
    number of ones in $\eta(\bar{\sigma})$, which is equivalent to the number of
    connected components of length two in $\bm{\Gamma}(\bar{\sigma})$. Each of
    these connected components will correspond to a pair $(2p-1,2p)$ with $p\in\{1,\dots,k_1\}$. For each pair we can decide whether to include it in $B$ or
    not. This means that we can modify $B$ in $2^{\eta^{(1)}(\bar{\sigma})}$
    ways. This result allows us to state the following propositions:

    \begin{proposition}
    \label{prop:counting_result_same_coset}
        $\left\vert(\Omega_{\bm{k}}^{-1}\oplus \Omega_{\bm{k}}^{-1}) H_{2N}\cap S_\sigma^\star\right\vert=2^{\eta^{(1)}(\bar{\sigma})}$ if $\sigma=\bar{\sigma}\oplus\widetilde{\sigma}$, and vanishes otherwise. Here, $\bar{\sigma}\in S_{2k_1}$ acts over $\{1,\dots,2k_1\}$, and $\eta^{(1)}(\bar{\sigma})$ is the number of ones in the coset-type of $\bar{\sigma}$. $\widetilde{\sigma}\in S_{2N-2k_1}$ acts over $\{2k_1+1,\dots,2N\}$ and has the form indicated by Eqs.~\eqref{eq:sigma_structure_1} or~\eqref{eq:sigma_structure_2}. 
    \end{proposition}

    \begin{proof}
        We have seen that in order for $\tau_B\in S_{\sigma}^\star$, with $B\subseteq\{1,\dots,2N\}$, to also be an element of $(\Omega_{\bm{k}}^{-1}\oplus \Omega_{\bm{k}}^{-1}) H_{2N}$, Eqs.~\eqref{eq:sigma_octahedral_permutation_1} and~\eqref{eq:sigma_octahedral_permutation_2}, later transformed into Eqs.~\eqref{eq:sigma_octahedral_permutation_3} and~\eqref{eq:sigma_octahedral_permutation_4}, must hold. By setting $B=\{2k_1+1,\dots,2N\}$, we found that these equations are satisfied if $\sigma=\bar{\sigma}\oplus\widetilde{\sigma}$, with $\bar{\sigma}$ acting over $B^{\mathrm{c}}$ and $\widetilde{\sigma}$ over $B$. Moreover, $\widetilde{\sigma}$ must be defined by Eqs.~\eqref{eq:sigma_structure_1} or~\eqref{eq:sigma_structure_2}. Fixing $\widetilde{\sigma}$, we can modify $B$ in order to include any number of pairs of the form $(2p-1,2p)\in\{1,\dots,2N\}$ that correspond to connected components of length two in $\bm{\Gamma}(\bar{\sigma})$. The total number of this type of connected components is equal to $\eta^{(1)}(\bar{\sigma})$, the number of ones in the coset-type of $\bar{\sigma}$. For each pair we can choose whether to include it in $B$ or not, which leads to a total of $2^{\eta^{(1)}(\bar{\sigma})}$ possible modifications to $B$. Therefore, there are $2^{\eta^{(1)}(\bar{\sigma})}$ possible $\tau_B$ for a given $\sigma=\bar{\sigma}\oplus\widetilde{\sigma}$.  
    \end{proof}

    \begin{proposition}
        \label{prop:sum_over_sigma}
        \begin{equation}
            \sum_{\sigma\in S_{2N}}\left\vert(\Omega_{\bm{k}}^{-1}\oplus \Omega_{\bm{k}}^{-1}) H_{2N}\cap S_\sigma^\star\right\vert=\prod_{a=2}^N(2a)^{k_a}k_a!\sum_{\bar{\sigma}\in S_{2k_1}}2^{\eta^{(1)}(\bar{\sigma})}.
            \label{eq:sum_over_sigma}
        \end{equation}
    \end{proposition}
    \begin{proof}
        The sum over all $\sigma\in S_{2N}$ reduces to a sum over permutations of the form $\sigma=\bar{\sigma}\oplus\widetilde{\sigma}$ because the summand vanishes otherwise. For a given $\bar{\sigma}$ there are always $\prod_{a=2}^N(2a)^{k_a}k_a!$ possible $\widetilde{\sigma}$, so we only need to sum over $\bar{\sigma}$.
    \end{proof}

    Now that we have computed $\left\vert(\Omega_{\bm{k}}^{-1}\oplus \Omega_{\bm{l}}^{-1}) H_{2N}\cap S_\sigma^\star\right\vert$ as a function of $\bm{k}$, $\bm{l}$ and $\sigma$, we are almost ready to attack Eq.~\eqref{eq:counting_part}. Combining all the results we have obtained so far, we can see that
    \begin{align}
        &\sum_{\bm{k}^{(c)}, \bm{l}^{(c)}}\,\prod_{a=1}^N\frac{1}{k_a!l_a!(2a)^{k_a+l_a}}\sum_{\sigma\in S_{2N}}b_{2N}\left(\bm{j}\oplus\sigma(\bm{j}),\Omega_{\bm{k}}(\bm{j})\oplus\Omega_{\bm{l}}\circ\sigma(\bm{j})\right)\nonumber\\
        &=\sum_{\bm{k}^{(c)}, \bm{l}^{(c)}}\,\prod_{a=1}^N\frac{1}{k_a!l_a!(2a)^{k_a+l_a}}\sum_{\sigma\in S_{2N}}\left\vert(\Omega_{\bm{k}}^{-1}\oplus \Omega_{\bm{l}}^{-1}) H_{2N}\cap S_\sigma^\star\right\vert\nonumber\\
        &=\sum_{\bm{k}^{(c)}}\left(\prod_{a=1}^N\frac{1}{k_a!(2a)^{k_a}}\right)^2\sum_{\sigma\in S_{2N}}\left\vert(\Omega_{\bm{k}}^{-1}\oplus \Omega_{\bm{k}}^{-1}) H_{2N}\cap S_\sigma^\star\right\vert\nonumber\\
        &=\sum_{\bm{k}^{(c)}}\left(\prod_{a=1}^N\frac{1}{k_a!(2a)^{k_a}}\right)^2\prod_{a=2}^Nk_a!(2a)^{k_a}\sum_{\bar{\sigma}\in S_{2k_1}}2^{\eta^{(1)}(\bar{\sigma})}\nonumber\\
        &=\sum_{\bm{k}^{(c)}}\left(\prod_{a=2}^N\frac{1}{k_a!(2a)^{k_a}}\right)\left(\frac{1}{ 2^{k_1}k_1!}\right)^{2}\sum_{\bar{\sigma}\in S_{2k_1}}2^{\eta^{(1)}(\bar{\sigma})}.
        \label{eq:counting_part_semi_final_form}
    \end{align}
    In order to complete our proof, we need two additional results.
    \begin{proposition}
        \label{prop:sum_gen_function}
        Let $S_{2n}$ be the symmetric group of degree $2n$, and $\eta^{(1)}(\sigma)$ stand for the number of ones in the coset-type of $\sigma\in S_{2n}$. Moreover, let $\alpha$ be a real parameter satisfying $|\alpha|<1$. Then,
        \begin{equation}
            \sum_{\sigma\in S_{2n}}2^{\eta^{(1)}(\sigma)}=2^{2n}n!\left.\frac{d^n}{d\alpha^n}\left[e^{\alpha/2}(1-\alpha)^{-1/2}\right]\right|_{\alpha=0}.
            \label{eq:sum_gen_function}
        \end{equation}
    \end{proposition}
    \begin{proof}
        Let $H_{\eta}$ denote the set of permutations $\sigma \in S_{2n}$ that have coset-type $\eta$. Since every $\sigma$ has a unique coset-type, it will belong to only one $H_{\eta}$. This allows us to write the sum over $\sigma\in S_{2n}$ in Eq.~\eqref{eq:sum_gen_function} as
        \begin{equation}
            \sum_{\sigma\in S_{2n}}2^{\eta^{(1)}(\sigma)}=\sum_{\eta\vdash n}|H_\eta|\,2^{\eta^{(1)}},    
        \end{equation}
        where $\eta\vdash n$ indicates that the sum is taken over all the integer partitions of $n$ (recall that the coset-type of a permutation in $S_{2n}$ is an  integer partition of $n$), and $\eta^{(1)}$ is the number of ones in $\eta$. 
        
        Given a partition $\eta$ of $n$, we can construct a vector of non-negative integers $\bm{k}=(k_1,\dots,k_n)$ whose components satisfy $k_1+2k_2+\cdots+nk_n=n$. Indeed, we need only define $k_a$ as the multiplicity of $a$ in $\eta$, i.e., $k_a$ is the number of times $a$ appears in $\eta$. This implies that
        \begin{equation}
            \sum_{\sigma\in S_{2n}}2^{\eta^{(1)}(\sigma)}=\sum_{\eta\vdash n}|H_\eta|\,2^{\eta^{(1)}}=\sum_{\bm{k}^{(c)}}|H_{\bm{k}}|\,2^{k_1},   
        \end{equation}
        where we $\eta^{(1)}=k_1$ has been used, and we write $\bm{k}^{(c)}$ to indicate that the components of $\bm{k}$ are constrained. 
        
        According to Refs.~\cite{matsumoto2012general, macdonald1998symmetric}, we have
        \begin{equation}
            |H_\eta|=|H_{\bm{k}}|=(2^n n!)^2\prod_{a=1}^n\frac{1}{k_a!(2a)^{k_a}},    
        \end{equation}
        which leads to the expression
        \begin{equation}
            \sum_{\sigma\in S_{2n}}2^{\eta^{(1)}(\sigma)}=(2^nn!)^2\sum_{\bm{k}^{(c)}}\left(\prod_{a=1}^n\frac{1}{k_a!(2a)^{k_a}}\right)2^{k_1}=(2^nn!)Z_n\left(1,\frac{1}{2},\cdots,\frac{1}{2}\right),    
        \end{equation}
        where we recalled the definition of the cycle index of $S_{n}$ (Eq.~\eqref{eq:cycle_index} in the main text):
        \begin{equation}
            Z_n(y_1,\dots,y_n)=\sum_{\bm{k}^{(c)}}\,\prod_{a=1}^n\frac{1}{k_a!a^{k_a}}\prod_{a=1}^ny_a^{k_a}.
        \end{equation}

        Using the generating function of $Z_n(y_1,\dots,y_n)$
        \begin{equation}
            \exp\left[\sum_{l=1}^\infty y_l\frac{\alpha^l}{l}\right]=\sum_{n=0}^\infty Z_n(y_1,\dots,y_n)\alpha^n,   
        \end{equation}
        we can write
        \begin{equation}
            \exp\left[\alpha + \frac{1}{2}\sum_{l=2}^\infty\frac{\alpha^l}{l}\right]=e^{\alpha/2}\exp\left[\frac{1}{2}\sum_{l=1}^\infty\frac{\alpha^l}{l}\right]=\sum_{n=0}^\infty Z_n\left(1,\frac{1}{2},\dots,\frac{1}{2}\right)\alpha^n.    
        \end{equation}
        Assuming that $|\alpha|<1$, we can write $\sum_{l=1}^\infty \alpha^l/l=-\log(1-\alpha)$, and obtain
        \begin{equation}
            e^{\alpha/2}\exp\left[-\frac{1}{2}\log(1-\alpha)\right]=e^{\alpha/2}(1-\alpha)^{-1/2}=\sum_{n=0}^\infty Z_n\left(1,\frac{1}{2},\dots,\frac{1}{2}\right)\alpha^n.    
        \end{equation}
        Eq.~\eqref{eq:sum_gen_function} follows from repeatedly differentiating this expression with respect to $\alpha$, and then evaluating at $\alpha=0$.
    \end{proof}

    \begin{proposition}
        \label{prop:sum_reorganization}
        Let $x$ be a real parameter. Then,
        \begin{equation}
            \sum_{\bm{k}^{(c)}}\left(\prod_{a=1}^n\frac{1}{k_a!(2a)^{k_a}}\right)2^{k_1}x^{k_1}=\frac{1}{n!}\sum_{m=0}^n\binom{n}{m}x^m g_{n-m},
            \label{eq:sum_reorganization}
        \end{equation}
        where $g_m$ is defined by the relation
        \begin{equation}
            e^{-\beta/2}(1-\beta)^{-1/2}=\sum_{m=0}^{\infty}\frac{g_m}{m!}\beta^m,\quad g_m = (-1/2)^m\,{}_2F_0(1/2,-m;;2),
            \label{eq:useful_expansion_1}
        \end{equation}
        with $|\beta|<1$ and ${}_2F_0(a,b;;z)$ the hypergeometric function ${}_2F_0$.
    \end{proposition}
    \begin{proof}
        We can readily notice that
        \begin{equation}
            \sum_{\bm{k}^{(c)}}\left(\prod_{a=1}^n\frac{1}{k_a!(2a)^{k_a}}\right)2^{k_1}x^{k_1}=Z_n\left(x,\frac{1}{2},\dots,\frac{1}{2}\right),   
        \end{equation}
        and thus we can use the generating function of $Z_n$ to find an new expression for the left-hand side of Eq.~\eqref{eq:sum_reorganization}. We have
        \begin{align}
            \exp\left[x\beta + \frac{1}{2}\sum_{l=2}^\infty \frac{\beta^l}{l}\right] = e^{x\beta}e^{-\beta/2}\exp\left[\frac{1}{2}\sum_{l=1}^\infty \frac{\beta^l}{l}\right]=\sum_{n=0}^\infty Z_n\left(x,\frac{1}{2},\dots,\frac{1}{2}\right)\beta^n. 
        \end{align}
        Using the equality $\sum_{l=1}^\infty \beta^l/l=-\log(1-\beta)$ 
        for $|\beta|<1$, we can recast this expression as
        \begin{equation}
            e^{x\beta}e^{-\beta/2}(1-\beta)^{-1/2}=\sum_{n=0}^\infty Z_n\left(x,\frac{1}{2},\dots,\frac{1}{2}\right)\beta^n.    
        \end{equation}
        Consequently, 
        \begin{equation}
            \sum_{\bm{k}^{(c)}}\left(\prod_{a=1}^n\frac{1}{k_a!(2a)^{k_a}}\right)2^{k_1}x^{k_1}=Z_n\left(x,\frac{1}{2},\dots,\frac{1}{2}\right)=\frac{1}{n!}\left.\frac{d^n}{d\beta^n}\left[e^{x\beta}e^{-\beta/2}(1-\beta)^{-1/2}\right]\right|_{\beta=0}.    
        \end{equation}

        Now, let us expand $e^{x\beta}$ and $e^{-\beta/2}(1-\beta)^{-1/2}$ as
        \begin{equation}
            e^{x\beta}=\sum_{m=0}^\infty \frac{x^m}{n!}\beta^m\,,\quad e^{-\beta/2}(1-\beta)^{-1/2}=\sum_{m=0}^\infty\frac{g_m}{m!}\beta^m.
            \label{eq:bashypergometric}
        \end{equation}
        These expressions allow us to write
        \begin{equation}
            e^{x\beta}e^{-\beta/2}(1-\beta)^{-1/2}=\left(\sum_{m=0}^\infty \frac{x^m}{n!}\beta^m\right)\left(\sum_{p=0}^\infty\frac{g_p}{p!}\beta^p\right)=\sum_{n=0}^\infty\frac{1}{n!}\left(\sum_{m=0}^n\binom{n}{m}x^m g_{n-m}\right)\beta^n,    
        \end{equation}
        where we used Leibniz formula for the product of two series~\cite{comtet1974advanced,wilf2005generatingfunctionology} to obtain the last equality. Repeated differentiation with respect to $\beta$, and evaluation at $\beta=0$, leads to Eq.~\eqref{eq:sum_reorganization}.        
    \end{proof}

    Using Proposition~\ref{prop:sum_gen_function}, we can recast the last line in Eq.~\eqref{eq:counting_part_semi_final_form} as 
    \begin{equation}
        \sum_{\bm{k}^{(c)}}\left(\prod_{a=2}^N\frac{1}{k_a!(2a)^{k_a}}\right)\left(\frac{1}{ 2^{k_1}k_1!}\right)^{2}\sum_{\bar{\sigma}\in S_{2k_1}}2^{\eta^{(1)}(\bar{\sigma})}=\sum_{\bm{k}^{(c)}}\left(\prod_{a=1}^N\frac{1}{k_a!(2a)^{k_a}}\right)2^{k_1}\left.\frac{d^{k_1}}{d\alpha^{k_1}}\left[e^{\alpha/2}(1-\alpha)^{-1/2}\right]\right|_{\alpha=0}.
        \label{eq:counting_part_final_form}
    \end{equation}
    This expression leads to the following result:
    \begin{proposition}
        \label{prop:counting_part_final_result}
        \begin{equation}
            \sum_{\bm{k}^{(c)}}\left(\prod_{a=1}^N\frac{1}{k_a!(2a)^{k_a}}\right)2^{k_1}\left.\frac{d^{k_1}}{d\alpha^{k_1}}\left[e^{\alpha/2}(1-\alpha)^{-1/2}\right]\right|_{\alpha=0}=1.
            \label{eq:counting_part_final_result}
        \end{equation}
    \end{proposition}
    \begin{proof}
        Let us define the differential operator
        \begin{equation}
            \mathcal{D}_{\alpha}=\sum_{\bm{k}^{(c)}}\left(\prod_{a=1}^N\frac{1}{k_a!(2a)^{k_a}}\right)2^{k_1}\frac{d^{k_1}}{d\alpha^{k_1}},    
        \end{equation}
        which allows us to write
        \begin{equation}
            \sum_{\bm{k}^{(c)}}\left(\prod_{a=1}^N\frac{1}{k_a!(2a)^{k_a}}\right)2^{k_1}\left.\frac{d^{k_1}}{d\alpha^{k_1}}\left[e^{\alpha/2}(1-\alpha)^{-1/2}\right]\right|_{\alpha=0}=\left.\mathcal{D}_\alpha\left[e^{\alpha/2}(1-\alpha)^{-1/2}\right]\right|_{\alpha=0}.    
        \end{equation}
        Using Proposition~\ref{prop:sum_reorganization}, we can reorganize the sum that defines $\mathcal{D}_{\alpha}$ to obtain
        \begin{equation}
            \mathcal{D}_{\alpha}=\sum_{\bm{k}^{(c)}}\left(\prod_{a=1}^N\frac{1}{k_a!(2a)^{k_a}}\right)2^{k_1}\frac{d^{k_1}}{d\alpha^{k_1}}=\frac{1}{N!}\sum_{m=0}^N\binom{N}{m} g_{N-m}\frac{d^m}{d \alpha^m}.
        \end{equation}
        Expanding $e^{\alpha/2}(1-\alpha)^{-1/2}$ as
        \begin{equation}
            e^{\alpha/2}(1-\alpha)^{-1/2}=\sum_{p=0}^\infty\frac{a_p}{p!}\alpha^p\;;\quad a_p=\frac{(2p-1)!!}{2^p}{}_1F_1\left(-p;\frac{1}{2}-p;\frac{1}{2}\right),    
        \end{equation}
        where ${}_1F_1(a;b;c)$ denotes the confluent hypergeometric function of the first kind, we have
        \begin{equation}
            \left.\mathcal{D}_\alpha\left[e^{\alpha/2}(1-\alpha)^{-1/2}\right]\right|_{\alpha=0} = \frac{1}{N!}\sum_{m=0}^N\binom{N}{m} g_{N-m}\left.\frac{d^m}{d \alpha^m}\left(\sum_{p=0}^\infty\frac{a_p}{p!}\alpha^p\right)\right|_{\alpha=0}= \frac{1}{N!}\sum_{m=0}^N\binom{N}{m} a_m g_{N-m}.    
        \end{equation}
        Now, notice that
        \begin{equation}
            \left[e^{\alpha/2}(1-\alpha)^{-1/2}\right]\left[e^{-\alpha/2}(1-\alpha)^{-1/2}\right]=\left(\sum_{p=0}^\infty\frac{a_p}{p!}\alpha^p\right)\left(\sum_{m=0}^\infty \frac{g_m}{n!}\alpha^m\right)=\sum_{n=0}^\infty\frac{1}{n!}\left(\sum_{m=0}^n\binom{n}{m}a_m g_{n-m}\right)\alpha^n=(1-\alpha)^{-1},    
        \end{equation}
        which, combined with the fact that $(1-\alpha)^{-1}=\sum_{n=0}^\infty\alpha^n$ for $|\alpha|<1$, allows us to conclude that
        \begin{equation}
            \left.\mathcal{D}_\alpha\left[e^{\alpha/2}(1-\alpha)^{-1/2}\right]\right|_{\alpha=0} = \frac{1}{N!}\sum_{m=0}^N\binom{N}{m} a_m g_{N-m}=1    
        \end{equation}
        for all $N$.
    \end{proof}
    This concludes our proof of Eq.~\eqref{eq:counting_part}.

\end{document}